\documentclass[12pt]{article}
\usepackage{lipsum}
\usepackage{setspace} 
\usepackage{amsmath, amsfonts, amssymb, amsthm, graphicx, rotating, newtxmath, mathrsfs}
\usepackage{lscape}
\usepackage{booktabs, siunitx}
\usepackage[normalem]{ulem}
\usepackage[colorlinks=true, linkcolor=blue, citecolor=blue, urlcolor=blue]{hyperref}
\usepackage{bbm}
\usepackage[english]{babel}
\usepackage[autostyle]{csquotes}
\usepackage{dsfont}
\usepackage[svgnames,table]{xcolor}
\usepackage[tableposition=above]{caption}
\usepackage{multirow}
\usepackage{makecell}
\usepackage[style=apa, backend=biber, citestyle=authoryear, sorting=nyt, doi=false]{biblatex}
\DeclareLanguageMapping{american}{american-apa}
\usepackage{csquotes}
\usepackage{array}
\usepackage{graphicx}
\usepackage{float}

\newtheorem*{theorem*}{Theorem}
\numberwithin{equation}{section}
\newtheorem{theorem}{Theorem}[section]
\newtheorem{corollary}{Corollary}[section]
\newtheorem{proposition}{Proposition}[section]
\newtheorem{lemma}{Lemma}[section]
\newtheorem*{lemma3}{Lemma 3}
\newtheorem{remark}{Remark}
\theoremstyle{definition} 
\newtheorem{definition}{Definition}[section]
\newtheorem*{definition*}{Definition}
\newtheorem*{proposition*}{Proposition}

\usepackage{mathtools}  
\usepackage{diffcoeff}
\usepackage{algorithm}
\usepackage{algpseudocode}
\usepackage{enumitem}
\usepackage[margin=1in]{geometry}
\usepackage{fancyvrb}
\usepackage{tikz} 
\usetikzlibrary{shapes.geometric, arrows, arrows.meta, shadows, shadows.blur, positioning, calc, decorations.pathmorphing}
\usepackage{stackengine}
\usepackage{float}
\usepackage{placeins}
\usepackage{afterpage}
\usepackage{graphicx}
\usepackage{wrapfig}
\usepackage{ragged2e}
\usepackage{mdframed}
\usepackage{accents}
\usepackage{chngcntr} 
\counterwithin{figure}{section} 
\usepackage[x11names]{xcolor}
\usepackage{appendix} 
\begin{document}
\title{Topological Learning under Sparsity through Huberized Distance Functionals}
\author{\small 
	Tuhin Subhra Mahato \\
	\small IIT Kanpur\\
	\small Department of Statistics and Data Science \\
	\small  Kanpur 208016, India\\
	{\small Email: tuhinsm23@iitk.ac.in }\\
	\and
	\small Subhra Sankar Dhar \\
	\small  IIT Kanpur\\
	\small   Department of Statistics and Data Science \\
	\small Kanpur 208016, India\\
	{\small Email: subhra@iitk.ac.in}\\
}
\date{}
\maketitle
 \begin{center}
     \textbf{Abstract} 
 \end{center} 
Learning topological structure from sparsely contaminated data requires robust statistical methods, as even a small fraction of atypical observations can substantially distort the inferred topology. This article introduces a robust framework for topological learning 
through the Huberized Distance-to-Measure, which provides a robust filtration-generating functional for persistent homology.
We establish large-sample properties of the proposed functional, and show that the associated filtration induces
stable and consistent persistence diagrams. We further derive a maximum bias bound for persistence diagrams in bottleneck distance, thereby providing a quantitative measure of the sparseness associated with topological inference.
The practical utility of the methodology thorough testing of statistical hypothesis problem is illustrated for a seven-circle packing configuration under sparse contamination, where it recovers the underlying topological structure. Its applicability to real data is further demonstrated through an analysis of the M-ferritin protein structure, where the method identifies a persistent $H_2$ feature corresponding to the iron-storage cavity.

\newpage

\section{Introduction}
Recovering the geometric and topological structure of data from noisy finite samples is a fundamental problem in modern statistical learning. Traditional nonparametric methods primarily focus on estimating characteristics of the underlying distribution,
such as densities or regression functions.
In contrast, topological data analysis (TDA) provides a framework for inferring the shape of data
from finite samples in metric spaces. Rather than describing data in terms of distributional functionals, TDA characterizes the topology of the support of the underlying distribution;
see \textcite{carlsson2009topology, edelsbrunner2010computational}. In notation, 
let $\mathbb{P}$ be a probability distribution with compact support $\mathcal{S}_{\mathbb{P}}\subset \mathbb{R}^{\ell}~ (\ell \geq 1)$. The shape of $\mathcal{S}_{\mathbb{P}}$ can be described through its homology, an algebraic invariant which quantifies topological features such as connected components, loops, and higher-dimensional voids. Since $\mathcal{S}_{\mathbb{P}}$ is observed only through a finite sample from $\mathbb{P}$, one instead studies these features using persistent homology, a multiscale extension of homology, which
tracks how these topological features evolve as the scale parameter changes.
The resulting multiscale topological information
is
summarized by the persistence diagram, a multiset of points in $\bar{\mathbb{R}}^2$, where each point records the birth and death scales of a topological feature.
The construction of persistent homology relies on a filtration, which is a nested family of topological spaces indexed by a scale parameter. 
Typically, this filtration is 
generated by the sublevel sets $ \{x:f(x)\leq t\}_{t\in \mathbb{R}}$ associated with a real-valued function $f$ defined on the ambient space. 
Consequently, the resulting persistence diagram is determined by the choice of the filtration-generating function.
Statistical properties of persistence diagrams therefore depend on the stability and robustness of the underlying filtration-generating function.

From a statistical perspective, persistence diagrams are estimated from finite samples and are therefore subject to sampling variability, measurement error, and contamination.
As these perturbations propagate through the underlying filtration, the reliability of topological inference depends on the stability and robustness of the filtration-generating function.
 \subsection{The Distance-to-Measure and the Challenge of Sparsity}
A prominent example of such a filtration-generating function is the distance-to-measure (DTM), introduced by \textcite{chazal2011geometric} 
as a stable alternative to distance-to-set functions in the sense of \textcite{chazal2006sampling}. Since its introduction, DTM has become a cornerstone of topological inference. More precisely, given a probability measure $\mathbb{P}$ on a metric space, the DTM 
defines a nonnegative Lipschitz function whose sublevel sets induce a filtration.
Under some regularity conditions, the associated persistence diagram recovers the homological structure of the support of $\mathbb{P}$ while remaining stable under perturbations of $\mathbb{P}$.
The statistical and computational aspects of DTM-based inference have been extensively studied in \textcite{chazal2018robust}, \textcite{anai2020dtm}, and subsequent works.
Despite its stability,
the empirical DTM remains vulnerable to sparse contamination, where a small fraction of observations lie far from the support of the true distribution. In practice, the empirical DTM at a given point $x$ is computed from average squared distances to its 
$k$-nearest neighbours. 
Since these averages are computed without trimming or downweighting extreme distances, even a small fraction of outlying observations within nearest neighbourhoods can substantially bias the estimate.
This can distort the underlying filtration, and lead to unreliable topological inference such as loss of genuine homological features or creation of spurious ones.
This observation motivates the development of robust distance-based filtration-generating functionals for topological inference.

\subsection{Related Work}
Robust estimation under contamination has been widely studied in statistics.
The pioneering work of \textcite{huber1964robust} introduced the contamination model and proposed robust M-estimators based on bounded influence score functions. Subsequent developments by \textcite{huber2009robust} and \textcite{catoni2012challenging} provided a comprehensive asymptotic theory and finite-sample bounds for robust estimators under contamination. 
A guiding principle of robust statistics is that a small number of atypical observations should not substantially change an estimate.
Under sparse contamination, classical DTM lacks this bounded-influence property.
On the topological side, statistical inference for geometric and topological structure has been extensively studied over the last two decades. Foundational results on homology inference and support recovery were established by \textcite{niyogi2008finding} and \textcite{balakrishnan2012minimax}. Statistical inference for persistence diagrams was further developed through 
confidence sets (\textcite{fasy2014confidence}).
Recent advances in the statistical analysis of persistence diagrams and other topological summaries include \textcite{divol2019choice}, together with testing procedures for homological structures (\textcite{kumar2026testing}), and methods for identifying significant topological patterns in regression curve (\textcite{kumar2026novel}). Broader treatments of statistical TDA are given by \textcite{wasserman2018topological}.

Despite these developments, robust statistical analysis at the level of the filtration-generating function has received comparatively less attention in TDA. Existing robustness results in TDA primarily focus on the stability of topological summaries once the filtration has been specified.  \textcite{cohen2005stability} established the foundational stability theorem for persistence diagrams.
For DTM-based methods, \textcite{buchet2015topological} studied the stability of DTM-based filtrations under Hausdorff-type perturbations. Statistical methodologies based on persistence summaries and vectorized representations, including persistence landscapes (\textcite{bubenik2015statistical}), persistence images (\textcite{adams2017persistence}), and kernel methods for persistence diagrams (\textcite{kusano2016persistence}; \textcite{carriere2017sliced}); have also received considerable attention; see \textcite{ali2023survey} for a recent survey. \textcite{phillips2015geometric} studied the kernel-based topological inference under contamination, while \textcite{adler2014crackle} investigated the homological structure generated by noise. \textcite{bobrowski2017maximally} analyzed random Cech complexes under uncontaminated sampling. More recently, a few robust methodologies for topological inference have been proposed; see, for example, \textcite{vishwanath2022robust}, \textcite{zhang23bb}, and \textcite{mahato2025robustpersistenthomology}. \textcite{bobrowski2023cluster} addressed outlier sensitivity through cluster persistence in weighted graphs. To the best of our knowledge, robustness is typically addressed either at the level of the persistence diagram or through geometric stability arguments after the filtration has been constructed. But something that seems to have received either very little or almost no attention in the existing literature is a systematic study of sparse contamination at the level of the filtration-generating function itself, together with its statistical and topological consequences.

To address sparse contamination, we propose a robust framework for filtration-based topological inference.
The main idea is to generalize the classical DTM by replacing its quadratic distance aggregation
with a Huber-type robust M-estimator. The resulting construction is referred to as the Huberized Distance-to-Measure (Huberized DTM), defines a distance-based filtration whose sublevel sets are suitable for persistent homology. This proposed framework induces a robustness-efficiency tradeoff through a tuning parameter, with the classical DTM is found as a limiting case.

\subsection{Objectives and Contributions}
The objective of this work is to develop a robust framework for topological inference in presence of sparse contamination. Instead of treating persistence diagrams as the primary inferential objects, we focus on robust estimation of the underlying filtration-generating function from which persistent homology is constructed. Since persistence diagrams inherit their geometric and statistical properties from the underlying filtration, stability theorems translate perturbations of the filtration function into perturbations of the associated persistence diagrams. Thus, statistical consistency and robustness of the filtration function carry over to the persistence diagrams through the stability theory of persistent homology. Robust estimation of the filtration function thus constitutes the key to robust topological inference under contamination.
The proposed approach therefore addresses the effect of sparse contamination 
at the level of the distance functional, before it propagates to the topological summary.

The contributions of this work are as follows. First, we formulate topological inference under sparse contamination as a problem of robust estimation of distance functionals that generate filtrations. This formulation places robustness at the level of filtration function from which persistent homology is constructed. Second, we introduce the Huberized DTM, a robust distance functional obtained by incorporating the Huber loss into the DTM construction. The resulting estimator has bounded influence, 
which attenuates the effect of sparse contamination
while preserving the multiscale geometric structure underlying persistent homology. We further establish that the Huberized DTM is well-defined and Lipschitz continuous on the metric space $(\mathcal{M},d)$, providing the regularity required for the induced filtration and its subsequent topological analysis. Third, we establish uniform consistency of the empirical Huberized DTM over the ambient space $\mathcal{M}$,
and show that this consistency carries over to the induced persistence diagrams in bottleneck distance.
Together these results ensure reliable topological inference from finite samples.
We also study robustness under Huber's contamination model through maximum bias analysis. The maximum bias of the Huberized DTM is of order $\mathcal{O}\left(\frac{\alpha}{1-\alpha} \right)$ in the contamination proportion $\alpha\in (0, \frac{1}{2})$, uniformly over all contaminating distributions. For the induced persistence diagrams, the maximum bias in bottleneck distance consists of a functional component, inherited from the Huberized DTM and a geometric component 
arising from the perturbation of the contaminated support relative to the signal support.
Under a geometric sparsity condition on the contamination, both components are of order $\mathcal{O}\left(\frac{\alpha}{1-\alpha} \right)$. These results provide a rigorous quantitative understanding of how sparse contamination affects both the filtration function and the resulting topological summaries.

As a part of our contributions, we analyze a molecular data to demonstrate the practical utility of the proposed Huberized DTM framework. In the course of this study, it is observe that the proposed framework isolates the biologically meaningful iron-storage cavity as a dominant $H_2$ feature. For details, see \hyperref[section 7]{\textcolor{blue}{Section \ref{section 7}}}.

\subsection{Challenges}
Robust topological inference under sparse contamination involves several interconnected statistical, geometric, and topological challenges. A key statistical challenge is to establish uniform convergence of the filtration-generating distance functional over the entire metric space $\mathcal{M}$, as pointwise convergence does not ensure a reliable approximation of the induced filtration for topological inference. For the Huberized DTM, this analysis requires a function class indexed jointly by the 
evaluation point $x \in \mathcal{M}$ and its associated robust parameter $\theta \in \mathbb{R}$ within the Huber estimating equation (see, \hyperref[section 3.2]{\textcolor{blue} Section \ref{section 3.2}}), while accounting for the quantile-based trimming inherent in the DTM construction. 
Additionally, the distance functional must satisfy the regularity conditions needed for the induced filtration to be well defined and amenable to topological analysis.
On the geometric side, contamination may change the support on which the filtration is constructed. As a result, its effect on the persistence diagrams arises through both the perturbations of the distance functional as well as the perturbations of the support. Addressing these challenges requires combining ideas from robust statistics, empirical-process theory, and metric geometry with the stability theory of persistence diagrams (\textcite{cohen2005stability}). From a computational perspective, persistent homology becomes challenging as sample size (denoted by $n$) increase. Storing the pairwise distance matrix requires $\mathcal{O}(n^2)$ memory, while computing the $k$-the homology group $H_k$ requires constructing simplices up to dimension $(k+1)$, whose number increases as $\mathcal{O}(n^{k+2})$. Consequently, both the complex construction and the persistent homology computation become expensive as the sample size $n$ and/or homology dimension $k$ increases.


\subsection{Organization} 
The remainder of the article is organized as follows. \hyperref[section 2]{\textcolor{blue}{Section \ref{section 2}}} introduces the distance-to-measure framework, the contamination model, and the standing assumptions. \hyperref[section 3]{\textcolor{blue}{Section \ref{section 3}}} presents the proposed Huberized DTM framework and the associated filtration for topological inference. The large-sample properties of the proposed Huberized DTM including statistical consistency and the corresponding stability and consistency results for the associated persistence diagrams are established in \hyperref[section 4]{\textcolor{blue}{Section \ref{section 4}}}. \hyperref[section 5]{\textcolor{blue}{Section \ref{section 5}}} studies the robustness of the proposed Huberized DTM under sparse contamination, including maximum bias bounds and topological robustness of the associated persistence diagrams. The finite-sample performance of the proposed method is evaluated through simulation studies in \hyperref[section 6]{\textcolor{blue}{Section \ref{section 6}}}, while \hyperref[section 7]{\textcolor{blue}{Section \ref{section 7}}} illustrates the practical relevance of our approach by applying it to a real-world dataset related to cellular biology. \hyperref[section 8]{\textcolor{blue}{Section \ref{section 8}}} concludes with a discussion of the main findings, limitations, and directions for future research. Technical proofs together with some standing notation and definitions are provided in \hyperref[Appendix]{\textcolor{blue}{Appendix}}.

\subsection{Notation}
We summarize the notation used throughout the article. $(\mathcal{M},d)$ denotes a compact metric space, and $\mathscr{P}(\mathcal M)$ denotes the collection of Borel probability measures on $\mathcal{M}$. For $\mathbb{P} \in \mathscr{P}(\mathcal M)$, we write $\mathcal{S}$ for its support. For any nonempty set $\mathcal{A} \subseteq \mathcal{M}$, its diameter is defined by
$\text{diam}(\mathcal{A}):= \sup_{x,y\in \mathcal{A}} d(x,y)$. For $x \in \mathcal{M}$ and $\varepsilon>0$, $B_\varepsilon(x):= \{y \in \mathcal{M}: d(x, y)< \varepsilon \}$ denotes the open ball of radius $ \varepsilon $ centered at $x$, and $\mathbbm{1}_{\mathcal{A}}$ denotes the indicator function of a measurable set $\mathcal{A}$. For two sets, $A,B\subseteq \mathcal{M}$, their symmetric difference is denoted by $A \triangle B := (A \cap B^c) \cup (A^c \cap B)$.
For a real-valued function $f:\mathcal{M} \rightarrow \mathbb{R}$, its supremum norm is defined by $\|f\|_\infty:=\sup_{x\in \mathcal{M}}|f(x)|$. The notation $C^1(\mathbb{R})$ denotes the class of continuously differentiable real-valued functions on $\mathbb{R}$. For two sequences $\{a_n\}_{n \geq 1}$ and $\{b_n\}_{n \geq 1}$,  we write $a_n=\mathcal{O}(b_n)$ if there exists constant $R>0$ such that $|\frac{a_n}{b_n}|<R$, for all $n\ge N_0(R)\in \mathbb{N}$. For a sequence of random variables $\{X_n\}_{n\geq 1}$ and and a random variable $X$, $X_n\xrightarrow{P}X$ denotes convergence in probability; that is, for every $\varepsilon>0$, $Pr(d(X_n,X)>\varepsilon)\rightarrow 0$ as $n\rightarrow \infty$. Furthermore, for a sequence of positive constants
$\{a_n\}_{n \geq 1}$, $X_n=\mathcal{O}_{P}(a_n)$ indicates that for every $\varepsilon>0$, there exist constants $R_{\varepsilon}>0$ and $N_{\varepsilon}\in \mathbb{N}$ such that $Pr\left(\left|\frac{X_n}{a_n}\right|\geq R_\varepsilon\right)<\varepsilon$ for all $n \geq N_{\varepsilon}$. For $X\sim \mathbb{P}$ and $x\in \mathcal{M}$, let $\mathbb{S}_x(X):=[d(x,X)]^2$, with distribution function $F_x$ and quantile function $F_x^{-1}$. The mass parameter is denoted by $m\in(0,1)$. Under Huber's contamination model, $\mathbb{P}_s$, $\mathbb{P}_c$, and $\mathbb{P}_\alpha=(1-\alpha)~\mathbb{P}_s+\alpha~ \mathbb{P}_c$ denote the signal, contaminating, and contaminated distributions, respectively, where $\alpha \in[0,1/2)$ is the contamination proportion. The persistence diagram in homological dimension $p$ is denoted by $\mathrm{Dgm}_p(\cdot)$, and $d_B$ denotes the bottleneck distance between persistence diagrams.

\section{Mathematical Framework and Preliminaries}\label{section 2}
This section introduces the mathematical framework for robust topological learning under sparse contamination through distance functionals. The metric and probabilistic framework is described in \hyperref[section 2.1]{\textcolor{blue}{Section \ref{section 2.1}}}, followed by filtrations and persistent homology in \hyperref[section 2.2]{\textcolor{blue}{Section \ref{section 2.2}}}. \hyperref[section 2.3]{\textcolor{blue}{Section \ref{section 2.3}}} introduces the population distance-to-measure. The sampling and contamination model is described in \hyperref[section 2.4]{\textcolor{blue}{Section \ref{section 2.4}}}, while the empirical distance-to-measure construction 
in \hyperref[section 2.5]{\textcolor{blue}{Section \ref{section 2.5}}}.
\subsection{Metric and Probabilistic Framework} \label{section 2.1}
Let $\mathcal{M}\subset \mathbb{R}^{\ell}$ $(l\geq 1)$ be a compact metric space equipped with a distance function $d:\mathcal{M}\times\mathcal{M}\rightarrow [0,\infty)$, and let its diameter be $\text{diam}(\mathcal{M})= \sup_{x,y\in \mathcal{M}} d(x,y) := D <\infty$. 
For a Borel probability measure $\mathbb{P}$ on $\mathcal{M}$, let $B_\varepsilon(x):= \{y \in \mathcal{M}: d(x, y)< \varepsilon \}$ denotes the open ball of radius $ \varepsilon $ centered at $x$, and let $\mathcal{S}_{\mathbb{P}}$
= $\{x \in \mathcal{M}: \mathbb{P}(B_\varepsilon(x))>0, ~ \text{for all}~ \varepsilon>0\}$ denote the support of $\mathbb{P}$. To ensure geometric regularity of the support and to facilitate stable topological inference,
 we restrict attention to the probability measures satisfying the following condition.
 
\textbf{$(a,b)$-standard measure:}
A Borel probability measure $\mathbb{P}$ on $\mathcal{M}$ is said to be an $(a,b)$-standard measure,
and belongs to the class $\mathscr{P}_{a,b}(\mathcal{M})$, if there exist constants $a, b > 0$, and $\varepsilon_0 \in (0,1)$, such that for all $x \in \mathcal{S}_{\mathbb{P}}$ and for all $0 < \varepsilon < \varepsilon_0$,
\begin{equation}\label{eq:2.1}
    \mathbb{P}(B_\varepsilon(x)) \geq \min(1, a \varepsilon^b).
\end{equation}
It is further assumed that $\mathcal{S}_{\mathbb{P}}$ is compact and has positive reach 
in the sense of \textcite{federer1959curvature} (details on reach are provided in \hyperref[Appendix]{\textcolor{blue}{Appendix}}). This condition ensures that $\mathcal{S}_{\mathbb{P}}$ is geometrically well behaved for topological inference by precluding self-intersections and sharp cusps.

Moreover, the uniform lower mass-bound condition imposed by assumption (A2)
prevents the probability measure from being arbitrarily sparse on its support. It underlies uniform convergence in support estimation and the stability of topological descriptors, including those arising in persistent homology. Such assumptions are standard in nonparametric support estimation and geometric inference. See, for instance, the works of  \textcite{tsybakov1997nonparametric}, \textcite{cuevas2007nonparametric}, \textcite{genovese2012minimax}, \textcite{chazal2011geometric}, \textcite{aamari2019estimating}.


The construction of both the classical DTM and the proposed Huberized DTM are based on the quantiles of the distance distribution.
Accordingly, for $x\in \mathcal{M}$, $\mathbb{P}\in \mathscr{P}_{a,b}(\mathcal{M})$ and $m\in(0,1)$, define the $m$-th quantile of $\mathbb{P}$ as
\begin{equation}\label{eq:2.2}
    q_{m}(x):=q_{\mathbb{P},m}(x)=\inf\left\{\varepsilon \geq 0 : \mathbb{P}(B_\varepsilon(x)) \geq m\right\}.
\end{equation}
 Moreover, observe that as $\mathcal{M}$ is compact with diameter $D$, we have $q_{m}(x)\leq D$ for all $x\in \mathcal{M}$ and $m\in (0,1)$. Further, for a fixed $m\in (0,1)$ satisfying $m< a\varepsilon_0^b$, 
 choosing $\varepsilon = \left(\frac{m}{a}\right)^{\frac{1}{b}}\leq \varepsilon_0$ in condition \eqref{eq:2.1} yields $\mathbb{P}\left(B_{\left(m/a\right)^{\frac{1}{b}}}(x)\right) \geq m$, for all $x\in \mathcal{S}_{\mathbb{P}}$, and hence, $q_{m}(x) \leq \left(\frac{m}{a}\right)^{\frac{1}{b}}$. Consequently, the $m$-th distance quantile is uniformly bounded over $\mathcal{S}_{\mathbb{P}}$ by $\left(\frac{m}{a}\right)^{\frac{1}{b}}$.

\subsection{Filtrations and Persistent Homology}\label{section 2.2}
Topological inference in our framework is carried out through filtrations generated by distance-based functionals.
We briefly recall the notions of filtrations, persistent homology, and persistence diagrams, which provide the mathematical connection between estimation of the filtration-generating function and multiscale topological summaries.
For comprehensive treatments, one can see \textcite{edelsbrunner2002topological, zomorodian2004computing}, \textcite{  chazal2016structure}.
\paragraph{Filtration:}
A filtration of $\mathbb{R}^{\ell}$ is a nested family of subsets $\{\mathcal{X}_t\}_{t\in \mathbb{R}}$, and ~$\mathcal{X}_t\subseteq \mathcal{M}$ satisfying $\mathcal{X}_s\subseteq \mathcal{X}_t$ for all $s\leq t$. Filtrations are typically generated through the sublevel sets of a real-valued function $f: \mathbb{R}^{\ell} \rightarrow \mathbb{R}$, defined by $\mathcal{X}_t(f)=\{x\in \mathbb{R}^{\ell}:f(x)\leq t\}$, $t \in \mathbb{R}$. Thus, once $f$ is specified, the filtration and every persistence summary derived from it are determined. In many applications, different estimators of $f$ can lead to substantially different topological conclusions under contamination. Therefore, from a statistical point of view, the construction of $f$ becomes an inferential problem. Since persistence diagrams are induced by the filtration generated by $f$, any instability or contamination in the estimator of $f$ can directly affect the resulting topological summaries. In this sense, the statistical problem of topological learning reduces to the estimation of a robust filtration-generating distance functional.

\paragraph{Persistent Homology:} Persistent homology provides a mathematical framework for quantifying the evolution of the topology across the filtration $\{\mathcal{X}_t(f) \}_{t\in \mathbb{R}}$.
As the filtration parameter $t$ increases, the topology of the sublevel sets $\{\mathcal{X}_t(f) \}_{t\in \mathbb{R}}$ evolves across scales through the birth and death of connected components, loops, voids, and higher-dimensional topological features.
Persistent homology captures this evolution through the homology groups $\mathbb{H}_p \left(\mathcal{X}_t(f), \mathbb{F} \right)$, where $\mathbb{F}$ is a coefficient field and $p$ denotes the homological dimension. For each $t$, the homology group $\mathbb{H}_p\left(\mathcal{X}_t(f), \mathbb{F} \right)$ encode $p$-dimensional topological features present at scale $t$. Since the filtration is nested, the inclusions $\mathcal{X}_s(f) \hookrightarrow \mathcal{X}_t(f)$ for $s\leq t$ induce linear maps 
\begin{equation*}   \phi_{s,t}:\mathbb{H}_p\left(\mathcal{X}_s(f), \mathbb{F} \right) \rightarrow \mathbb{H}_p\left(\mathcal{X}_t(f), \mathbb{F} \right),
\end{equation*}
which track the evolution of topological features across filtration scales. A $p$-dimensional topological feature $ x_{b,d}^{(p)} \in \mathbb{H}_p\left(\mathcal{X}_b(f), \mathbb{F}\right)$ is said to born at scale $b$ if $x_{b,d}^{(p)} \notin \mathbb{H}_p\left(\mathcal{X}_{b-\delta}(f), \mathbb{F}\right)$ for any $\delta >0$, and said to die at scale $d>b$ if $x_{b,d}^{(p)} \in \mathbb{H}_p\left(\mathcal{X}_{d-\delta}(f), \mathbb{F}\right)$ for all $\delta >0$ satisfying $b\leq d-\delta_2$ and $x_{b,d}^{(p)} \notin \mathbb{H}_p\left(\mathcal{X}_{d}(f), \mathbb{F}\right)$. A feature will also die if it merges with a feature born earlier in the filtration. The quantity $(d-b)>0$ is the persistence of the feature, which measures the range of scales over which it survives.

\paragraph{Persistence Diagram:} The $p$-dimensional persistence diagram associated with a filtration-generating function $f$ is the multiset $\mathrm{Dgm}_p(f)=\left\{(b_{i}, d_{i}):b_{i}< d_{i}\right\} \subset \bar{\mathbb{R}}^2$, augmented by the diagonal $\Delta = \big\{(u,u): u\in \mathbb{R} \big\}$ with infinite multiplicity, where each point $(b_{i}, d_{i})$ corresponds to the birth and death scales of the $i$-th $p$-dimensional feature.

\noindent The persistence diagram provides a multiscale summary of the homological structure generated by the filtration $\{\mathcal{X}_t(f) \}_{t\in \mathbb{R}}$. Consequently, when the filtration-generating function is estimated from data, the resulting persistence diagram becomes a random topological summary induced by the underlying probability distribution. To ensure that persistence diagrams to be well-defined, $f$ is assumed to be tame, that is,
the homology groups $\mathbb{H}_p\left(\mathcal{X}_t(f), \mathbb{F} \right)$ are finite-dimensional for all $t$ and $p$, and may change only finitely many times as $t$ varies. The bottleneck distance between two persistence diagrams $\mathrm{Dgm}_{p}(f)$ and $\mathrm{Dgm}_{p}(g)$ associated with tame functions $f,g:\mathbb{R}^{\ell} \rightarrow \mathbb{R}$, is defined as 
\begin{equation*}
W_{\infty}\Big(\mathrm{Dgm}_{p}(f),\mathrm{Dgm}_{p}(g)\Big)=\inf_{\gamma_{p}} \sup_{x\in \mathrm{Dgm}_{p}(f)} \|x-\gamma_{p}(x)\|_{\infty} ,
\end{equation*}
where $\gamma_{p}:\mathrm{Dgm}_{p}(f) \rightarrow ~\mathrm{Dgm}_{p}(g)$ is a bijection. The relationship between perturbations of $f$ and perturbations of $\mathrm{Dgm}_{p}(f)$ is formalized by the following stability theorem, which underpins the topological inference results developed in this article.

\begin{theorem}[Stability of Persistence Diagrams; \textcite{cohen2005stability}]\label{theorem 2.1}
Let $f,g:\mathbb{R}^{\ell} \rightarrow \mathbb{R}$ be tame functions. Then for all homological dimension
$p\geq 0$,
\begin{equation}
    W_{\infty}\Big(\mathrm{Dgm}_{p}(f),\mathrm{Dgm}_{p}(g)\Big)\leq \|f-g\|_{\infty}.
\end{equation}
\end{theorem}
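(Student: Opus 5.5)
The plan is to follow the standard argument of \textcite{cohen2005stability}, which reduces the statement to two ingredients: an interleaving of sublevel-set filtrations, and the fact that interleaved persistence modules have diagrams matched within the interleaving constant.

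\textbf{Step 1: interleave the sublevel sets.} Set $\varepsilon := \|f-g\|_\infty$, and assume it is finite (otherwise there is nothing to prove). If $f(x)\le t$, then $g(x)\le f(x)+\varepsilon \le t+\varepsilon$, so $\mathcal{X}_t(f)\subseteq \mathcal{X}_{t+\varepsilon}(g)$. Symmetrically, $\mathcal{X}_t(g)\subseteq \mathcal{X}_{t+\varepsilon}(f)$. Applying $\mathbb{H}_p(\cdot,\mathbb{F})$ gives linear maps
\[
\mathbb{H}_p(\mathcal{X}_t(f))\to \mathbb{H}_p(\mathcal{X}_{t+\varepsilon}(g)), \qquad \mathbb{H}_p(\mathcal{X}_t(g))\to \mathbb{H}_p(\mathcal{X}_{t+\varepsilon}(f)).
\]
Functoriality of homology under compositions of inclusions shows that these maps commute with the structure maps $\phi_{s,t}$. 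Their composites equal $\phi_{t,t+2\varepsilon}$ for each function. So the two persistence modules are $\varepsilon$-interleaved.

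\textbf{Step 2: a box lemma on ranks.} Tameness makes every rank $\beta^{s,t}_f := \operatorname{rank}\phi_{s,t}$ finite. Because $\phi^f_{s-\varepsilon,t+\varepsilon}$ factors through $\phi^g_{s,t}$ via the interleaving maps,
\[
\beta_f^{s-\varepsilon,t+\varepsilon}\le \beta_g^{s,t}\le \beta_f^{s+\varepsilon,t-\varepsilon}.
\]
Multiplicities of diagram points in a box $[a,b]\times[c,d]$ are alternating sums of such ranks at the box corners. Combining these alternating sums with the rank inequalities gives the box lemma: the number of points of $\mathrm{Dgm}_p(g)$ in a box is at most the number of points of $\mathrm{Dgm}_p(f)$ in the box enlarged by $\varepsilon$. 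The same holds with $f$ and $g$ exchanged, and also for quadrants touching the diagonal.

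\textbf{Step 3: build the matching.} This is the main obstacle. Tameness means each diagram has finitely many off-diagonal points, with a positive minimal separation $\delta$ between distinct critical values. First I would handle the case $\varepsilon<\delta/2$. There the $\varepsilon$-boxes around distinct off-diagonal points of $\mathrm{Dgm}_p(f)$ are disjoint, and the box lemma applied in both directions shows that each box contains exactly the same multiplicity of points from $\mathrm{Dgm}_p(g)$. Points of $\mathrm{Dgm}_p(g)$ lying outside all boxes must lie within $\varepsilon$ of the diagonal, so they can be matched to the diagonal. This produces a bijection moving points by at most $\varepsilon$ in $\|\cdot\|_\infty$.

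For general $\varepsilon$, I would use the convex interpolation $h_\lambda=(1-\lambda)f+\lambda g$. For $\lambda,\lambda'$ close enough, $\|h_\lambda-h_{\lambda'}\|_\infty=|\lambda-\lambda'|\varepsilon$ is below half the separation of the relevant diagrams. A compactness argument on $[0,1]$ then yields a finite chain of such small steps, and the triangle inequality for $W_\infty$ sums the step costs to at most $\varepsilon$.

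The delicate point is that the intermediate functions $h_\lambda$ must remain tame. I would state this as an assumption, as in \textcite{cohen2005stability}. Alternatively, one could bypass it by using the algebraic stability theorem for $\varepsilon$-interleaved, pointwise finite-dimensional modules, which needs only Step 1.
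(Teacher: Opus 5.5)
The paper gives no proof of this statement. It quotes it from \textcite{cohen2005stability}, and in the proofs of Theorems~\ref{Theorem 4.2} and~\ref{Theorem 4.4} it uses only the algebraic form $d_B\le d_I$ from \textcite{chazal2009proximity}. Your outline follows the original Cohen-Steiner--Edelsbrunner--Harer route, and Step~1 is fine, but Step~2 fails as written.

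The box lemma cannot be obtained by combining the corner formula $\mu_f\big((a,b]\times(c,d]\big)=\beta_f^{b,c}-\beta_f^{a,c}-\beta_f^{b,d}+\beta_f^{a,d}$ with your rank inequalities. The two negative terms would need lower bounds on $\beta_f$ at those same corners. The interleaving only controls ranks at corners shifted by $\pm\varepsilon$, and monotonicity of $\beta$ points the wrong way.

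This is not cosmetic. Take $\varepsilon=1$, $\mathrm{Dgm}_p(f)=\{(1,20)\}$ and $\mathrm{Dgm}_p(g)=\{(0,3),(2,20)\}$, so that $\beta_f^{s,t}=\mathbbm{1}_{\{s\ge 1,\,t<20\}}$ and $\beta_g^{s,t}=\mathbbm{1}_{\{s\ge 0,\,t<3\}}+\mathbbm{1}_{\{s\ge 2,\,t<20\}}$.
\begin{itemize}
\item A direct check gives $\beta_f^{s-1,t+1}\le\beta_g^{s,t}$ for all $s\le t$, and $\beta_g^{s,t}\le\beta_f^{s+1,t-1}$ whenever $t-s\ge 2$. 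The versions with $f$ and $g$ exchanged are the same inequalities reindexed.
\item Yet the box $[-1,1]\times[2,4]$, which shrinks by $\varepsilon$ to the point $(0,3)\in\mathrm{Dgm}_p(g)$, contains no point of $\mathrm{Dgm}_p(f)$.
\item Moreover $W_\infty=3/2>1$, even though the critical values of $f$ are $19$ apart.
\end{itemize}
So no argument using only rank inequalities can give the box lemma. For the same reason, it cannot give your Step~3 claim that points of $\mathrm{Dgm}_p(g)$ outside all boxes lie within $\varepsilon$ of the diagonal.

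The missing idea is to use the interleaving maps themselves, not just their ranks. For $a<b$, $c<d$ with $b+\varepsilon\le c-\varepsilon$, Step~1 yields a single chain of inclusions $\mathcal{X}_{a-\varepsilon}(g)\subseteq\mathcal{X}_{a}(f)\subseteq\mathcal{X}_{b}(f)\subseteq\mathcal{X}_{b+\varepsilon}(g)\subseteq\mathcal{X}_{c-\varepsilon}(g)\subseteq\mathcal{X}_{c}(f)\subseteq\mathcal{X}_{d}(f)\subseteq\mathcal{X}_{d+\varepsilon}(g)$.
\begin{itemize}
\item Decompose its $\mathbb{H}_p$ into interval summands.
\item A summand counted by $\mu_f\big((a,b]\times(c,d]\big)$ must run exactly from the $\mathcal{X}_b(f)$ slot to the $\mathcal{X}_c(f)$ slot.
\item Every such summand is therefore also counted by $\mu_g\big((a-\varepsilon,b+\varepsilon]\times(c-\varepsilon,d+\varepsilon]\big)$. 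This is the genuine box lemma.
\end{itemize}
With it, your small-$\varepsilon$ bijection goes through. If $\delta$ is the minimal gap between critical values, you need $\varepsilon<\delta/4$ rather than $\delta/2$, so that the $2\varepsilon$-boxes used in the reverse direction stay off the diagonal.

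The interpolation step still needs tame $h_\lambda$. Assuming this proves less than the statement, so an extra approximation argument is required, e.g.\ by piecewise-linear functions, whose convex combinations stay tame.

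Your fallback avoids both problems. Step~1 plus the algebraic stability theorem for $\varepsilon$-interleaved, pointwise finite-dimensional modules is a complete proof, and it is exactly the tool the paper itself relies on.
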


As an immediate consequence of \hyperref[theorem 2.1]{\textcolor{blue}{Theorem \ref{theorem 2.1}}}, if any estimator $\widehat{f_n}$ satisfies $\|\widehat{f_n}-f\|_{\infty}\leq \zeta_n$ with probability at least $(1-\eta)$, where $\eta\in (0, 1)$, then $W_{\infty}\Big(\mathrm{Dgm}_{p}(\widehat{f_n}),\mathrm{Dgm}_{p}(f)\Big)\leq \zeta_n$ with the same probability. In other words, a bound on the sup-norm difference between the estimated and population filtration-generating function provides an identical bound on the bottleneck distance between the corresponding persistence diagrams.
Thus, robust topological inference reduces to robust estimation of the underlying filtration-generating distance functional.
\subsubsection{Assumptions}\label{Assumptions}
 The following regularity conditions are imposed on the probability measure $\mathbb{P}$, and the ambient metric space $(\mathcal{M},d)$ in order to establish the technical results.
 \begin{itemize}
\item[(A1)] \textbf{Geometric regularity:} The space $(\mathcal{M},d)$ is a compact metric space. Furthermore, the support $\mathcal{S}_{\mathbb{P}}\subseteq \mathcal{M}$ of the measure $\mathbb{P}$ is compact with reach($\mathcal{S}_{\mathbb{P}}$) $\geq \rho >0$. 
\label{(A1)}
\begin{remark}
The compactness assumption prevents the support from being unbounded; otherwise, data could be spread out infinitely, making it difficult to infer meaningful topological features. 
The positive reach condition ensures the support has a regular geometric structure with bounded curvature and excludes geometric degeneracies such as cusps and self-intersections.
\end{remark}

\item[(A2)] \textbf{$(a,b)$-standard measure:} The probability measure $\mathbb{P}$ is an $(a,b)$-standard measure (for details see, \eqref{eq:2.1} and related discussions).
\label{(A2)}

\begin{remark}
     This assumption provides a uniform lower bound on the local mass around any point in the support, which enforces a uniform non-degeneracy of mass at small scales. This assumption is standard and widely adopted in the theory of support estimation.
\end{remark}

\item[(A3)] \textbf{Absolute continuity with bounded density:} The probability measure $\mathbb{P}$ is absolutely continuous with respect to the Lebesgue measure $\lambda_{\ell}$ on $\mathbb{R}^{\ell}$ with 
density $f$ satisfying $0< f_{\inf} \leq f(x) \leq f_{\sup} <\infty $ for all $x \in \mathcal{S}_{\mathbb{P}}$.
\begin{remark}
These density bounds ensure that the underlying distribution is neither too sparse nor too concentrated on its support. This condition is essential for the consistency and stability results developed later.
\end{remark}
\item[(A4)] \textbf{Uniform density bounds for the squared-distance distribution:} 
$\mathbb{P}\circ\mathbb{S}^{-1}_{x}$ is absolutely continuous with respect to Lebesgue measure $\lambda$ on $\mathbb{R}_+$ with density
$h_{\mathbb{S}_x}:=\frac{d\mu_x}{d\lambda}$ satisfying $0<h_{\inf}\leq h_{\mathbb{S}_x}\leq h_{\sup}<\infty$, $\lambda$-almost everywhere on $\mathcal{S}_{\mu_x} \subseteq [0, (\text{diam}(\mathcal{M}))^2]$, where $h_{\inf},~h_{\sup}$ are constants independent of $x$. Here, for each $x \in \mathcal{M}$, $\mathbb{P}\circ\mathbb{S}^{-1}_{x}$ is the law of $\mathbb{S}_x(X):=[d(x,X)]^2,~ X\sim \mathbb{P}$ with distribution function $F_x(t):=\mathbb{P}\circ\mathbb{S}^{-1}_{x}((-\infty,t])$.
\begin{remark}
This assumption provides uniform regularity of the squared-distance distribution, which underlies the continuity, stability, and robustness theory developed throughout the article.
\end{remark}
\end{itemize}

\subsection{Population Distance-to-measure} \label{section 2.3}
The distance-to-measure (DTM) of \textcite{chazal2011geometric} is a filtration-generating distance functional that replaces the distance-to-support with a distributionally averaged distance. Since the proposed Huberized DTM is constructed as a robust modification of the classical DTM, we first recall its population formulation together with the regularity properties that underpin the subsequent theoretical developments.

Here, note that under Assumption (A4), $\mathbb{S}_x(X)$ admits a density. Consequently, 
$F_x$ is continuous and $F_x(F_{x}^{-1}(m))=m$, where $F_{x}^{-1}(u):=\inf\{t\geq 0:F_x(t)\geq u\}$, $u\in(0,1)$ is the quantile function. Using this fact, the population distance-to-measure $\delta_{\mathbb{P},m}:\mathcal{M} \rightarrow [0,\infty)$ with parameter $m \in (0,1)$ is defined as,
\begin{equation}\label{eq:2.4}
    \delta_{\mathbb{P},m}(x)=\left(\frac{1}{m}\int\limits_{0}^{m}F_x^{-1}(u)~du\right)^{\frac{1}{2}}.
\end{equation}

\noindent For each $ t\geq 0$, define the sublevel-sets
  $\mathcal{M}_t=\left\{x \in \mathcal{M}:\delta_{\mathbb{P},m}(x) \leq t \right\}$.
Since $\mathcal{M}_{t_i}\subseteq \mathcal{M}_{t_j}$ whenever $t_i\leq t_j$, the collection $\{\mathcal{M}_t\}_{t\geq0}$ forms a filtration of $\mathcal{M}$.  Under some regularity conditions on $\mathbb{P}$ (see \cite{anai2020dtm}), the associated persistent homology recovers the homological structure of $\mathcal{S}_{\mathbb{P}}$. Now, we study a few regularity conditions, which enables us to investigate the theoretical properties of Huberized DTM.
\begin{proposition}\label{proposition 2.2}
  Let $\mathcal{M}$ be a compact metric space with diameter $D<\infty$, and define $\mathbb{S}_x(X):=[d(x,X)]^2,~ X \sim \mathbb{P}$. Then
  for all $x,y\in\mathcal{M}$,
\begin{equation*}
\mathbb{P}\bigl(|\mathbb{S}_x(X)-\mathbb{S}_y(X)| \leq 2D~d(x,y)\bigr)=1.
\end{equation*}
In particular, the process $\{\mathbb{S}_x(X):x\in \mathcal{M}\}$ has $2D$- Lipschitz sample paths with respect to $d$.
\end{proposition}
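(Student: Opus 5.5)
The argument is deterministic: I will show that for \emph{every} realization $X(\omega)\in\mathcal{M}$ and every pair $x,y\in\mathcal{M}$, the inequality $|\mathbb{S}_x(X)-\mathbb{S}_y(X)|\leq 2D\,d(x,y)$ holds. The probability statement then follows trivially, since the event in question is the whole sample space. The key identity is the difference-of-squares factorization
\begin{equation*}
\mathbb{S}_x(X)-\mathbb{S}_y(X)=\bigl(d(x,X)-d(y,X)\bigr)\bigl(d(x,X)+d(y,X)\bigr),
\end{equation*}
which splits the problem into bounding each factor separately.

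For the first factor, I will use the reverse triangle inequality in the metric space $(\mathcal{M},d)$: $|d(x,X)-d(y,X)|\leq d(x,y)$. For the second factor, both $x,y$ and $X$ lie in $\mathcal{M}$, so by the definition of the diameter $D=\sup_{u,v\in\mathcal{M}}d(u,v)$ we have $d(x,X)\leq D$ and $d(y,X)\leq D$. Hence $0\leq d(x,X)+d(y,X)\leq 2D$. Multiplying the two bounds gives $|\mathbb{S}_x(X)-\mathbb{S}_y(X)|\leq 2D\,d(x,y)$ pointwise in $\omega$.

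Two small points remain. First, measurability of the event is automatic: $\mathbb{S}_x$ is a continuous function of $X$, and in any case the event equals the entire space. Second, the ``in particular'' clause is immediate: for each fixed $\omega$, the map $x\mapsto\mathbb{S}_x(X(\omega))$ satisfies the Lipschitz bound for all $x,y$ simultaneously, because the bound was proven for every $\omega$ rather than merely almost surely for each fixed pair. This point deserves explicit mention, since an almost-sure statement for each pair would not by itself give Lipschitz sample paths over uncountably many pairs.

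There is no substantive obstacle in this argument. The only care needed is the observation just made, namely that the bound holds surely rather than almost surely for each fixed pair.
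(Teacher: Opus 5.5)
Your proposal is correct and follows essentially the same route as the paper: the difference-of-squares factorization, the triangle inequality for $|d(x,X)-d(y,X)|\leq d(x,y)$, and the diameter bound $d(x,X)+d(y,X)\leq 2D$. Your explicit remark that the bound holds for every realization, not merely almost surely for each fixed pair, is what justifies the ``Lipschitz sample paths'' conclusion; the paper leaves this implicit.
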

\noindent \textbf{Proof:} See the \hyperref[Proof of proposition 2.2]{\textcolor{blue}{Proof of proposition 2.1}} in \hyperref[Appendix A]{\textcolor{blue} Appendix A}.

\hfill $\blacksquare$

\begin{proposition}\label{proposition 2.3}
 Let $\mathcal{M}$ be a compact metric space with diam($\mathcal{M}$)$=D<\infty$. Then, for every $m\in(0,1)$, and for all $x,y\in\mathcal{M}$,
 \begin{equation*}
     |F_x^{-1}(m)-F_y^{-1}(m)| \leq 2D~d(x,y).
 \end{equation*}
 In particular, the map $x\in \mathcal{M}$ is $2D$-Lipschitz on $(\mathcal{M},d)$, uniformly over $m\in (0,1)$.
\end{proposition}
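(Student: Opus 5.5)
The plan is to deduce the statement directly from Proposition \ref{proposition 2.2}, using the elementary fact that quantiles are monotone under almost-sure pointwise domination. Fix $x,y\in\mathcal{M}$ and write $\eta:=2D\,d(x,y)$.

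By Proposition \ref{proposition 2.2}, with probability one
\begin{equation*}
\mathbb{S}_y(X)-\eta\leq \mathbb{S}_x(X)\leq \mathbb{S}_y(X)+\eta .
\end{equation*}
The first step converts this into a comparison of distribution functions. For every $t$, the event $\{\mathbb{S}_y(X)\leq t-\eta\}$ is contained, up to a null set, in $\{\mathbb{S}_x(X)\leq t\}$. Hence
\begin{equation*}
F_y(t-\eta)\leq F_x(t)\leq F_y(t+\eta)\quad\text{for all } t\in\mathbb{R}.
\end{equation*}

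The second step passes to quantiles through the definition $F_x^{-1}(m)=\inf\{t\geq 0:F_x(t)\geq m\}$. Suppose $t$ is such that $F_y(t)\geq m$. Then $F_x(t+\eta)\geq F_y(t)\geq m$, so $t+\eta$ belongs to the set defining $F_x^{-1}(m)$. Taking the infimum over all such $t$ gives $F_x^{-1}(m)\leq F_y^{-1}(m)+\eta$.

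There is one subtlety at the boundary $t\geq0$ in the infimum. Since $\mathbb{S}_y\geq 0$, any admissible $t$ may be taken nonnegative, and then $t+\eta\geq 0$ as well. This causes no issue, and no continuity of $F_x$ is required, so Assumption (A4) is not needed.

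Exchanging the roles of $x$ and $y$ gives the reverse inequality, and therefore $|F_x^{-1}(m)-F_y^{-1}(m)|\leq 2D\,d(x,y)$. Nothing in the argument depends on $m$, so the Lipschitz constant $2D$ is uniform over $m\in(0,1)$. This yields the ``in particular'' clause, namely that the map $x\mapsto F_x^{-1}(m)$ is $2D$-Lipschitz on $(\mathcal{M},d)$.

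The only delicate point is the almost-sure qualification. The inclusion of events holds only outside a null set, but probabilities are unaffected by null sets, so the comparison of $F_x$ and $F_y$ is exact. I expect no real obstacle beyond handling the generalized inverse carefully through its infimum definition rather than by inverting $F_x$.
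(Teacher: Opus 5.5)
Your proof is correct and takes essentially the same route as the paper. Both use Proposition~\ref{proposition 2.2} to obtain $F_x(t+2D\,d(x,y))\geq F_y(t)$, read off $F_x^{-1}(m)\leq F_y^{-1}(m)+2D\,d(x,y)$ from the infimum definition of the quantile, and then symmetrize. The only cosmetic difference is that the paper plugs in $t=F_y^{-1}(m)$ directly via $F_y(F_y^{-1}(m))\geq m$, whereas you take the infimum over all admissible $t$, which avoids appealing to right-continuity of $F_y$.
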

\noindent \textbf{Proof:} See the \hyperref[Proof of proposition 2.3]{\textcolor{blue}{Proof of proposition 2.2}} in \hyperref[Appendix A]{\textcolor{blue} Appendix A}.

\hfill $\blacksquare$




\begin{remark}\label{remark 5}
Proposition \ref{proposition 2.3} asserts that for every $m \in (0,1)$, the population DTM $\delta_{\mathbb{P},m}$ is 1-Lipschitz on $(\mathcal{M},d)$, that is, $|\delta_{\mathbb{P},m}(x)-\delta_{\mathbb{P},m}(y)|\leq d(x,y),~ \text{for all}~ x,y \in \mathcal{M}.$ This is a distribution-free property of the DTM functional, holds for any $\mathbb{P}$, and it implies that $\delta_{\mathbb{P},m}$ is uniformly continuous on $\mathcal{M}$.
\end{remark}
\begin{remark}\label{remark 6}
    Observe that \eqref{eq:2.4} implies that the population DTM $\delta_{\mathbb{P},m}(x)$ admits the representation
    \begin{equation*}
        \delta^2_{\mathbb{P},m}(x)=\mathbb{E}_{\mathbb{P}} \left[\mathbb{S}_x(X)\mid \mathbb{S}_x(X)\leq F_x^{-1}(m)\right],
    \end{equation*}
     or equivalently,
    \begin{equation*}
        \delta^2_{\mathbb{P},m}(x) = \arg\min_{\theta} \mathbb{E}_{\mathbb{P}} \left[\big(\mathbb{S}_x(X)-\theta\big)^2\mid \mathbb{S}_x(X)\leq F_x^{-1}(m)\right].
    \end{equation*}
        Moreover, as $\mathcal{M}$ is compact, $\mathbb{E}_{\mathbb{P}} \left[\big(\mathbb{S}_x(X)\big)^2\mid \mathbb{S}_x(X)\leq F_x^{-1}(m)\right]<\infty$.  Thus, the classical DTM can be interpreted as a conditional least-squares estimator based on the lower tail of the distribution of the distance function. This least-squares formulation motivates the Huberized version developed in \hyperref[section 3]{\textcolor{blue}{Section \ref{section 3}}}.
    \end{remark}
\subsection{ Contamination Model} \label{section 2.4}
Suppose that we observe a random sample $\mathscr{X} = \{X_{1},\cdots, X_{n}\}$ drawn independently from a probability measure $\mathbb{P} \in \mathscr{P}_{a,b}(\mathcal{M})$, where the class $\mathscr{P}_{a,b}(\mathcal{M})$ is defined in \hyperref[section 2.1]{\textcolor{blue}{Section \ref{section 2.1}}}. The empirical measure associated with this sample is defined as $ \mathbb{P}_{n} := \frac{1}{n} \sum_{i=1}^{n}\delta_{X_i}$, where $\delta_{x}$ denotes the Dirac measure at $x$, i.e., $\delta_{x}(A)= 1_{\{x\}}(A)$, for every Borel set $A\subseteq \mathcal{M}$, with support $\mathcal{S}_{\mathbb{P}_n}$.
The empirical DTM and its Huberized counterpart introduced later are constructed from the observed sample
and are used to infer the underlying topological structure of $\mathcal{S}_{\mathbb{P}}$.
%

In many practical situations, however, the observed data may contain contaminated observations. To model this situation, we consider Huber’s contamination model (\textcite{huber1964robust})
\begin{equation}
  \mathbb{P}_{\alpha} = (1 - \alpha)~ \mathbb{P}_{s} + \alpha~ \mathbb{P}_{c}, \quad \alpha \in \left[0,\frac{1}{2}\right),
\end{equation}
where $\mathbb{P}_{s} \in \mathscr{P}_{a,b}(\mathcal{M})$ denotes the underlying signal distribution of interest, and $\mathbb{P}_{c}$ is an arbitrary contaminating distribution. Under this model, the objective of robust topological inference is to recover the topology of 
$\mathcal{S}_{\mathbb{P}_{s}}$
rather than that of $\mathcal{S}_{\mathbb{P}_{\alpha}}$.
\subsection{Empirical Distance-to-measure} \label{section 2.5}
Based on the observed sample $\mathscr{X}=\{X_{1},\cdots, X_{n}\}$, let $\mathbb{P}_{n}$ denote the corresponding empirical measure. For each $x \in \mathcal{M}$, define the sample squared distances $\mathbb{S}_x(X_j):=[d(x,X_j)]^2$, for  $j=1,\cdots,n$. Let $\mathbb{S}_{x}(X_{(1)})\leq \mathbb{S}_{x}(X_{(2)})\leq \cdots \leq \mathbb{S}_{x}(X_{(n)})$ be the corresponding order statistics and $k=\lceil mn \rceil$. The empirical DTM is defined by
\begin{equation}
\widehat{\delta}_{\mathbb{P}_n,m}(x)=\left(\frac{1}{k}\sum_{j=1}^{k}\mathbb{S}_{x}(X_{(j)})\right)^{\frac{1}{2}}.
\end{equation}

Note that, squared of $\widehat{\delta}_{\mathbb{P}_n,m}(x)$ admits the following variational characterization. Define 
\begin{equation} \label{eq:2.6}
    Q_n(x;\theta)=\frac{1}{k}\sum_{j=1}^{k}\frac{1}{2}\left(\mathbb{S}_{x}(X_{(j)})- \theta\right)^2,\quad \theta \in \mathbb{R}.
\end{equation}
As $Q_n(x;\theta)$ is a strictly convex function with respect to $\theta$, it admits the unique minimizer
\begin{equation*}
    \widehat{\theta}_{\mathbb{P}_n,m}(x):=\arg\min_{\theta\in \mathbb{R}}~Q_n(x;\theta)=\frac{1}{k}\sum_{j=1}^{k}\mathbb{S}_{x}(X_{(j)})= \widehat{\delta}^2_{\mathbb{P}_n,m}(x).
\end{equation*}
Further, observe that $\widehat{\delta}_{\mathbb{P}_n,m}(x)$ can be viewed as the square root of the least-squares estimator based on the $k$-nearest-neighbour squared distances (see, \hyperref[remark 7]{\textcolor{blue} Remark \ref{remark 7}}).
\begin{remark}\label{remark 7}
    If $N_k(x)$ denotes the set of $k$-nearest neighbours of $x$ in $\mathcal{S}_{\mathbb{P}_n}$, then 
    \begin{equation*}
    \widehat{\delta}_{\mathbb{P}_n,m}^{2}(x)=\frac{1}{k}\sum_{X_j \in N_k(x)}\mathbb{S}_{x}(X_{j}).
\end{equation*}
\end{remark}
Under uncontaminated sampling, $\widehat{\delta}_{\mathbb{P}_n,m}$ is a consistent estimator of the population DTM
(see, \hyperref[section 2.3]{\textcolor{blue}{Section \ref{section 2.3}}} and Theorem 3.5, \cite{chazal2018robust}). However, because it is based on a quadratic loss, its underlying least-squares formulation has unbounded influence. Consequently, even a small number of contaminated in nearest neighbours may substantially distort the estimated distance functional.
This sensitivity motivates replacing the quadratic loss function in \eqref{eq:2.6} with a bounded-influence loss function, leading to Huberized DTM introduced in \hyperref[section 3]{\textcolor{blue}{Section \ref{section 3}}}.


\section{Huberized Distance-to-measure} \label{section 3}
In this section, we introduce a robust analogue of the population distance-to-measure defined in \hyperref[section 2.3]{\textcolor{blue} Section \ref{section 2.3}} by incorporating the Huber loss function.
\subsection{Huber Loss Function} \label{section 3.1}
Let $c>0$ be a tuning parameter. The Huber loss function $\rho_c:\mathbb{R}\rightarrow \mathbb{R}$ is defined by $\rho_c(u)=\frac{1}{2} u^2\cdot \mathbbm{1}_{\{|u| \leq c\}} + \left(c|u|-\frac{1}{2}c^2   \right) \mathbbm{1}_{\{|u| > c\}}$
with associated score function $\psi_c(u)=u\cdot \mathbbm{1}_{\{|u| \leq c\}} + c~\text{sign}(u)\cdot \mathbbm{1}_{\{|u| > c\}}$
The function $\rho_c$ is convex and continuously differentiable on $\mathbb{R}$, while $\psi_c$ is non-decreasing, 1-Lipschitz, and satisfies $\|\psi_c\|_{\infty}\leq c$.
As a result, $\psi_c$ truncates the contribution of extreme observations, a property that underlies the bounded-influence estimator introduced in the next subsection.
\subsection{Huberized DTM estimator} \label{section 3.2}
For $x \in \mathcal{M}$, define the empirical Huber objective function by
\begin{equation}\label{eq:3.3}
    L_{n}(x;\theta):= \frac{1}{k}\sum_{j=1}^{k}\rho_c\left(\mathbb{S}_{x}(X_{(j)})- \theta\right),\quad \theta\in \mathbb{R}.
\end{equation}
Observe that the objective function $L_{n}(x;\theta)$ is strictly convex in $\theta$, and $L_{n}(x;\theta)\rightarrow \infty$ as $|\theta|\rightarrow \infty$ almost surely, since $\rho_c\left(\mathbb{S}_{x}(X_{(j)})- \theta\right) \geq c|\theta|-\left(c~\mathbb{S}_{x}(X_{(j)}) +\frac{1}{2} c^2\right)$. Therefore, the minimum of $L_{n}(x;\theta)$ with respect to $\theta$ exists. Hence,
$\widehat{\theta}_{\mathbb{P}_n,m,c}(x)\in \displaystyle\arg \min_{\theta\in \mathbb{R}} L_{n}(x;\theta)$.
Since $\rho_c$ is continuously differentiable, any minimizer satisfies the estimating equation $\sum\limits_{j=1}^{k}\psi_c\left(\mathbb{S}_{x}(X_{(j)})- \theta\right)=0$.
\noindent Moreover, $L_{n}(x,\cdot)$ is decreasing on $\left(-\infty, \mathbb{S}_{x}(X_{(1)})\right)$ and increasing on $\left(\mathbb{S}_{x}(X_{(k)}),\infty \right)$, see \hyperref[Lemma 3]{\textcolor{blue} Lemma 3}.
Consequently, every minimizer of $L_{n}(x,\cdot)$ lie in the interval $[\mathbb{S}_{x}(X_{(1)}),\mathbb{S}_{x}(X_{(k)})] \subset [0,\infty)$. In Particular $\widehat{\theta}_{\mathbb{P}_n,m,c}(x) \geq 0$, as $\mathbb{S}_{x}(X_{(1)})\geq 0$. The empirical Huberized DTM at $x \in \mathcal{M}$ is therefore defined as
\begin{equation}\label{eq:3.5}
    \widehat{\delta}_{\mathbb{P}_n,m,c}(x)=\sqrt{\widehat{\theta}_{\mathbb{P}_n,m,c}(x)}.
\end{equation}
Similar to classical empirical DTM $\widehat{\delta}_{\mathbb{P}_n,m}(x)$, this estimator $\widehat{\delta}_{\mathbb{P}_n,m,c}(x)$ 
depends on the sample only through the first $k$ order statistics $\mathbb{S}_{x}(X_{(1)}),\cdots,\mathbb{S}_{x}(X_{(k)})$. In particular, observations beyond the $k$-th nearest neighbour of $x$ do not contribute to $\widehat{\delta}_{\mathbb{P}_n,m,c}(x)$.
\subsection{Limiting regimes and the Robustness-Efficiency Tradeoff} \label{section 3.3}
The Huberized DTM  bridges between classical estimators for varying choice of $c$.

\vspace{0.1 in}

\noindent \textbf{1. Mean-type DTM:} As $c\rightarrow \infty$, $\psi_c(u)\rightarrow u$ pointwise and, the estimating equation
    reduces to $\sum_{j=1}^{k}\left(\mathbb{S}_{x}(X_{(j)})- \theta\right)=0$, whose unique solution is $\frac{1}{k}\sum_{j=1}^{k}\mathbb{S}_{x}(X_{(j)})$. Hence, $\widehat{\delta}_{\mathbb{P}_n,m,c}(x) \rightarrow \widehat{\delta}_{\mathbb{P}_n,m}(x)$, as $c\rightarrow \infty$. For instance, one can see \eqref{eq:2.6}.

    \vspace{0.1 in}
    
\noindent \textbf{2. Median-type behaviour:} As $c\rightarrow0$, $\psi_c(u)\rightarrow c~\text{sign}(u)$, and the estimating equation reduces to
$\sum\limits_{j=1}^{k}\text{sign}\left(\mathbb{S}_{x}(X_{(j)})- \theta\right)=0$, which characterizes the minimizers of $\sum\limits_{j=1}^{k}\left|\mathbb{S}_{x}(X_{(j)})- \theta\right|=0$. In this regime, $\widehat{\theta}_{\mathbb{P}_n,m,c}(x)$ converges to a median-type estimator.

\subsection{Population Huberized DTM}
The variational characterization of the population DTM in \hyperref[remark 6]{\textcolor{blue} Remark \ref{remark 6}} leads to the following Huberized analogue based on Huber loss. Accordingly, for $\mathbb{P} \in \mathscr{P}_{a,b}(\mathcal{M})$, $m \in (0,1)$ and $c>0$, define the population Huber risk
\begin{equation}\label{eq:3.50}
    L(x;\theta):=\mathbb{E}\left[\rho_c \left(\mathbb{S}_x(X)-\theta\right)\mid \mathbb{S}_x(X)\leq F_x^{-1}(m)\right],
\end{equation}
$\theta\in \mathbb{R}$, where $\mathbb{S}_x(X)$ and $F_x^{-1}(m)$ are as introduced in \hyperref[section 2.3]{\textcolor{blue} Section \ref{section 2.3}}. The corresponding population Huber M-functional is
\begin{equation} \label{eq:3.6}
\theta_{\mathbb{P},m,c}(x)  =\arg\min_{\theta\in \mathbb{R}} L(x;\theta). 
\end{equation} 
\noindent The following lemma justifies differentiation under the conditional expectation in (\ref{eq:3.6}).
\begin{lemma} \label{Lemma 3.1}
    Let $Z$ be a real-valued random variable satisfying $\mathbb{E}(|Z|)<\infty$, and let $q\in \mathbb{R}$ be such that $\Pr(Z\leq q)>0$. For $\theta \in \mathbb{R}$, define $\mathcal{L}(\theta)=\mathbb{E}\left[\rho_c\left(Z-\theta\right)\mid Z\leq q\right]$. 
    Then, $\mathcal{L}$ is continuously differentiable on $\mathbb{R}$,
    and its derivative is given by $\mathcal{L}'(\theta)=-\mathbb{E}\left[\psi_c\left(Z-\theta\right)\mid Z\leq q\right]$, where $\rho_c$ and $\psi_c$ are as introduced in \hyperref[section 3.1]{\textcolor{blue} Section \ref{section 3.1}}.
\end{lemma}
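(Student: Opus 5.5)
Write $p:=\Pr(Z\le q)>0$, so that $\mathcal{L}(\theta)=p^{-1}\,\mathbb{E}\big[\rho_c(Z-\theta)\mathbbm{1}_{\{Z\le q\}}\big]$. The problem then reduces to differentiating an unconditional expectation of a function that is $C^1$ in $\theta$ with bounded derivative. The plan is a dominated convergence argument applied to difference quotients. Continuity of the derivative follows from a second dominated convergence step.

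\textbf{Step 1 (well-definedness).} From the definition of $\rho_c$ in Section~\ref{section 3.1} we have $0\le\rho_c(u)\le c|u|$ for all $u$, which gives $\rho_c(Z-\theta)\le c|Z|+c|\theta|$. Since $\mathbb{E}|Z|<\infty$, this shows $\mathcal{L}(\theta)$ is finite for every $\theta$.

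\textbf{Step 2 (difference quotients).} Fix $\theta$ and take $h\neq0$. By the mean value theorem applied to the $C^1$ function $\rho_c$, whose derivative is $\psi_c$,
\begin{equation*}
\frac{\rho_c(Z-\theta-h)-\rho_c(Z-\theta)}{h}=-\psi_c(Z-\theta-\xi_h)
\end{equation*}
for some $\xi_h$ between $0$ and $h$. Because $\|\psi_c\|_\infty\le c$, each of these quotients, multiplied by $\mathbbm{1}_{\{Z\le q\}}$, is bounded in absolute value by the integrable constant $c$. As $h\to0$ they converge pointwise to $-\psi_c(Z-\theta)\mathbbm{1}_{\{Z\le q\}}$, since $\psi_c$ is continuous (indeed 1-Lipschitz). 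Dominated convergence then gives
\begin{equation*}
\mathcal{L}'(\theta)=-p^{-1}\mathbb{E}\big[\psi_c(Z-\theta)\mathbbm{1}_{\{Z\le q\}}\big]=-\mathbb{E}\big[\psi_c(Z-\theta)\mid Z\le q\big].
\end{equation*}

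\textbf{Step 3 (continuity of the derivative).} The 1-Lipschitz property of $\psi_c$ gives $|\psi_c(Z-\theta)-\psi_c(Z-\theta')|\le|\theta-\theta'|$. Taking conditional expectations yields $|\mathcal{L}'(\theta)-\mathcal{L}'(\theta')|\le|\theta-\theta'|$. So $\mathcal{L}'$ is in fact Lipschitz, and in particular continuous.

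None of these steps is a genuine obstacle. The only point needing care is producing a dominating function that does not depend on $h$. The bound $\|\psi_c\|_\infty\le c$, together with the mean value form of the quotient, supplies exactly that, so no moment condition beyond finiteness of $\mathcal{L}$ is needed. The assumption $\mathbb{E}|Z|<\infty$ is used only in Step 1.
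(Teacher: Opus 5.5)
Your proof is correct and follows essentially the same route as the paper: write $\mathcal{L}(\theta)$ as $\mathbb{E}[\rho_c(Z-\theta)\mathbbm{1}_{\{Z\le q\}}]/\Pr(Z\le q)$ and differentiate under the expectation by dominated convergence with the uniform bound $|\psi_c|\le c$. The paper cites the standard differentiation-under-the-integral theorem, which is exactly your difference-quotient argument. The only minor differences are that you explicitly check finiteness of $\mathcal{L}$ via $0\le\rho_c(u)\le c|u|$, and that you get continuity of $\mathcal{L}'$ directly from the $1$-Lipschitz property of $\psi_c$ (which even makes $\mathcal{L}'$ Lipschitz) instead of the paper's second dominated convergence step.
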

\noindent \textbf{Proof:} See the \hyperref[Proof of Lemma 3.1]{\textcolor{blue}{Proof of Lemma 3.1}} in \hyperref[Appendix A]{\textcolor{blue} Appendix A}.

 \hfill$\blacksquare$

\hyperref[Lemma 3.1]{\textcolor{blue} Lemma \ref{Lemma 3.1}} ensures that $L(x;\theta) \in C^1(\mathbb{R})$ with respect to $\theta$.
Consequently, any minimizer $\theta_{\mathbb{P},m,c}(x)$
 satisfies
\begin{equation} \Phi_{\mathbb{P},x}(\theta):=\mathbb{E}_{\mathbb{P}}\left[\psi_c\left(\mathbb{S}_x(X)-\theta\right)\mid \mathbb{S}_x(X)\leq F_x^{-1}(m)\right]=0.
\end{equation}
The following two propositions establish existence, uniqueness, and admissible bounds for the population Huber functional. For brevity, throughout this article $\Phi_{x}(\theta)$ denotes $\Phi_{\mathbb{P},x}(\theta)$.

\begin{proposition}[Existence and uniqueness]\label{proposition 3.1}
Suppose Assumption (A4) holds. Then, for every $x \in \mathcal{M}$ the estimating equation $\Phi_x(\theta)=0$ admits a unique solution $\theta_{\mathbb{P},m,c}(x) \in \mathbb{R}$.
\end{proposition}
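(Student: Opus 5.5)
Fix $x\in\mathcal{M}$ and write $Z=\mathbb{S}_x(X)$, $q=F_x^{-1}(m)$. The plan is to show that $\Phi_x$ is continuous, non-increasing, takes values of opposite signs far out, and is strictly decreasing wherever it vanishes. The intermediate value theorem then gives existence and strict monotonicity gives uniqueness.

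First, under (A4) $F_x$ is continuous, so $\Pr(Z\le q)=F_x(q)=m>0$ and the conditional expectation is well defined. Since $\mathcal{M}$ is compact, $0\le Z\le D^2$, so $\mathbb{E}|Z|<\infty$ and Lemma \ref{Lemma 3.1} applies. Because $\psi_c$ is bounded and $1$-Lipschitz, dominated convergence shows that $\theta\mapsto \Phi_x(\theta)$ is continuous; indeed it is $1$-Lipschitz. Because $\psi_c$ is non-decreasing, $\Phi_x$ is non-increasing in $\theta$.

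Next come the sign conditions. Conditionally on $Z\le q$ we have $Z\in[0,q]$. For $\theta<-c$ this gives $Z-\theta>c$, so $\Phi_x(\theta)=c>0$. For $\theta>q+c$ it gives $Z-\theta<-c$, so $\Phi_x(\theta)=-c<0$. The intermediate value theorem then yields a root $\theta^*\in[-c,q+c]$. Equivalently, $\theta^*$ minimizes the convex $C^1$ function $L(x;\cdot)$, since $L'(x;\theta)=-\Phi_x(\theta)$ by Lemma \ref{Lemma 3.1}.

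Uniqueness is the main obstacle, because $\psi_c$ is flat outside $[-c,c]$, so $\Phi_x$ need not be strictly decreasing everywhere. The plan is to show strict decrease near any root. Take $\theta_1<\theta_2$ with $\Phi_x(\theta_1)=\Phi_x(\theta_2)=0$. Monotonicity forces $\Phi_x\equiv 0$ on $[\theta_1,\theta_2]$, and since $\psi_c(Z-\theta_1)-\psi_c(Z-\theta_2)\ge 0$ pointwise, this difference must vanish a.s. on $\{Z\le q\}$. The difference is strictly positive whenever $|Z-\theta|<c$ for some $\theta\in[\theta_1,\theta_2]$, that is, on the event $Z\in(\theta_1-c,\theta_2+c)$. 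Hence $\Pr(Z\in(\theta_1-c,\theta_2+c)\cap[0,q])=0$.

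By (A4), $Z$ has a density bounded below by $h_{\inf}>0$ on its support. Here I would use that the support of $\mu_x$ is an interval $[0,s_x]$ (or at least contains $[0,q]$ with positive density), so any nonempty open subinterval of $[0,q]$ has positive mass. It therefore remains to check that $(\theta_1-c,\theta_2+c)$ meets $[0,q]$ in a nondegenerate interval. If it did not, then either $\theta_2+c\le 0$ or $\theta_1-c\ge q$. In the first case all $Z-\theta_2\ge c$, which gives $\Phi_x(\theta_2)=c\neq 0$. In the second case all $Z-\theta_1\le -c$, which gives $\Phi_x(\theta_1)=-c\neq 0$. Either way we contradict the assumption, so the intersection is nondegenerate and has positive probability, contradicting the vanishing of the difference. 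This proves uniqueness.

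The delicate point is the interpretation of (A4), namely that positive density holds throughout $[0,q]$. If needed, I would state explicitly that $\mathcal{S}_{\mu_x}\supseteq[0,q]$ up to null sets, so that every nondegenerate subinterval of $[0,q]$ carries positive $\mu_x$-mass.
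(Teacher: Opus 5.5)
Your proof is correct under the interval reading of (A4) that you flag, and it departs from the paper's at the uniqueness step. The paper obtains existence much as you do: continuity via Lemma \ref{Lemma 3.1}, limits $c$ and $-c$ of $\Phi_x$ as $\theta\to-\infty$ and $\theta\to+\infty$ by dominated convergence, then the intermediate value theorem. For uniqueness, however, it asserts that $\Phi_x$ is strictly decreasing on all of $\mathbb{R}$, because $\psi_c$ is non-decreasing and $\mathbb{S}_x(X)$ is non-degenerate. As you observe, that claim is false. On the conditioning event $0\le Z\le q$, so $\Phi_x\equiv c$ on $(-\infty,-c)$ and $\Phi_x\equiv -c$ on $(q+c,\infty)$. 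Non-degeneracy alone also cannot give uniqueness. If the conditional law of $Z$ given $Z\le q$ were uniform on $[0,1]\cup[3,4]$ and $c=1/2$, every $\theta\in[3/2,5/2]$ would be a root, even though that law has density bounded between positive constants on its (disconnected) support. Your argument requires strictness only across a putative interval of roots. It turns a flat stretch of zeros into a nondegenerate subinterval of the conditional support that carries no mass, which is exactly where a lower density bound on an \emph{interval} is needed. This is the same reading of (A4) that the paper uses implicitly in Lemma \ref{Lemma 3.2}, where it bounds $\int_{\max(0,\theta-c)}^{\min(q,\theta+c)}h_{\mathbb{S}_x}(u)\,du$ below by $h_{\inf}$ times the length. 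So stating the hypothesis explicitly is necessary, not merely convenient. Two small points to tidy up. First, your dichotomy ($\theta_2+c\le 0$ or $\theta_1-c\ge q$) needs $q>0$; this follows from continuity of $F_x$, since $F_x(0)=0<m$. Second, for $x\notin\mathcal{S}_{\mathbb{P}}$ the law of $Z$ puts no mass on $[0,d(x,\mathcal{S}_{\mathbb{P}})^2)$. The hypothesis should therefore be a density at least $h_{\inf}$ on $[s_x,q]$, with $s_x:=\inf\mathcal{S}_{\mu_x}<q$. Your case analysis then goes through with $0$ replaced by $s_x$: the case $\theta_2+c\le s_x$ forces $\Phi_x(\theta_2)=c$.
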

\noindent \textbf{Proof:} See the \hyperref[Proof of proposition 3.1]{\textcolor{blue}{Proof of proposition 3.1}} in \hyperref[Appendix A]{\textcolor{blue} Appendix A}.

\hfill $\blacksquare$

\begin{proposition}[Admissible Bounds]\label{proposition 3.2}
Under Assumption (A2)-(A3), for every $x \in \mathcal{M}$ and $m\in (0,1)$, $\theta_{\mathbb{P},m,c}(x)\in(0,~ F_x^{-1}(m)]\subseteq \left(0,\left(\frac{m}{a}\right)^{\frac{2}{b}} \right]$, where $\mathbb{S}_x(X):=[d(x,X)]^2,~X\sim \mathbb{P}$ and $F_x^{-1}(m):=\inf\{t\geq 0: F_x(t)\geq m\}$ is the $m$-th quantile of $F_x(t):=\mathbb{P}(\mathbb{S}_x(X)\leq t)$.
\end{proposition}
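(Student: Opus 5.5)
The argument works through the monotone estimating function $\Phi_x(\theta)$ from Proposition \ref{proposition 3.1}. Write $q:=F_x^{-1}(m)$ and let $Z:=\mathbb{S}_x(X)$ conditioned on $\{Z\le q\}$. Since $\psi_c$ is non-decreasing, $\theta\mapsto\Phi_x(\theta)$ is non-increasing. Lemma \ref{Lemma 3.1} shows it is continuous, so the unique root $\theta_{\mathbb{P},m,c}(x)$ can be located by evaluating the signs of $\Phi_x$ at the two candidate endpoints $0$ and $q$. The inclusion $(0,F_x^{-1}(m)]\subseteq(0,(m/a)^{2/b}]$ will then follow separately from the quantile bound of Section \ref{section 2.1}.

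\textbf{Upper endpoint.} On $\{Z\le q\}$ we have $Z-q\le 0$, hence $\psi_c(Z-q)\le 0$ and therefore $\Phi_x(q)\le 0$. Monotonicity then forces the root to satisfy $\theta_{\mathbb{P},m,c}(x)\le q$.

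\textbf{Lower endpoint.} On $\{Z\le q\}$ we have $Z\ge 0$, so $\psi_c(Z)\ge 0$ and $\Phi_x(0)\ge 0$. To obtain strict positivity of the root, I must show $\Phi_x(0)>0$ and also exclude a flat zero stretch of $\Phi_x$ near $0$.
\begin{itemize}
\item By (A3) (or (A4)), $Z$ has no atom at $0$, so $\Pr(0<Z\le q)=\Pr(Z\le q)=m>0$. Hence $\Phi_x(0)=\mathbb{E}[\min(Z,c)\mid Z\le q]>0$.
\item Continuity of $\Phi_x$ then gives $\Phi_x(\theta)>0$ for all $\theta$ in a neighbourhood of $0$.
\item For $\theta<0$ we have $\psi_c(Z-\theta)\ge\psi_c(Z)$, so $\Phi_x(\theta)\ge \Phi_x(0)>0$.
\end{itemize}
Together these show the root is strictly positive. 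Combined with uniqueness from Proposition \ref{proposition 3.1}, this gives $\theta_{\mathbb{P},m,c}(x)\in(0,q]$.

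\textbf{Outer inclusion.} Since $Z=d(x,X)^2$, we have $F_x^{-1}(m)=q_m(x)^2$, where $q_m(x)$ is the distance quantile in \eqref{eq:2.2}; the equality uses continuity of $F_x$. Section \ref{section 2.1} established $q_m(x)\le (m/a)^{1/b}$ via (A2), for $m<a\varepsilon_0^b$ and $x\in\mathcal{S}_{\mathbb{P}}$. Squaring gives $F_x^{-1}(m)\le (m/a)^{2/b}$.

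\textbf{Main obstacle.} The step I expect to be delicate is this outer inclusion for $x\notin\mathcal{S}_{\mathbb{P}}$, since the standard-measure condition is only stated on the support. For general $x\in\mathcal{M}$ the bound would need to become $(d(x,\mathcal{S}_{\mathbb{P}})+(m/a)^{1/b})^2$, obtained by applying (A2) at a nearest support point $y$ and using $B_\varepsilon(y)\subseteq B_{\varepsilon+d(x,y)}(x)$. I would therefore state the outer inclusion for $x\in\mathcal{S}_{\mathbb{P}}$ and $m<a\varepsilon_0^b$, consistent with the discussion following \eqref{eq:2.2}, and note the modification needed off the support.
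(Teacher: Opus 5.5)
Your argument is correct and is essentially the paper's proof. Both locate the root of the non-increasing $\Phi_x$ by its sign at $0$ (strictly positive, since (A3) rules out an atom of $\mathbb{S}_x(X)$ at $0$) and at or beyond $F_x^{-1}(m)$ (non-positive), and then apply $F_x^{-1}(m)=q_m^2(x)$ together with the (A2) quantile bound. Your caveat on the outer inclusion is also well founded: the paper asserts $q_m^2(x)\le (m/a)^{2/b}$ uniformly over $x\in\mathcal{M}$ and for all $m\in(0,1)$, but (A2) only delivers this for $x\in\mathcal{S}_{\mathbb{P}}$ and $m<a\varepsilon_0^b$ (it fails at any $x$ with $d(x,\mathcal{S}_{\mathbb{P}})>(m/a)^{1/b}$), so your restricted version, or the off-support bound $\bigl(d(x,\mathcal{S}_{\mathbb{P}})+(m/a)^{1/b}\bigr)^2$, is what the argument actually establishes.
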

\noindent \textbf{Proof:} See the \hyperref[Proof of proposition 3.2]{\textcolor{blue}{Proof of proposition 3.2}} in \hyperref[Appendix A]{\textcolor{blue} Appendix A}.

\hfill $\blacksquare$

\hyperref[proposition 3.2]{\textcolor{blue} Proposition \ref{proposition 3.2}} asserts that $\theta_{\mathbb{P},m,c}(x)$ is strictly positive and satisfies $\theta_{\mathbb{P},m,c}(x) \leq F_x^{-1}(m)$.
The corresponding population Huberized DTM is therefore defined as 
\begin{equation} \label{eq:3.8}
    \delta_{\mathbb{P},m,c}(x)=\sqrt{\theta_{\mathbb{P},m,c}(x)}.
\end{equation}
As an immediate consequence of \hyperref[proposition 3.1]{\textcolor{blue} Proposition \ref{proposition 3.1}} and \hyperref[proposition 3.2]{\textcolor{blue} Proposition \ref{proposition 3.2}}, $\delta_{\mathbb{P},m,c}(x)$ is well defined for all $x\in \mathcal{M}$ and satisfies $\delta_{\mathbb{P},m,c}(x)\in \left(0,\left(\frac{m}{a}\right)^{\frac{1}{b}} \right]$.
\begin{remark}
   Assumption (A4) in \hyperref[proposition 3.1]{\textcolor{blue} Proposition \ref{proposition 3.1}} excludes 
   degenerate conditional distributions of $\mathbb{S}_x(X)=[d(x,X)]^2$. However, uniqueness of the $\theta_{\mathbb{P},m,c}(x)$ (see, \eqref{eq:3.6}) can also be established in this degenerate case. More precisely, suppose, for some $x\in \mathcal{M}$ and $s_0\geq 0$,
   \begin{equation*}
       Pr\left(\mathbb{S}_x(X)=s_0 \mid \mathbb{S}_x(X)\leq F_x^{-1}(m)\right)=1.
   \end{equation*}
   Then $L(x;\theta)$ in \eqref{eq:3.50} reduces to $L(x;\theta)=\rho_c(s_0-\theta),~\theta \in \mathbb{R}$, which is uniquely minimized at $\theta=s_0$. Consequently, $\theta_{\mathbb{P},m,c}(x)=s_0$,  $\delta_{\mathbb{P},m,c}(x)=\sqrt{s_0}$.
   For example, if $\mathbb{P}$ is the uniform distribution on a sphere of radius $a>0$ centered at $x$, then $\mathbb{S}_x(X)=a^2$ almost surely, which is independent of the dimension of the sphere. Hence, for every $m\in (0,1)$, $\theta_{\mathbb{P},m,c}(x)=a^2$, and therefore,  $\delta_{\mathbb{P},m,c}(x)=a$. Thus, the $\theta_{\mathbb{P},m,c}(x)$ remains uniquely defined irrespective of whether the distribution of $\mathbb{S}_x(X)$ is degenerate.
\end{remark}

\subsubsection{Properties of \texorpdfstring{$\delta_{\mathbb{P},m,c}$}{delta\_{P,m,c}(x)}}\label{section 3.4.1}
The $\delta_{\mathbb{P},m,c}(.)$ serves as the filtration function underlying the proposed persistent homology framework. Its statistical and topological properties therefore
depend on the regularity of the map $x \mapsto \delta_{\mathbb{P},m,c}(x)$. In this subsection, we establish continuity and Lipschitz bound for $\delta_{\mathbb{P},m,c}$, which underpin the statistical consistency and topological stability results derived in \hyperref[section 4]{\textcolor{blue}{Section \ref{section 4}}}.
\begin{theorem}\label{Theorem 3.1}

Under Assumptions (A3) and (A4), for every $m\in (0,1)$ and $c>0$, 
$\delta_{\mathbb{P},m,c}(.)$ defined in \eqref{eq:3.8} is continuous on $(\mathcal{M},d)$.
\end{theorem}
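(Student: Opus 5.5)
The plan is to show that $x\mapsto\theta_{\mathbb{P},m,c}(x)$ is continuous. Continuity of $\delta_{\mathbb{P},m,c}=\sqrt{\theta_{\mathbb{P},m,c}}$ then follows because $\sqrt{\cdot}$ is continuous on $[0,\infty)$. The approach is the standard argmin/zero-continuity argument for M-functionals: the estimating function $\Phi_x(\theta)$ is jointly continuous in $(x,\theta)$, and by Proposition~\ref{proposition 3.1} it has a unique zero.

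\textbf{Rewriting $\Phi_x$.} Write $q_x:=F_x^{-1}(m)$. By (A4), $F_x$ is continuous, so $\Pr(\mathbb{S}_x(X)\le q_x)=m$ and
\begin{equation*}
\Phi_x(\theta)=\frac{1}{m}\,\mathbb{E}\Big[\psi_c\big(\mathbb{S}_x(X)-\theta\big)\mathbbm{1}_{\{\mathbb{S}_x(X)\le q_x\}}\Big].
\end{equation*}

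\textbf{Joint continuity of $\Phi$.} Take $x_n\to x$ and $\theta_n\to\theta$. By Proposition~\ref{proposition 2.2}, $\mathbb{S}_{x_n}(X)\to\mathbb{S}_x(X)$ surely. By Proposition~\ref{proposition 2.3}, $q_{x_n}\to q_x$. Since $\psi_c$ is $1$-Lipschitz, the $\psi_c$ factor converges surely. The indicator $\mathbbm{1}_{\{\mathbb{S}_{x_n}(X)\le q_{x_n}\}}$ converges to $\mathbbm{1}_{\{\mathbb{S}_x(X)\le q_x\}}$ except possibly on the event $\{\mathbb{S}_x(X)=q_x\}$. That event has probability zero because the law of $\mathbb{S}_x(X)$ has a density by (A4). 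The integrand is bounded by $c$, so dominated convergence gives $\Phi_{x_n}(\theta_n)\to\Phi_x(\theta)$.

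\textbf{Continuity of the root.} Fix $x$ and set $\theta_0=\theta_{\mathbb{P},m,c}(x)$. The map $\theta\mapsto\Phi_x(\theta)$ is non-increasing since $\psi_c$ is non-decreasing. I will show it is strictly decreasing near $\theta_0$. For $\theta<\theta'$,
\begin{equation*}
\Phi_x(\theta)-\Phi_x(\theta')\ge \frac{1}{m}\,\mathbb{E}\Big[\big(\psi_c(S-\theta)-\psi_c(S-\theta')\big)\mathbbm{1}_{\{S\le q_x,\ |S-\theta_0|<c/2\}}\Big]>0
\end{equation*}
for $\theta,\theta'$ close to $\theta_0$, where $S=\mathbb{S}_x(X)$. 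The key point is that $\Pr(S\le q_x,\ |S-\theta_0|<c/2)>0$. This holds because $\theta_0\in(0,q_x]$ by Proposition~\ref{proposition 3.2}, and the density of $S$ on $[0,q_x]$ is bounded below by $h_{\inf}$ by (A4). This gives, for small $\varepsilon>0$, $\Phi_x(\theta_0-\varepsilon)>0>\Phi_x(\theta_0+\varepsilon)$. By joint continuity, for $y$ close to $x$ the same strict inequalities hold with $\Phi_y$ in place of $\Phi_x$. Monotonicity of $\Phi_y$ and the intermediate value theorem then place its unique root in $(\theta_0-\varepsilon,\theta_0+\varepsilon)$, which proves continuity.

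\textbf{Main obstacle.} The delicate step is the joint continuity of $\Phi$. The moving truncation level $q_x$ makes the indicator discontinuous, and this is handled only by the atomlessness from (A4) together with the Lipschitz control of quantiles in Proposition~\ref{proposition 2.3}. A secondary subtlety is the strict monotonicity near the root. There I use the lower density bound in (A4) to guarantee positive mass in the linear region of $\psi_c$ below the truncation level. Assumption (A3) is needed only through Proposition~\ref{proposition 3.2}, which places $\theta_0$ inside $(0,q_x]$.
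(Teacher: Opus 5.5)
Your proof is correct once the simplification noted at the end is made.

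\textbf{Where you match the paper.} Your first half, joint continuity of $\Phi$ by dominated convergence with the moving indicator controlled by atomlessness of $\mathbb{S}_x(X)$, is the same as the paper's treatment of $G(x,\theta)=m\,\Phi_x(\theta)$. Your use of Proposition~\ref{proposition 2.3} to get $F_{x_n}^{-1}(m)\to F_x^{-1}(m)$ is cleaner than the paper's appeal to van der Vaart's Lemma 21.2. That lemma only gives convergence at continuity points of the quantile function.

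\textbf{Where you differ.} The genuine difference is the second half. The paper invokes the Implicit Function Theorem (Rudin, Theorem 9.28), using the nondegeneracy $\partial_\theta G(x,\theta_{\mathbb{P},m,c}(x))<0$ from Lemma~\ref{Lemma 3.2}. It then identifies the local implicit function with $\theta_{\mathbb{P},m,c}$ through the uniqueness in Proposition~\ref{proposition 3.1}. You instead bracket the root by a sign change and use monotonicity of $\Phi_y$ together with uniqueness.

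Your route is more elementary and better justified. Rudin's theorem is stated for $C^1$ maps on open subsets of Euclidean space, whereas here the parameter ranges over a compact metric space and $G$ is only shown to be continuous in $x$. Your argument needs only continuity of $y\mapsto\Phi_y(\theta)$ at the two fixed values $\theta_0\pm\varepsilon$. What the paper's route buys is that its nondegeneracy ingredient is later made uniform in Corollary~\ref{Corollary 3.1}. It is then reused quantitatively for the Lipschitz bound in Theorem~\ref{Theorem 3.2} and for the maximum-bias bound in Theorem~\ref{Theorem 4.3}. Your argument gives continuity only, with no modulus.

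\textbf{A step you can delete.} Your strict-monotonicity step is unnecessary, and its justification over-reads (A4). Since $\Phi_x$ is non-increasing and $\theta_0$ is its unique zero, $\Phi_x(\theta_0-\varepsilon)\ge 0$, with equality excluded by uniqueness. Symmetrically $\Phi_x(\theta_0+\varepsilon)<0$, for every $\varepsilon>0$, with no density argument needed.

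The justification you gave is also not valid as stated. (A4) bounds the density below only on $\mathcal{S}_{\mu_x}$, not on all of $[0,q_x]$. For instance, if $x\notin\mathcal{S}_{\mathbb{P}}$, then $\mathbb{S}_x(X)\ge d(x,\mathcal{S}_{\mathbb{P}})^2>0$ almost surely, so the density vanishes near $0$. The paper's Lemma~\ref{Lemma 3.2} makes the same implicit assumption.

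Dropping the step also shows that (A3) and Proposition~\ref{proposition 3.2} are not needed for this theorem. The bound $\theta_0\ge 0$ already follows from $\mathbb{S}_x(X)\ge 0$, and $\sqrt{\cdot}$ is continuous on $[0,\infty)$.
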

\noindent \textbf{Proof:} See the \hyperref[Proof of Theorem 3.1]{\textcolor{blue}{Proof of Theorem 3.1}} in \hyperref[Appendix B]{\textcolor{blue} Appendix B}.

\hfill $\blacksquare$

\begin{remark}
     \hyperref[Theorem 3.1]{\textcolor{blue} Theorem \ref{Theorem 3.1}} together with the compactness of $\mathcal{M}$ ensures that the map $x \mapsto \delta_{\mathbb{P},m,c}(x)$ is uniformly continuous on $\mathcal{M}$. Consequently, nearby points in $\mathcal{M}$ have uniformly similar Huberized DTM values. This uniform regularity enables us in establishing uniform consistency of the empirical Huberized DTM and the convergence of the associated persistence diagrams.
     \end{remark}

 We next establish that the population Huberized DTM is uniformly bounded away from zero over $\mathcal{M}$. This ensures a positive lower bound for the filtration-generating function 
throughout the domain.
\begin{proposition}\label{proposition 3.3} 
Let $$\theta_{\inf}:= \inf_{x\in\mathcal{M}} \theta_{\mathbb{P},m,c}(x),$$ 
where $\theta_{\mathbb{P},m,c}(x)$ is as defined in \eqref{eq:3.6}. Then, under Assumptions (A1), (A3) and (A4), $\theta_{\inf}>0$, and consequently, $\displaystyle\inf_{x\in \mathcal{M}}\delta_{\mathbb{P},m,c}(x)>0$.
\end{proposition}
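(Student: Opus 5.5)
The plan is a compactness argument: positivity at each point plus continuity gives a positive infimum. By Proposition \ref{proposition 3.1}, under (A4) the value $\theta_{\mathbb{P},m,c}(x)$ is well defined for every $x\in\mathcal{M}$. By Theorem \ref{Theorem 3.1}, under (A3)--(A4) the map $x\mapsto \delta_{\mathbb{P},m,c}(x)=\sqrt{\theta_{\mathbb{P},m,c}(x)}$ is continuous on $(\mathcal{M},d)$, so $x\mapsto\theta_{\mathbb{P},m,c}(x)$ is continuous as well. Since $\mathcal{M}$ is compact by (A1), the infimum $\theta_{\inf}$ is attained at some $x^*\in\mathcal{M}$. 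It therefore suffices to show $\theta_{\mathbb{P},m,c}(x)>0$ for every $x\in\mathcal{M}$, and not only for $x$ in the support.

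For pointwise positivity I will argue directly from the estimating equation $\Phi_x(\theta)=0$, because Proposition \ref{proposition 3.2} invokes (A2), which is not assumed here. Fix $x$ and take $\theta\le 0$. Then $\mathbb{S}_x(X)-\theta\ge 0$ almost surely, so $\psi_c(\mathbb{S}_x(X)-\theta)\ge 0$. Moreover, by (A4) the law of $\mathbb{S}_x(X)$ has a density and hence no atom at $0$. Thus $\mathbb{S}_x(X)>0$ almost surely, and on the conditioning event $\{\mathbb{S}_x(X)\le F_x^{-1}(m)\}$, which has probability $m>0$, we get $\psi_c(\mathbb{S}_x(X)-\theta)>0$ almost surely. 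It follows that $\Phi_x(\theta)>0$ for all $\theta\le 0$. Because $\Phi_x$ is non-increasing in $\theta$ (as $\psi_c$ is non-decreasing), the unique root $\theta_{\mathbb{P},m,c}(x)$ must lie in $(0,\infty)$.

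Combining the two steps gives $\theta_{\inf}=\theta_{\mathbb{P},m,c}(x^*)>0$. Consequently $\inf_{x}\delta_{\mathbb{P},m,c}(x)=\sqrt{\theta_{\inf}}>0$.

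The main obstacle is justifying that the infimum over the whole of $\mathcal{M}$ is attained, since this rests entirely on the continuity in Theorem \ref{Theorem 3.1}. If one prefers not to rely on that, I would give a quantitative fallback. By (A4), $\Pr(\mathbb{S}_x(X)\le t)\le h_{\sup}t$, so conditionally on the trimming event $\Pr(\mathbb{S}_x\le t\mid\cdot)\le h_{\sup}t/m$ uniformly in $x$. Now choose $\theta=t_0$ small. The conditional mass above $2t_0$ is at least $1-2h_{\sup}t_0/m$, and there $\psi_c\ge\min(c,t_0)$. The mass below contributes at least $-\min(c,t_0)\,h_{\sup}\,2t_0/m$. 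For $t_0$ small, depending only on $h_{\sup}$, $m$ and $c$, this makes $\Phi_x(t_0)>0$. Monotonicity then gives $\theta_{\mathbb{P},m,c}(x)>t_0$ for all $x$, so $\theta_{\inf}\ge t_0>0$ without needing continuity at all.
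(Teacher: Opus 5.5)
Your main argument is the paper's argument: continuity of $x\mapsto\theta_{\mathbb{P},m,c}(x)$ from Theorem~\ref{Theorem 3.1}, then compactness of $\mathcal{M}$ and the extreme value theorem, then pointwise positivity. The only difference is how you get pointwise positivity. The paper imports it from Proposition~\ref{proposition 3.2}, whose positivity step uses (A3) to rule out an atom of $X$ at $x$. You instead rule out an atom of $\mathbb{S}_x(X)$ at $0$ via (A4), and that step is correct. You are right that Proposition~\ref{proposition 3.2} is stated under (A2), which is not a hypothesis here, although its positivity part never uses (A2).

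Your fallback is a genuinely different and stronger route. Splitting at $2t_0$ as you do gives $\Phi_x(t_0)\ge \min(c,t_0)\bigl(1-4h_{\sup}t_0/m\bigr)$. This is positive for every $t_0<m/(4h_{\sup})$, so monotonicity of $\Phi_x$ yields the explicit uniform bound $\theta_{\inf}\ge m/(4h_{\sup})$, and the threshold does not depend on $c$. This uses (A4) alone: it needs neither compactness nor (A3), and it does not rely on Theorem~\ref{Theorem 3.1}, whose proof leans on an implicit-function-theorem argument over a metric space. It also replaces the qualitative statement $\theta_{\inf}>0$ with a concrete constant. That constant enters directly into the Lipschitz constant $L_*$ of Theorem~\ref{Theorem 3.2} and the bias constant $C^*$ of Theorem~\ref{Theorem 4.3}, which the compactness argument leaves implicit.
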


\noindent \textbf{Proof:} See the \hyperref[Proof of proposition 3.3]{\textcolor{blue}{Proof of proposition 3.3}} in \hyperref[Appendix A]{\textcolor{blue} Appendix A}.

\hfill $\blacksquare$



To study stability of the persistence diagrams induced by the Huberized DTM filtration, the filtration function must be lipschitz continuous, which ensures that small spatial perturbations in $\mathcal{M}$ lead to small changes in the filtration values. We therefore establish Lipschitz continuity of $\delta_{\mathbb{P},m,c}$, which provides the regularity needed for the subsequent stability analysis of the associated persistence diagrams, following the stability framework of \textcite{cohen2005stability}.

\begin{theorem}
\label{Theorem 3.2}
Under Assumptions (A1)-(A4), for every $m\in (0,1)$ and $c>0$, the map $x\mapsto\delta_{\mathbb{P},m,c}(x)$ defined in \eqref{eq:3.8} is Lipschitz continuous on $(\mathcal{M},d)$; that is,
there exists a constant $L_*>0$ such that $$\left|\delta_{\mathbb{P},m,c}(x)-\delta_{\mathbb{P},m,c}(y)\right|\leq L_*~ d(x,y),$$ for all $x,y\in \mathcal{M}$.
\end{theorem}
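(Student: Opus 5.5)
We will prove Lipschitz continuity of $\theta_{\mathbb{P},m,c}(\cdot)$ first and then transfer it to $\delta_{\mathbb{P},m,c}=\sqrt{\theta_{\mathbb{P},m,c}}$. For the transfer, Proposition \ref{proposition 3.3} gives $\theta_{\inf}>0$. Hence
\[
|\delta_{\mathbb{P},m,c}(x)-\delta_{\mathbb{P},m,c}(y)|=\frac{|\theta_{\mathbb{P},m,c}(x)-\theta_{\mathbb{P},m,c}(y)|}{\sqrt{\theta_{\mathbb{P},m,c}(x)}+\sqrt{\theta_{\mathbb{P},m,c}(y)}}\le \frac{1}{2\sqrt{\theta_{\inf}}}\,|\theta_{\mathbb{P},m,c}(x)-\theta_{\mathbb{P},m,c}(y)|.
\]
It therefore suffices to find $L_\theta$ with $|\theta_{\mathbb{P},m,c}(x)-\theta_{\mathbb{P},m,c}(y)|\le L_\theta\, d(x,y)$, and then set $L_*=L_\theta/(2\sqrt{\theta_{\inf}})$.

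To bound the $\theta$-increment, we rewrite the estimating equation in quantile form. Since $F_x$ is continuous by (A4), $U:=F_x(\mathbb{S}_x(X))$ is uniform on $(0,1)$, and conditioning on $\{\mathbb{S}_x(X)\le F_x^{-1}(m)\}$ gives
\[
\Phi_x(\theta)=\frac{1}{m}\int_0^m \psi_c\bigl(F_x^{-1}(u)-\theta\bigr)\,du .
\]
Write $\theta_x:=\theta_{\mathbb{P},m,c}(x)$ and $\theta_y:=\theta_{\mathbb{P},m,c}(y)$. Using $\Phi_x(\theta_x)=0=\Phi_y(\theta_y)$, we decompose
\[
0=\bigl[\Phi_x(\theta_x)-\Phi_x(\theta_y)\bigr]+\bigl[\Phi_x(\theta_y)-\Phi_y(\theta_y)\bigr].
\]
For the second bracket, $\psi_c$ is $1$-Lipschitz and Proposition \ref{proposition 2.3} gives $|F_x^{-1}(u)-F_y^{-1}(u)|\le 2D\,d(x,y)$ uniformly in $u$. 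Hence $|\Phi_x(\theta_y)-\Phi_y(\theta_y)|\le 2D\,d(x,y)$. For the first bracket, the mean value theorem (with $\psi_c$ absolutely continuous) gives, for $\theta$ between $\theta_x$ and $\theta_y$,
\[
\Phi_x(\theta_x)-\Phi_x(\theta_y)=-(\theta_x-\theta_y)\,\kappa_x(\tilde\theta),\qquad \kappa_x(\theta):=\frac{1}{m}\,\lambda\bigl\{u\in(0,m):|F_x^{-1}(u)-\theta|\le c\bigr\}.
\]
If we can show $\kappa_x(\theta)\ge \kappa_0>0$ uniformly over $x\in\mathcal{M}$ and over $\theta$ between $\theta_x$ and $\theta_y$, then $|\theta_x-\theta_y|\le (2D/\kappa_0)\,d(x,y)$, so $L_\theta=2D/\kappa_0$.

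The main obstacle is this uniform lower bound on $\kappa_x$. By Proposition \ref{proposition 3.2}, every relevant $\theta$ lies in $[0,F^{-1}_x(m)]\cup[0,F^{-1}_y(m)]$. Since $F_x^{-1}$ is increasing, the set $\{u:|F_x^{-1}(u)-\theta|\le c\}$ is an interval. On it, the bound $h_{\sup}$ from (A4) gives $u$-length at least $\min\{m,\ h_{\inf}\cdot\min(c,\cdot)\}$, measured in the $t$-scale as $F_x(\theta+c)-F_x(\theta-c)$ restricted to $[0,F_x^{-1}(m)]$. The lower density bound $h_{\inf}$ from (A4) then yields
\[
F_x\bigl(\min(\theta+c,F_x^{-1}(m))\bigr)-F_x\bigl(\max(\theta-c,0)\bigr)\ge h_{\inf}\min\{c,\ \ell_x(\theta)\},
\]
where $\ell_x(\theta)$ is the length of $[\max(\theta-c,0),\min(\theta+c,F_x^{-1}(m))]$.

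The delicate case is $\theta$ near the endpoints, for example $\theta_y$ slightly above $F_x^{-1}(m)$. There we use $|F_x^{-1}(m)-F_y^{-1}(m)|\le 2D\,d(x,y)$, restrict to $d(x,y)\le r_0$ for a small $r_0$, and use $F_x^{-1}(m)\ge \theta_{\inf}>0$. This guarantees an interval of length at least $\min(c,\theta_{\inf})/2$ inside the support, so $\kappa_0=h_{\inf}\min(c,\theta_{\inf})/(2m)$ works locally. For $d(x,y)>r_0$, the bound follows trivially from boundedness: $|\delta_{\mathbb{P},m,c}(x)-\delta_{\mathbb{P},m,c}(y)|\le 2(m/a)^{1/b}\le \bigl(2(m/a)^{1/b}/r_0\bigr)\,d(x,y)$. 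Taking the maximum of the local and global constants gives $L_*$.
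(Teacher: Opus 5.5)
Your proof is correct, at the paper's level of rigor, and it takes a genuinely different route in both main steps.

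\textbf{Perturbation in $x$.} The paper works on the sample space. It splits $|\Phi_x(\theta)-\Phi_y(\theta)|$ into a Lipschitz term and a truncation term $\frac{c}{m}\mathbb{P}(A_x\triangle A_y)$. It then bounds the symmetric difference using Propositions~\ref{proposition 2.2}--\ref{proposition 2.3} together with $h_{\sup}$, arriving at $L_1=2D(1+8c\,h_{\sup})/m$. Your quantile representation places both truncated laws on the common index set $(0,m)$. The moving truncation therefore disappears, and Proposition~\ref{proposition 2.3} alone gives the sharper constant $2D$, with no use of $h_{\sup}$.

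\textbf{Inversion step.} The paper applies the mean value theorem and then Corollary~\ref{Corollary 3.1}, whose $\beta$ comes non-constructively from continuity and compactness. That corollary bounds $\partial_\theta G$ only at the root $\theta_{\mathbb{P},m,c}(x)$, not at the intermediate point $\xi$. So the paper's claim that the bound holds for all $\theta\in\Theta$ is not actually delivered; indeed $\partial_\theta\Phi_x(\theta)=0$ once $\theta>F_x^{-1}(m)+c$. Your approach has three pieces: localization to $d(x,y)\le r_0$ (any $r_0\le c/(4D)$ works), an explicit $\kappa_0$ valid on the whole segment between $\theta_x$ and $\theta_y$, and the trivial bound for distant pairs. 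This is exactly what closes the paper's gap. It also yields an explicit $L_*$ and uses the continuity theory only through Proposition~\ref{proposition 3.3}.

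\textbf{Caveat on the support.} Like the paper's Lemma~\ref{Lemma 3.2}, you lower-bound the mass of $[\max(\theta-c,0),\min(\theta+c,F_x^{-1}(m))]$ by $h_{\inf}$ times its length. This presumes the interval lies in $\mathcal{S}_{\mu_x}$. For $x\notin\mathcal{S}_{\mathbb{P}}$ one has $\mathbb{S}_x(X)\ge s_x:=d(x,\mathcal{S}_{\mathbb{P}})^2>0$ almost surely, so the left endpoint should be $\max(\theta-c,s_x)$. Your case analysis survives this change:
\begin{itemize}
\item $\theta_{\mathbb{P},m,c}(x)\ge s_x$;
\item the map $x\mapsto s_x$ is $2D$-Lipschitz;
\item $F_x^{-1}(m)-s_x\ge m/h_{\sup}$ can replace $\theta_{\inf}$, giving interval length at least $\min\{c-2Dr_0,\,m/h_{\sup}\}$.
\end{itemize}
One still needs $[s_x,F_x^{-1}(m)]\subseteq\mathcal{S}_{\mu_x}$, which the paper also assumes tacitly.

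\textbf{Minor points.} The sentence crediting $h_{\sup}$ for the lower bound on the $u$-length should cite $h_{\inf}$. For distant pairs, the bound $\delta_{\mathbb{P},m,c}\le D$ is safer than $(m/a)^{1/b}$, since (A2) controls $q_m(x)$ only for $x\in\mathcal{S}_{\mathbb{P}}$.
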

\noindent \textbf{Proof:} See the \hyperref[Proof of Theorem 3.2]{\textcolor{blue}{Proof of Theorem 3.2}} in \hyperref[Appendix B]{\textcolor{blue} Appendix B}. 


\hfill $\blacksquare$

\begin{remark}
    As $\delta_{\mathbb{P},m,c}$ is Lipschitz continuous on the compact space $\mathcal{M}$, it has finitely many topological critical values in view of proposition 3.5, \textcite{buchet2016efficient}.
    Consequently, its sublevel sets exhibit finitely generated homology groups that change at only finitely many levels.
    This property ensures that $\delta_{\mathbb{P},m,c}$ is suitable for defining the weighted Vietoris–Rips filtration introduced in the following.
\end{remark}
\subsection{Huberized DTM Filtration}\label{section 3.5}
The regularity properties established in \hyperref[section 3.4.1]{\textcolor{blue}{Section \ref{section 3.4.1}}} motivate the use of $\delta_{\mathbb{P},m,c}$ as a weight function in the weighted Vietoris-Rips filtration, as used in \textcite{anai2020dtm}. Following the weighted Vietoris–Rips 
filtration framework, we now define the corresponding filtration. Let $\delta_{\mathbb{P},m,c}: \mathcal{M} \rightarrow [0,\infty) $ 
be the population Huberized DTM defined in \eqref{eq:3.8}.
For each $x \in \mathcal{M}$ and filtration parameter $t\geq 0$, define the scale-dependent radius 
\begin{align*}
    r_x(t)=
    \begin{cases}
-\infty, & t<\delta_{\mathbb{P},m,c}(x),\\[1ex]
\left(t^2-\delta_{\mathbb{P},m,c}^2(x)\right)^{1/2},
& t\ge\delta_{\mathbb{P},m,c}(x),
\end{cases}
\end{align*}
which determines the radius of the weighted ball associated with $x$ at scale $t$. The weighted Vietoris-Rips complex at scale $t$  is defined as
\begin{align}
\mathbb{VR}_{t}\Big( \mathcal{S}_{\mathbb{P}},\delta_{\mathbb{P},m,c} \Big) = \bigg\{ \sigma_{\kappa}^{(i)} =& [x_{i_0}, x_{i_1}, \cdots, x_{i_{\kappa}}]: \nonumber \{ x_{i_0}, x_{i_1}, \cdots, x_{i_{\kappa}}\} \subseteq \mathcal{S}_{\mathbb{P}}, \nonumber \\ & \qquad \qquad
d(x_{i_{j}},x_{i_{l}})\leq r_{x_{i_j}}(t)+ r_{x_{i_l}}(t),~ \forall j,l \in \{0,\cdots, \kappa\}\bigg\},
\end{align}
where $\sigma_{\kappa}^{(i)}$ denotes the $i$-th $\kappa$-simplex with vertex set $\{x_{i_0}, x_{i_1}, \cdots, x_{i_\kappa}\}\subseteq \mathcal{S}_{\mathbb{P}}$ for $i\in \{0,\cdots,N_\kappa\}$, and $N_\kappa$ denotes the total number of $\kappa$-simplices.
Since $r_x(t)$ is non-decreasing in $t$, every simplex present at scale $t$ remains present at all larger scales. Consequently,
\begin{equation*}
    \mathbb{VR}_{t_i}\Big( \mathcal{S}_{\mathbb{P}},\delta_{\mathbb{P},m,c} \Big)\subseteq \mathbb{VR}_{t_j}\Big( \mathcal{S}_{\mathbb{P}},\delta_{\mathbb{P},m,c} \Big),\quad t_i\leq t_j.
\end{equation*}
Hence, the collection  
\begin{equation}  \mathbb{VR}\Big(\mathcal{S}_{\mathbb{P}},\delta_{\mathbb{P},m,c} \Big):=\bigg\{\mathbb{VR}_{t}\Big( \mathcal{S}_{\mathbb{P}},\delta_{\mathbb{P},m,c} \Big)\bigg\}_{t\geq 0}
\end{equation}
forms a filtration, referred to as the Huberized DTM filtration. 

Analogously, substituting the population functional  $\delta_{\mathbb{P},m,c}$ with the empirical estimator $\widehat{\delta}_{\mathbb{P}_n,m,c}$ based on the sample $\mathscr{X}=\{X_1,\cdots,X_n\}$ yields the empirical Huberized DTM filtration 
\begin{equation}
    \widehat{\mathbb{VR}}\Big( \mathcal{S}_{\mathbb{P}_n},\delta_{\mathbb{P}_n,m,c} \Big):=\bigg\{\mathbb{VR}_{t}\Big( \mathcal{S}_{\mathbb{P}_n},\widehat{\delta}_{\mathbb{P}_n,m,c} \Big)\bigg\}_{t\geq 0}.
\end{equation}
Finally, for each homological dimension $p\ge0$, $\mathrm{Dgm}_p\big(\delta_{\mathbb{P},m,c}\big)$ and $\widehat{\mathrm{Dgm}}_p\big(\delta_{\mathbb{P}_n,m,c}\big)$ denote the persistence diagrams associated with the population and empirical Huberized DTM filtrations, respectively.

\section{Large Sample Properties} \label{section 4}
In this section, we study the asymptotic behaviour and robustness properties of the empirical Huberized DTM and its associated persistence diagrams. We first prove its uniform consistency over the metric space $\mathcal{M}$
in \hyperref[section 4.1]{\textcolor{blue}{Section \ref{section 4.1}}}. Thereafter, we combine this result with the stability theory of persistence diagrams to establish consistency of the associated persistence diagrams in \hyperref[section 4.2]{\textcolor{blue}{Section \ref{section 4.2}}}. 
 \subsection{Consistency of the Huberized DTM}\label{section 4.1}
 We first establish the uniform consistency of the empirical Huberized DTM over the entire metric space $\mathcal{M}$. This convergence result is subsequently combined with the stability theory of persistence diagrams to establish the consistency of the associated persistence diagrams.
\begin{theorem}\label{Theorem 4.1}
    Under Assumptions (A1)-(A4), for every $m\in (0,1)$ and $c>0$,
\begin{equation*}
    \sup_{x\in \mathcal{M}} \left| \widehat{\delta}_{\mathbb{P}_n,m,c}(x) - \delta_{\mathbb{P},m,c}(x)\right| \xrightarrow{P} 0,
\end{equation*}
as $n \rightarrow \infty$, where $\widehat{\delta}_{\mathbb{P}_n,m,c}(x)$ and $\delta_{\mathbb{P},m,c}(x)$ are same as defined in \eqref{eq:3.5} and \eqref{eq:3.8}, respectively.
\end{theorem}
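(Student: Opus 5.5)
The plan is to reduce the claim to uniform convergence of $\widehat{\theta}_{\mathbb{P}_n,m,c}$ to $\theta_{\mathbb{P},m,c}$. Proposition~\ref{proposition 3.3} gives $\theta_{\mathbb{P},m,c}\geq\theta_{\inf}>0$, and
\[
|\sqrt{u}-\sqrt{v}|\le |u-v|/\sqrt{v}.
\]
Hence
\[
\sup_x|\widehat{\delta}_{\mathbb{P}_n,m,c}(x)-\delta_{\mathbb{P},m,c}(x)|\le \theta_{\inf}^{-1/2}\sup_x|\widehat{\theta}_{\mathbb{P}_n,m,c}(x)-\theta_{\mathbb{P},m,c}(x)|,
\]
so it suffices to prove
\[
\sup_x|\widehat{\theta}_{\mathbb{P}_n,m,c}(x)-\theta_{\mathbb{P},m,c}(x)|\xrightarrow{P}0.
\]
This is an M-estimation argument for zeros of monotone estimating equations, carried out uniformly in $x$. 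Write the empirical score as
\[
\widehat{\Phi}_x(\theta)=\frac1k\sum_{j\le k}\psi_c\bigl(\mathbb{S}_x(X_{(j)})-\theta\bigr),
\]
which is non-increasing in $\theta$, and the population score $\Phi_x(\theta)$ likewise.

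\textbf{Step 1 (identifiability, uniform in $x$).} Under (A4), the conditional law of $\mathbb{S}_x(X)$ given $\mathbb{S}_x(X)\le F_x^{-1}(m)$ has density between $h_{\inf}/m$ and $h_{\sup}/m$ on its support. Therefore
\[
-\tfrac{d}{d\theta}\Phi_x(\theta)=\Pr\bigl(|\mathbb{S}_x-\theta|\le c\mid \cdot\bigr)
\]
is bounded below by a constant $\kappa>0$ on a neighbourhood $|\theta-\theta_{\mathbb{P},m,c}(x)|\le\eta$, uniformly in $x$. The support of $\mathbb{S}_x$ contains $[\,\cdot\,,F_x^{-1}(m)]$ near $\theta_{\mathbb{P},m,c}(x)$ because $\theta_{\mathbb{P},m,c}(x)\in(0,F_x^{-1}(m)]$ by Proposition~\ref{proposition 3.2}, so an interval of length $\min(c,\eta)$ sits inside the support. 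This gives
\[
\Phi_x(\theta_{\mathbb{P},m,c}(x)+\varepsilon)\le-\kappa\varepsilon \quad\text{and}\quad \Phi_x(\theta_{\mathbb{P},m,c}(x)-\varepsilon)\ge\kappa\varepsilon
\]
for all $x$ and small $\varepsilon$. By monotonicity, on the event $\sup_{x,\theta}|\widehat{\Phi}_x(\theta)-\Phi_x(\theta)|<\kappa\varepsilon$ we get $\sup_x|\widehat{\theta}_{\mathbb{P}_n,m,c}(x)-\theta_{\mathbb{P},m,c}(x)|\le\varepsilon$.

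\textbf{Step 2 (uniform convergence of the trimmed score).} Let $\Theta=[0,D^2]$. I will use the decomposition
\[
\frac{k}{n}\widehat{\Phi}_x(\theta)=\mathbb{P}_n\bigl[\psi_c(\mathbb{S}_x-\theta)\mathbbm{1}\{\mathbb{S}_x\le \widehat{F}_x^{-1}(k/n)\}\bigr],
\]
up to ties, which have probability zero by (A4). Here $\widehat{F}_x$ is the empirical CDF. Compare this with
\[
m\,\Phi_x(\theta)=\mathbb{P}\bigl[\psi_c(\mathbb{S}_x-\theta)\mathbbm{1}\{\mathbb{S}_x\le F_x^{-1}(m)\}\bigr].
\]
The difference splits into three terms:
\begin{itemize}
\item[(i)] $(\mathbb{P}_n-\mathbb{P})[\psi_c(\mathbb{S}_x-\theta)\mathbbm{1}\{\mathbb{S}_x\le t\}]$, uniformly over $(x,\theta,t)\in\mathcal{M}\times\Theta\times[0,D^2]$;
\item[(ii)] $c\,\mathbb{P}_n$-mass of $\{\mathbb{S}_x \text{ between } \widehat{F}_x^{-1}(k/n) \text{ and } F_x^{-1}(m)\}$;
\item[(iii)] the factor $k/n\to m$.
\end{itemize}

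For (i), I will show that the class is Glivenko--Cantelli by bracketing. Take a finite $\epsilon$-net of $\mathcal{M}$, which exists by compactness, together with grids of $\Theta$ and $[0,D^2]$. Proposition~\ref{proposition 2.2} gives $|\mathbb{S}_x-\mathbb{S}_y|\le 2Dd(x,y)$, and $\psi_c$ is $1$-Lipschitz. The indicator is bracketed by $\mathbbm{1}\{\mathbb{S}_{x_i}\le t_l\pm 2D\epsilon\}$, whose $\mathbb{P}$-gap is at most $h_{\sup}\cdot O(\epsilon)$ by (A4). So the bracketing numbers are finite for every $\epsilon$, and the uniform GC property follows.

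For (ii), the same bracketing shows $\sup_{x,t}|\widehat{F}_x(t)-F_x(t)|\to0$ a.s. Since $F_x$ has density at least $h_{\inf}$, the inverse is uniformly continuous, so $\sup_x|\widehat{F}_x^{-1}(k/n)-F_x^{-1}(m)|\to0$. The mass between the two quantiles is then controlled by $h_{\sup}$ times this gap plus the GC error.

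\textbf{Main obstacle.} I expect the hardest part to be Step 2(ii) together with the uniform lower slope in Step 1. Both rely on the uniform density bounds in (A4) holding on the relevant portion of the support of $\mathbb{S}_x$, and there the interplay between the random quantile trimming and the sup over $x$ must be handled carefully. If a direct slope bound fails near the endpoints, I would fall back on the continuity of $(x,\theta)\mapsto\Phi_x(\theta)$ (as in Theorem~\ref{Theorem 3.1}) plus compactness. Strict monotonicity of $\Phi_x$ near its zero then gives
\[
\inf_x\min\bigl\{-\Phi_x(\theta_x+\varepsilon),\,\Phi_x(\theta_x-\varepsilon)\bigr\}>0
\]
by a continuity/compactness argument, and this suffices.
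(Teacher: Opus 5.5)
Your argument is correct in substance, but it takes a different route from the paper's.

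\textbf{The paper's route.} The paper works with the objective rather than the score. It shows $\sup_{x\in\mathcal{M},\,\theta\in\Theta}|L_n(x;\theta)-L(x;\theta)|\to0$ a.s. by splitting off two terms:
\begin{itemize}
\item the trimming error, handled by a uniform Glivenko--Cantelli theorem for the VC class $\{\mathbbm{1}\{[d(x,\cdot)]^2\le t\}\}$;
\item a fixed-trimming term, handled by an $L^1(\mathbb{P})$ cover of $\{h_{x,\theta}\}$ together with the SLLN.
\end{itemize}
It then concludes by an argmin argument. Continuity of $x\mapsto\theta_{\mathbb{P},m,c}(x)$ (Theorem~\ref{Theorem 3.1}), joint continuity of $L$, uniqueness of the minimizer and compactness give a separation constant $\gamma_\varepsilon>0$. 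No slope information is used.

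\textbf{Your route and what it buys.} You use that $\widehat\Phi_x$ is non-increasing, so only the score at the two points $\theta_{\mathbb{P},m,c}(x)\pm\varepsilon$ matters. This has three advantages.
\begin{itemize}
\item You never need to know in advance that $\widehat\theta_{\mathbb{P}_n,m,c}(x)$ lies in the compact set on which uniform convergence is proved. The paper's $\Theta=[0,(m/a)^{2/b}]$ is justified by (A2) only for $x\in\mathcal{S}_{\mathbb{P}}$, and $\widehat\theta_{\mathbb{P}_n,m,c}(x)\le\mathbb{S}_x(X_{(k)})$ need not lie in it.
\item Your slope bound gives the quantitative estimate $\sup_x|\widehat\theta_{\mathbb{P}_n,m,c}(x)-\theta_{\mathbb{P},m,c}(x)|\le\kappa^{-1}\sup_{x,\theta}|\widehat\Phi_x(\theta)-\Phi_x(\theta)|$, which would transfer rates.
\item Your $L^1$ brackets, built from Proposition~\ref{proposition 2.2} and $h_{\sup}$, are the right device for the uniform law of large numbers. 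An $L^1(\mathbb{P})$ cover with the SLLN at finitely many centres (the paper's step (T1)) does not by itself control the supremum of empirical averages within a cell; brackets do.
\end{itemize}
Your fallback in Step 1 is the paper's compactness idea transplanted to the score.

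\textbf{Details to tighten.}
\begin{itemize}
\item Ties have probability zero for each fixed $x$, but not simultaneously over $x\in\mathcal{M}$: the bisector of any two sample points produces ties. This is harmless. Replacing $\frac1n\sum_{j\le k}$ by $\mathbb{P}_n[\,\cdot\,\mathbbm{1}\{\mathbb{S}_x\le\mathbb{S}_x(X_{(k)})\}]$ costs at most $c$ times the largest jump of $\widehat F_x$. Since $F_x$ is continuous, that jump is at most $2\sup_t|\widehat F_x(t)-F_x(t)|$, which your Glivenko--Cantelli step already sends to zero uniformly.
\item Enlarge $\Theta$ slightly, say to $[-1,D^2+1]$, so that $\theta_{\mathbb{P},m,c}(x)\pm\varepsilon\in\Theta$ for every $x$.
\item The uniform length in Step 1 is not $\min(c,\eta)$. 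Write $s_x$ for the essential infimum of $\mathbb{S}_x(X)$. Then $m\le h_{\sup}(F_x^{-1}(m)-s_x)$ and $\theta_{\mathbb{P},m,c}(x)\in[s_x,F_x^{-1}(m)]$. Hence for $|\theta-\theta_{\mathbb{P},m,c}(x)|\le\eta$, the set $[\theta-c,\theta+c]\cap[s_x,F_x^{-1}(m)]$ has length at least $\min(c,m/h_{\sup})-\eta$. This gives $\kappa=h_{\inf}(\min(c,m/h_{\sup})-\eta)/m$. Like the paper's Lemma~\ref{Lemma 3.2}, this reads (A4) as a density lower bound on an interval.
\end{itemize}
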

\noindent \textbf{Proof:} See the \hyperref[Proof of Theorem 4.1]{\textcolor{blue}{Proof of Theorem 4.1}} in \hyperref[Appendix B]{\textcolor{blue} Appendix B}.

\hfill $\blacksquare$

\begin{remark}
    \hyperref[Theorem 4.1]{\textcolor{blue} Theorem \ref{Theorem 4.1}} yields the uniform convergence of the empirical Huberized DTM to its population counterpart. Under the weighted Vietoris–Rips framework of \hyperref[section 3.5]{\textcolor{blue} Section \ref{section 3.5}}, this uniform consistency allows us to apply the stability results. 
\end{remark}



\subsection{Huberized DTM based Persistence Diagrams} \label{section 4.2}
We now translate the uniform consistency of the empirical Huberized DTM established in \hyperref[Theorem 4.1]{\textcolor{blue} Theorem \ref{Theorem 4.1}} into consistency of the associated persistence diagrams.
This follows from the stability theory of weighted Vietoris–Rips filtrations developed by \textcite{anai2020dtm}.
 The stability of weighted Vietoris–Rips filtrations is quantified through the interleaving distance, denoted by $d_I$, which measures the discrepancy between two filtrations; see \textcite{oudot2015persistence}. Since the empirical and population filtrations differ through both the filtration-generating function and the underlying support, we recall the corresponding stability results.
 \begin{proposition}[Stability under perturbation of the filtration-generating function, \textcite{anai2020dtm}, Proposition 3] \label{proposition 4.1}
Let $\mathcal{M}$ be a compact metric space, and let $f,g:\mathcal{M}\rightarrow [0,\infty)$ be two weight functions. Then the weighted Vietoris-Rips filtrations generated by $f$ and $g$ satisfy
\begin{equation}
d_I\left(\mathbb{VR}(\mathcal{M},f), \mathbb{VR}(\mathcal{M},g) \right) \leq \|f-g\|_{\infty},
\end{equation}
where $d_I$ denotes the interleaving distance between filtrations.
\end{proposition}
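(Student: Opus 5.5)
The plan is to construct an explicit $\varepsilon$-interleaving between the two weighted Vietoris--Rips filtrations, with $\varepsilon:=\|f-g\|_{\infty}$, made entirely of inclusions of simplicial complexes. Write $r^f_x(t)$ and $r^g_x(t)$ for the scale-dependent radii of Section~\ref{section 3.5}, with $\delta_{\mathbb{P},m,c}$ replaced by $f$ and $g$ respectively. By the definition of the interleaving distance (\textcite{oudot2015persistence}), it suffices to show that for every $t\geq 0$
\begin{equation*}
\mathbb{VR}_t(\mathcal{M},f)\subseteq \mathbb{VR}_{t+\varepsilon}(\mathcal{M},g)
\quad\text{and}\quad
\mathbb{VR}_t(\mathcal{M},g)\subseteq \mathbb{VR}_{t+\varepsilon}(\mathcal{M},f).
\end{equation*}
All maps involved are inclusions, so the commutativity requirements of an interleaving (the composites equal the structure maps $t\mapsto t+2\varepsilon$) hold automatically. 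This gives $d_I\leq \varepsilon$. By symmetry in $f$ and $g$, only the first inclusion needs proof.

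The key step, and the only genuine computation, is a pointwise radius comparison: for every $x\in\mathcal{M}$ and $t\geq 0$,
\begin{equation*}
r^g_x(t+\varepsilon)\geq r^f_x(t).
\end{equation*}
If $t<f(x)$, then $r^f_x(t)=-\infty$ and there is nothing to prove. If $t\geq f(x)$, then $g(x)\leq f(x)+\varepsilon\leq t+\varepsilon$, so $r^g_x(t+\varepsilon)$ is a genuine nonnegative real number. Squaring, the claim becomes
\begin{equation*}
(t+\varepsilon)^2-t^2\geq g(x)^2-f(x)^2 .
\end{equation*}
The left side equals $\varepsilon(2t+\varepsilon)$. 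The right side equals $(g(x)-f(x))(g(x)+f(x))$. If $g(x)\leq f(x)$ the right side is nonpositive and the inequality is trivial. Otherwise the right side is at most $\varepsilon\,(2f(x)+\varepsilon)\leq \varepsilon(2t+\varepsilon)$, using $0\leq g(x)-f(x)\leq\varepsilon$ and $f(x)\leq t$.

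With the radius comparison in hand, the inclusion follows directly from the definition of the complex. Take a simplex $[x_{i_0},\dots,x_{i_\kappa}]\in\mathbb{VR}_t(\mathcal{M},f)$. Every pair of its vertices satisfies
\begin{equation*}
d(x_{i_j},x_{i_l})\leq r^f_{x_{i_j}}(t)+r^f_{x_{i_l}}(t)\leq r^g_{x_{i_j}}(t+\varepsilon)+r^g_{x_{i_l}}(t+\varepsilon).
\end{equation*}
The diagonal case $j=l$ of this condition encodes that each vertex is present, i.e.\ has a finite radius, and it also transfers by the same comparison. Hence the simplex lies in $\mathbb{VR}_{t+\varepsilon}(\mathcal{M},g)$.

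I expect the only delicate point to be the radius comparison, in particular the case split needed to handle the $-\infty$ convention and the sign of $g(x)-f(x)$. The remaining steps are purely formal, since weighted Rips complexes are flag complexes determined by pairwise conditions.
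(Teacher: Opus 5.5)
Your proof is correct. The paper does not prove this proposition itself; it imports it from \textcite{anai2020dtm}. Your argument, the pointwise radius comparison $r^g_x(t+\varepsilon)\geq r^f_x(t)$, which yields the inclusions $\mathbb{VR}_t(\mathcal{M},f)\subseteq\mathbb{VR}_{t+\varepsilon}(\mathcal{M},g)$ and symmetrically and hence an $\varepsilon$-interleaving by inclusions, is the standard proof of that result, and your case split (the $-\infty$ convention via the diagonal condition $j=l$, and the sign of $g(x)-f(x)$) is complete.
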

\begin{proposition}[Stability under perturbation of the support, \textcite{anai2020dtm}, Proposition 4] \label{proposition 4.2}
Let $\mathcal{X}$ and $\mathcal{Y}$
be compact subsets of $\mathcal{M}$, and let $f:\mathcal{\mathcal{M}} \rightarrow [0,\infty)$ be a $L$-Lipschitz weight function. Then
\begin{equation}
d_I\left(\mathbb{VR}(\mathcal{X},f), \mathbb{VR}(\mathcal{Y},f) \right) \leq (1+L) ~d_H(\mathcal{X},\mathcal{Y}),
\end{equation}
where $d_H$ denotes the Hausdorff distance.
\end{proposition}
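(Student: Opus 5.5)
The plan is to prove this as in \textcite{anai2020dtm}: write down explicit interleaving maps between the two weighted Vietoris--Rips filtrations $\mathbb{VR}(\mathcal{X},f)$ and $\mathbb{VR}(\mathcal{Y},f)$, with shift $\varepsilon' := (1+L)\,\varepsilon$, where $\varepsilon > d_H(\mathcal{X},\mathcal{Y})$ is arbitrary. Letting $\varepsilon \downarrow d_H(\mathcal{X},\mathcal{Y})$ then gives the bound, since $d_I$ is an infimum over admissible shifts.

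Because $\mathcal{X}$ and $\mathcal{Y}$ are compact and $\varepsilon > d_H$, we can fix a (not necessarily continuous) map $\pi:\mathcal{X}\to\mathcal{Y}$ with $d(x,\pi(x))\le\varepsilon$ for all $x$. Define $\pi':\mathcal{Y}\to\mathcal{X}$ symmetrically.

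The key pointwise estimate concerns the radius functions $r_x(t)=(t^2-f(x)^2)^{1/2}$. The aim is to show that whenever $r_x(t)\ge 0$, we have $r_{\pi(x)}(t+\varepsilon')\ge r_x(t)+\varepsilon$.
\begin{itemize}
\item By the Lipschitz property, $f(\pi(x))\le f(x)+L\varepsilon$.
\item It therefore suffices to check that $(t+\varepsilon')^2 - (f(x)+L\varepsilon)^2 \ge (\sqrt{t^2-f(x)^2}+\varepsilon)^2$ for $t\ge f(x)\ge 0$.
\item Expand both sides and use $f(x)\le t$ together with $\sqrt{t^2-f(x)^2}\le t$. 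The left side is at least $t^2-f(x)^2+2t\varepsilon(1+L)+\varepsilon^2(1+L)^2-2f(x)L\varepsilon-L^2\varepsilon^2$, which is at least $t^2-f(x)^2+2t\varepsilon+\varepsilon^2$. That is at least the right side.
\end{itemize}
This elementary inequality is the one place where care is needed: one must also treat the case $r_x(t)=-\infty$, where the vertex is absent, and confirm the constant is exactly $1+L$ and not larger. I expect it to be the main computational point.

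With this in hand, let $\sigma=[x_0,\dots,x_\kappa]\in\mathbb{VR}_t(\mathcal{X},f)$. For every pair $j,l$ we have $d(\pi(x_j),\pi(x_l))\le d(x_j,x_l)+2\varepsilon\le r_{x_j}(t)+r_{x_l}(t)+2\varepsilon\le r_{\pi(x_j)}(t+\varepsilon')+r_{\pi(x_l)}(t+\varepsilon')$. Each vertex $\pi(x_j)$ is present at level $t+\varepsilon'$ because its radius is nonnegative there. Hence $\pi$ induces simplicial maps $\mathbb{VR}_t(\mathcal{X},f)\to\mathbb{VR}_{t+\varepsilon'}(\mathcal{Y},f)$, and $\pi'$ induces maps in the other direction.

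It remains to check that the composites $\pi'\circ\pi$ and $\pi\circ\pi'$ are contiguous to the inclusions $\mathbb{VR}_t\hookrightarrow\mathbb{VR}_{t+2\varepsilon'}$, and likewise that the maps commute with the structure inclusions up to contiguity. The argument is the same: for a simplex $\sigma$ at level $t$, the vertex set $\sigma\cup\pi'\pi(\sigma)$ spans a simplex at level $t+2\varepsilon'$. This follows because $d(x_j,\pi'\pi(x_l))\le d(x_j,x_l)+2\varepsilon$, combined with the radius estimate applied twice, noting that $r_x(t+2\varepsilon')\ge r_x(t)+2\varepsilon$ holds for $x$ itself by the same inequality with $L\ge0$.

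Contiguous maps induce equal maps on homology, so the persistence modules are $\varepsilon'$-interleaved. This yields $d_I\le(1+L)\,d_H(\mathcal{X},\mathcal{Y})$, and via the isometry theorem the same bound holds for the bottleneck distance of the diagrams.
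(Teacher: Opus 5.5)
Your proof is correct. Since the paper gives no argument of its own for this proposition (it is quoted from \textcite{anai2020dtm}), your write-up serves as a self-contained justification.

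The cited source rests on the same pointwise radius inequality, but phrases it for the union-of-balls (weighted \v{C}ech) filtration. There, $d(x,\pi(x))\le\varepsilon$ together with $r_{\pi(x)}(t+\varepsilon')\ge r_x(t)+\varepsilon$ gives literal inclusions of balls in the ambient space, so no simplicial maps or contiguity are needed. For Vietoris--Rips complexes on the different vertex sets $\mathcal{X}$ and $\mathcal{Y}$, which is how the paper uses the result, no such inclusion exists. Your correspondence-plus-contiguity construction is what turns the radius inequality into a homology-level interleaving. Note that the squares with the structure maps commute exactly, not merely up to contiguity, because $\pi$ is a single vertex map at all levels.

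Your algebra is lossy, though. Cauchy--Schwarz gives $t\sqrt{1+L^2}\ge L f(x)+\sqrt{t^2-f(x)^2}$, hence $\bigl(t+\sqrt{1+L^2}\,\varepsilon\bigr)^2-\bigl(f(x)+L\varepsilon\bigr)^2\ge\bigl(\sqrt{t^2-f(x)^2}+\varepsilon\bigr)^2$. The identical argument therefore yields the shift $\sqrt{1+L^2}\,\varepsilon$, so the constant $1+L$ in the statement is not sharp.

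Finally, your closing sentence on the bottleneck distance goes beyond the statement. It requires $q$-tameness of the persistence modules, which must be checked separately when $\mathcal{X}$ is an infinite compact set.
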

The following theorem is an consequence of \hyperref[Theorem 4.1]{\textcolor{blue}{Theorem \ref{Theorem 4.1}}} and the preceding stability results.

\begin{theorem}\label{Theorem 4.2}
Under Assumptions (A1)-(A4), for every homological dimension $p\geq0$,
\begin{equation}\label{eq:4.19}
        d_B\left( \widehat{\mathrm{Dgm}}_p\big(\delta_{\mathbb{P}_n,m,c}\big), \mathrm{Dgm}_p\left(\delta_{\mathbb{P},m,c}\right) \right) \leq \|\widehat{\delta}_{\mathbb{P}_n,m,c} - \delta_{\mathbb{P},m,c} \|_{\infty}+ (1+L_*)~ d_H\left( \mathcal{S}_{\mathbb{P}_n},\mathcal{S}_{\mathbb{P}}\right),
    \end{equation}
where $\widehat{\mathrm{Dgm}}_p\big(\delta_{\mathbb{P}_n,m,c}\big)$ and $\mathrm{Dgm}_p\left(\delta_{\mathbb{P},m,c}\right)$ denote the persistence diagrams in homological dimension $p$, associated with the weighted Vietoris–Rips filtrations induced by $\widehat{\delta}_{\mathbb{P}_n,m,c}$ and $\delta_{\mathbb{P},m,c}$, respectively, as defined in \hyperref[section 3.5]{\textcolor{blue} Section \ref{section 3.5}}; $L_*$ is s the Lipschitz constant of $\delta_{\mathbb{P},m,c}$ introduced in \hyperref[Theorem 3.2]{\textcolor{blue}{Theorem \ref{Theorem 3.2}}}; and 
$d_B$, $d_H$ denote the bottleneck and Hausdorff distances, respectively.
    \noindent Consequently,
    \begin{equation}\label{eq:4.20}
        d_B\left( \widehat{\mathrm{Dgm}}_p\big(\delta_{\mathbb{P}_n,m,c} \big), \mathrm{Dgm}_p\left(\delta_{\mathbb{P},m,c}\right)\right) \xrightarrow{P} 0, \quad n\rightarrow \infty.
    \end{equation}
\end{theorem}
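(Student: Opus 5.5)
Conceptually the proof has three parts: a triangle inequality for the interleaving distance through an intermediate filtration, the isometry theorem to pass to bottleneck distance, and a probabilistic argument for the Hausdorff term.

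\textbf{Step 1: the intermediate filtration.} I would compare the two filtrations through $\mathbb{VR}(\mathcal{S}_{\mathbb{P}_n},\delta_{\mathbb{P},m,c})$. This uses the empirical support but the population weight function. The interleaving distance is a pseudometric, so
\begin{align*}
d_I\Big(\mathbb{VR}(\mathcal{S}_{\mathbb{P}_n},\widehat{\delta}_{\mathbb{P}_n,m,c}),\mathbb{VR}(\mathcal{S}_{\mathbb{P}},\delta_{\mathbb{P},m,c})\Big) &\leq d_I\Big(\mathbb{VR}(\mathcal{S}_{\mathbb{P}_n},\widehat{\delta}_{\mathbb{P}_n,m,c}),\mathbb{VR}(\mathcal{S}_{\mathbb{P}_n},\delta_{\mathbb{P},m,c})\Big)\\
&\quad + d_I\Big(\mathbb{VR}(\mathcal{S}_{\mathbb{P}_n},\delta_{\mathbb{P},m,c}),\mathbb{VR}(\mathcal{S}_{\mathbb{P}},\delta_{\mathbb{P},m,c})\Big).
\end{align*}

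\textbf{Step 2: bounding each term.} For the first term I apply Proposition \ref{proposition 4.1} with the finite metric space $\mathcal{S}_{\mathbb{P}_n}$ in place of $\mathcal{M}$. It gives the bound $\sup_{x\in\mathcal{S}_{\mathbb{P}_n}}|\widehat{\delta}_{\mathbb{P}_n,m,c}(x)-\delta_{\mathbb{P},m,c}(x)|$, which is at most $\|\widehat{\delta}_{\mathbb{P}_n,m,c}-\delta_{\mathbb{P},m,c}\|_\infty$ over $\mathcal{M}$. For the second term, Theorem \ref{Theorem 3.2} makes $\delta_{\mathbb{P},m,c}$ an $L_*$-Lipschitz weight function on $\mathcal{M}$. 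Both $\mathcal{S}_{\mathbb{P}_n}$ (finite) and $\mathcal{S}_{\mathbb{P}}$ (compact by (A1)) are compact subsets of $\mathcal{M}$, so Proposition \ref{proposition 4.2} bounds the second term by $(1+L_*)\,d_H(\mathcal{S}_{\mathbb{P}_n},\mathcal{S}_{\mathbb{P}})$.

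\textbf{Step 3: passing to persistence diagrams.} The filtrations are built on compact subsets with continuous weights, so their persistence modules are q-tame (as in \textcite{anai2020dtm}). The isometry/stability theorem (\textcite{chazal2016structure}) then gives that $d_B$ between the diagrams is at most the interleaving distance of the filtrations in every dimension $p$. This yields \eqref{eq:4.19}.

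\textbf{Step 4: convergence \eqref{eq:4.20}.} The first term in \eqref{eq:4.19} tends to $0$ in probability by Theorem \ref{Theorem 4.1}. For the second term, note first that almost surely $\mathcal{S}_{\mathbb{P}_n}\subseteq\mathcal{S}_{\mathbb{P}}$, so $d_H(\mathcal{S}_{\mathbb{P}_n},\mathcal{S}_{\mathbb{P}})=\sup_{x\in\mathcal{S}_{\mathbb{P}}}d(x,\mathcal{S}_{\mathbb{P}_n})$. Fix $\varepsilon<\varepsilon_0$. By compactness, cover $\mathcal{S}_{\mathbb{P}}$ with $N_\varepsilon\lesssim \varepsilon^{-b}$ balls $B_{\varepsilon/2}(y_i)$ centred in $\mathcal{S}_{\mathbb{P}}$. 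By (A2), each ball has mass at least $a(\varepsilon/2)^b$. A union bound gives
\[
\Pr\big(d_H>\varepsilon\big)\leq N_\varepsilon\big(1-a(\varepsilon/2)^b\big)^n\to 0 .
\]
Combining the two terms via the union bound completes the proof.

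The main obstacle is Step 3, not the computation: justifying that the weighted Vietoris--Rips filtrations on the possibly infinite set $\mathcal{S}_{\mathbb{P}}$ yield well-defined diagrams, so that $d_B\leq d_I$ applies. I would handle this by invoking q-tameness of Vietoris--Rips type filtrations of compact metric spaces with bounded weights, as done in \textcite{anai2020dtm}, instead of the tameness hypothesis of Theorem \ref{theorem 2.1}.
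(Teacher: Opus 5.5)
Your proposal is correct and follows the paper's proof essentially step for step. It uses the same intermediate filtration $\mathbb{VR}(\mathcal{S}_{\mathbb{P}_n},\delta_{\mathbb{P},m,c})$, Propositions \ref{proposition 4.1} and \ref{proposition 4.2} for the two terms, the algebraic stability bound $d_B\le d_I$, and Theorem \ref{Theorem 4.1} for the functional term. Your Step 4 covering argument via (A2) supplies the convergence $d_H(\mathcal{S}_{\mathbb{P}_n},\mathcal{S}_{\mathbb{P}})\xrightarrow{P}0$, which the paper only asserts; read the ball mass there as $\min(1,a(\varepsilon/2)^b)$. Your q-tameness remark covers a well-definedness point that the paper leaves implicit.
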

\noindent\textbf{Proof:} Recall from \hyperref[section 3.5]{\textcolor{blue} Section \ref{section 3.5}} that $\mathbb{VR}\Big(\mathcal{S}_{\mathbb{P}},\delta_{\mathbb{P},m,c} \Big)$ and
$\widehat{\mathbb{VR}}\Big( \mathcal{S}_{\mathbb{P}_n},\delta_{\mathbb{P}_n,m,c} \Big)$ denote the weighted Vietoris–Rips filtrations induced by $\delta_{\mathbb{P},m,c}$ on $\mathcal{S}_{\mathbb{P}}$ and by $\widehat{\delta}_{\mathbb{P}_n,m,c}$ on $\mathcal{S}_{\mathbb{P}_n}$, respectively. Using the triangle inequality for the interleaving distance, we have
\small
\begin{align}
 d_I\left( \widehat{\mathbb{VR}}\Big( \mathcal{S}_{\mathbb{P}_n} ,\delta_{\mathbb{P}_n,m,c} \Big),  \mathbb{VR}\Big(\mathcal{S}_{\mathbb{P}},\delta_{\mathbb{P},m,c} \Big) \right)
&\leq d_I\left( \widehat{\mathbb{VR}}\Big(\mathcal{S}_{\mathbb{P}_n},\delta_{\mathbb{P}_n,m,c} \Big), \widehat{\mathbb{VR}}\Big(\mathcal{S}_{\mathbb{P}_n},\delta_{\mathbb{P},m,c} \Big) \right) \nonumber\\
&\quad + d_I\left(\widehat{\mathbb{VR}}\Big(\mathcal{S}_{\mathbb{P}_n},\delta_{\mathbb{P},m,c} \Big), \mathbb{VR}\Big(\mathcal{S}_{\mathbb{P}},\delta_{\mathbb{P},m,c} \Big) \right).
\end{align}
\normalsize
\hyperref[proposition 4.1]{\textcolor{blue}{Proposition \ref{proposition 4.1}}} and \hyperref[proposition 4.2]{\textcolor{blue}{Proposition \ref{proposition 4.2}}} yields
\begin{align}
    d_I\left( \widehat{\mathbb{VR}}\Big(\mathcal{S}_{\mathbb{P}_n},\delta_{\mathbb{P}_n,m,c} \Big),  \mathbb{VR}\Big(\mathcal{S}_{\mathbb{P}},\delta_{\mathbb{P},m,c} \Big) \right) \leq \|\widehat{\delta}_{\mathbb{P}_n,m,c} - \delta_{\mathbb{P},m,c} \|_{\infty} + (1+L_{*})~ d_H\left(\mathcal{S}_{\mathbb{P}_n},\mathcal{S}_{\mathbb{P}}\right),
\end{align}
where $L_{*}$ is the Lipschitz constant of $\delta_{\mathbb{P},m,c}$. The desired bottleneck bound follows  from the Algebraic Stability Theorem (see, \textcite{chazal2009proximity}, Theorem 4.4), which states that $ d_B \leq d_I$, thereby proving the inequality in \eqref{eq:4.19}. Finally, since
$ \left\| \widehat{\delta}_{P_n,m,c} - \delta_{P,m,c} \right\|_{\infty} \xrightarrow{P} 0 $ as $n\rightarrow \infty$ in view of the assertion in \hyperref[Theorem 4.1]{\textcolor{blue}{Theorem \ref{Theorem 4.1}}} and
$ d_H\left( \mathcal{S}_{\mathbb{P}_n}, \mathcal{S}_{\mathbb{P}} \right) \xrightarrow{P} 0$ as $n\rightarrow \infty$, the convergence in \eqref{eq:4.20} follows from Slutsky's theorem. 

\hfill $\blacksquare$

\hyperref[Theorem 4.2]{\textcolor{blue}{Theorem \ref{Theorem 4.2}}} shows that consistency of the filtration estimator implies consistency of the assocated persistence diagrams. Consequently, within this framework, robust topological inference reduces to the robust estimation of the underlying filtration function.

\begin{corollary}[Finite-Sample Bottleneck Bound]\label{corollary 4.1}
Under Assumptions (A1)-(A4), if 
$$\|\widehat{\delta}_{\mathbb{P}_n,m,c} - \delta_{\mathbb{P},m,c} \|_{\infty}=\mathcal{O}_{\mathbb{P}}(r_n),$$ and $$d_H\left(\mathcal{S}_{\mathbb{P}_n}, \mathcal{S}_{\mathbb{P}}\right)=\mathcal{O}_{\mathbb{P}}(s_n),$$
for some deterministic sequence $r_n, ~s_n \rightarrow 0$, then, for every homological dimension $p\geq0$,
\begin{equation*}
    d_B\left( \widehat{\mathrm{Dgm}}_k\big(\delta_{\mathbb{P}_n,m,c} \big), \mathrm{Dgm}_k\left(\delta_{\mathbb{P},m,c}\right)\right)=\mathcal{O}_{\mathbb{P}}(r_n+s_n).
\end{equation*}
\end{corollary}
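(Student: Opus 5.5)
The proof follows directly from the deterministic inequality \eqref{eq:4.19} of Theorem \ref{Theorem 4.2}, which holds for every sample realization. Writing $A_n:=\|\widehat{\delta}_{\mathbb{P}_n,m,c} - \delta_{\mathbb{P},m,c} \|_{\infty}$ and $B_n:=d_H(\mathcal{S}_{\mathbb{P}_n},\mathcal{S}_{\mathbb{P}})$, that inequality gives, pointwise on the sample space,
\begin{equation*}
d_B\left( \widehat{\mathrm{Dgm}}_p\big(\delta_{\mathbb{P}_n,m,c} \big), \mathrm{Dgm}_p\left(\delta_{\mathbb{P},m,c}\right)\right) \leq A_n + (1+L_*)\,B_n ,
\end{equation*}
where $L_*$ is the deterministic Lipschitz constant from Theorem \ref{Theorem 3.2}. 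It does not depend on $n$. The claim therefore reduces to an elementary stochastic-order computation.

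Fix $\varepsilon>0$. By hypothesis there are constants $R_1,N_1$ with $\Pr(A_n/r_n\ge R_1)<\varepsilon/2$ for $n\ge N_1$. Likewise there are $R_2,N_2$ with $\Pr(B_n/s_n\ge R_2)<\varepsilon/2$ for $n\ge N_2$. Set $R:=R_1+(1+L_*)R_2$ and $N:=\max(N_1,N_2)$.

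Since $r_n,s_n>0$, we have $r_n\le r_n+s_n$ and $s_n\le r_n+s_n$. So on the event $\{A_n<R_1r_n\}\cap\{B_n<R_2s_n\}$ the right-hand side above is strictly below $R(r_n+s_n)$. A union bound then gives
\begin{equation*}
\Pr\!\left(\frac{d_B}{r_n+s_n}\ge R\right)<\varepsilon, \qquad n\ge N,
\end{equation*}
which is exactly the definition of $\mathcal{O}_{\mathbb{P}}(r_n+s_n)$ used in the Notation section. The statement writes $\mathrm{Dgm}_k$, but this is the same homological dimension $p$, and the argument is uniform in $p$.

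There is no real obstacle. The only points needing care are:
\begin{itemize}
\item that \eqref{eq:4.19} holds deterministically, i.e.\ surely and not merely in probability;
\item that $L_*$ is a fixed constant, so that multiplying by $(1+L_*)$ preserves the stochastic order;
\item measurability of the bottleneck distance as a random variable. If this is in doubt, one can work with outer probability, and the same bound holds.
\end{itemize}
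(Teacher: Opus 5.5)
Your proof is correct and follows the same route as the paper: both deduce the rate from the deterministic inequality \eqref{eq:4.19} of Theorem \ref{Theorem 4.2} via a union bound on the two error terms. Your version is the more careful of the two. The paper's displayed inclusion uses the bare thresholds $r_n+s_n$, $r_n/2$, $s_n/2$, which the $\mathcal{O}_{\mathbb{P}}$ hypotheses do not control directly, whereas your scaling by $R_1$, $R_2$ and $R=R_1+(1+L_*)R_2$ is exactly what the definition of $\mathcal{O}_{\mathbb{P}}$ requires.
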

\noindent\textbf{Proof:} The result follows from the inclusion
\small
\begin{align*}
    \left\{d_B\left( \widehat{\mathrm{Dgm}}_p\big(\delta_{\mathbb{P}_n,m,c} \big), \mathrm{Dgm}_p\left(\delta_{\mathbb{P},m,c}\right)\right)>r_n+s_n \right\}\subseteq & \left\{\|\widehat{\delta}_{\mathbb{P}_n,m,c} - \delta_{\mathbb{P},m,c} \|_{\infty}>\frac{r_n}{2}  \right\} \\
    & \qquad \bigcup \left\{(1+L_{*})~ d_H\left(\mathcal{S}_{\mathbb{P}_n},\mathcal{S}_{\mathbb{P}}\right)>\frac{s_n}{2}  \right\},
\end{align*}
\normalsize
which is a consequence of (\ref{eq:4.19}).
\hfill $\blacksquare$
\begin{remark}
    The rate $r_n$ measures the error in estimating the population Huberized DTM, whereas $s_n$ measures the error in estimating the support in Hausdorff distance. Both rates depend on the metric $d$, the regularity of $\mathbb{P}$, and the geometry of $\mathcal{S}_{\mathbb{P}}$. The corollary therefore shows how estimation errors in the Huberized DTM and the support are reflected in the corresponding persistence diagrams. For example, if $r_n=\mathcal{O}_{\mathbb{P}}(n^{-\frac{1}{2}})$ and $s_n=\mathcal{O}_{\mathbb{P}}(n^{-\frac{1}{3}})$, then the bottleneck distance is order of $\mathcal{O}_{\mathbb{P}}(n^{-\frac{1}{3}})$. Hence, the slower support estimation rate determines the overall topological estimation error.
\end{remark}


 \section{Robustness Study: Sparsity}\label{section 5}
 The preceding results establish the large-sample properties of the Huberized DTM and the associated persistence diagrams under uncontaminated sampling. We now study the effect of sparse contamination on the Huberized DTM and its induced persistence diagrams. In this section, we quantify its robustness under Huber contamination through a maximum-bias analysis and derive the corresponding robustness results for persistence diagrams under sparse contamination. In order to capture this feature, the
 observations are assumed to obtain from well known Huber's contamination model
 \begin{equation}\label{eq:5.1}
   \mathbb{P}_{\alpha} = (1 - \alpha)~ \mathbb{P}_{s} + \alpha~ \mathbb{P}_{c}, \quad \alpha \in \left[0,\frac{1}{2}\right),
\end{equation}
where $\mathbb{P}_{s} \in \mathscr{P}_{a,b}(\mathcal{M})$ satisfies Assumptions (A1)–(A4), while $\mathbb{P}_{c}$ is an arbitrary probability measure. As $\mathbb{P}_{c}$ is arbitrary, $\mathbb{P}_{\alpha}$ may not lie in $\mathscr{P}_{a,b}(\mathcal{M})$. Hence, the previous results may not be directly applicable to sparse data.
Despite this model misspecification, our inferential objective remains to recover the topological structure of the support of $\mathbb{P}_s$, i.e., $\mathcal{S}_{\mathbb{P}_s}$.
 \subsection{Maximum Bias}\label{section 5.1}
 The robustness analysis presented here is based on the maximum bias of the population Huberized DTM under Huber contamination. The maximum bias measures the worst-case contamination-induced bias
of the population Huberized DTM over all contaminating distributions at a given contamination level $\alpha$; see \textcite{Huber1981}, \textcite{Hampel2005} and \textcite{dhar2016study}. 
\begin{definition}[Maximum bias]\label{def 4.1}
    The maximum bias of the population Huberized DTM at contamination level $\alpha$ is defined by
    \begin{equation*}
        B(\alpha):=\sup_{\mathbb{P}_{c}\in \mathscr{P}(\mathcal M)}\|\delta_{\mathbb{P}_{\alpha},m,c}- \delta_{\mathbb{P}_{s},m,c} \|_{\infty},
    \end{equation*}
    where  $\mathscr{P}(\mathcal M)$ denotes the set of all Borel probability measures on $\mathcal{M}$.
\end{definition}
\begin{remark}
    It is an appropriate place to mention that unlike finite-sample error bounds, which quantify randomness due to sampling, the maximum bias is a deterministic measure of robustness that captures only the effect of contamination. Together, they characterize the statistical accuracy and robustness of the Huberized DTM.
\end{remark}
\begin{theorem}[Maximum bias bound]\label{Theorem 4.3}
    Under Assumptions (A1)–(A4), there exists a constant $C^*>0$ such that for every $0\leq \alpha<\frac{1}{2}$,
    \begin{equation*}
        B(\alpha) \leq C^*~\frac{\alpha}{(1-\alpha)},
    \end{equation*}
    where $B(\alpha)$ denotes the maximum bias (see Definition \ref{def 4.1}) of the $\delta_{\mathbb{P},m,c}$ for the model, as defined in \eqref{eq:5.1}.
\end{theorem}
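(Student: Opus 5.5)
The plan is to bound, for each fixed $x\in\mathcal{M}$, the gap $|\theta_{\mathbb{P}_\alpha,m,c}(x)-\theta_{\mathbb{P}_s,m,c}(x)|$ using the Huber estimating equations. Afterwards we pass to the square roots and take the supremum over $x$ and over $\mathbb{P}_c$. For the square-root step, Proposition \ref{proposition 3.3} applied to $\mathbb{P}_s$ gives $\theta_{\inf}>0$. Hence
\begin{equation*}
|\sqrt{u}-\sqrt{v}|\leq |u-v|/\sqrt{\theta_{\inf}}
\end{equation*}
whenever $v\geq\theta_{\inf}$. So it suffices to show $\sup_x|\theta_{\mathbb{P}_\alpha,m,c}(x)-\theta_{\mathbb{P}_s,m,c}(x)|\leq C\alpha/(1-\alpha)$, with $C$ depending only on $c,m,h_{\inf},h_{\sup},D$.

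\textbf{Step 1: decompose the contaminated conditional law.} Write $q_\alpha=F^{-1}_{\alpha,x}(m)$ and $q_s=F^{-1}_{s,x}(m)$ for the $m$-quantiles of $\mathbb{S}_x(X)$ under $\mathbb{P}_\alpha$ and $\mathbb{P}_s$. The conditional law of $\mathbb{S}_x$ given $\mathbb{S}_x\leq q_\alpha$ under $\mathbb{P}_\alpha$ is a mixture. Its signal part has weight $w=(1-\alpha)F_{s,x}(q_\alpha)/m$ and consists of $\mathbb{S}_x$ under $\mathbb{P}_s$ restricted to $[0,q_\alpha]$. Its contamination part has weight $1-w\leq \alpha/m$. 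If $\mathbb{P}_c$ has atoms at the quantile, the mixture definition of the trimmed conditional expectation should be used; I would handle this by splitting the boundary mass.

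Since $\|\psi_c\|_\infty\leq c$, this gives
\begin{equation*}
\Bigl|\Phi_{\mathbb{P}_\alpha,x}(\theta)-\mathbb{E}_{\mathbb{P}_s}[\psi_c(\mathbb{S}_x-\theta)\mid \mathbb{S}_x\leq q_\alpha]\Bigr|\leq 2c\,\alpha/m
\end{equation*}
uniformly in $\theta$. Next, the quantiles satisfy $F_{s,x}(q_\alpha)\in[(m-\alpha)/(1-\alpha),\,m/(1-\alpha)]$, so $|F_{s,x}(q_\alpha)-m|\leq \alpha/(1-\alpha)$. Assumption (A4) then gives $|q_\alpha-q_s|\leq \alpha/((1-\alpha)h_{\inf})$. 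Comparing the two signal conditional expectations truncated at $q_\alpha$ and at $q_s$ changes the conditioning set by mass at most $\alpha/(1-\alpha)$, and $\psi_c$ is bounded. Hence
\begin{equation*}
\sup_\theta|\Phi_{\mathbb{P}_\alpha,x}(\theta)-\Phi_{\mathbb{P}_s,x}(\theta)|\leq C_1\alpha/(1-\alpha),
\end{equation*}
where $C_1$ depends on $c$ and $m$ only.

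\textbf{Step 2: invert the perturbation, which is the main obstacle.} We need a uniform lower bound on the slope of $\theta\mapsto\Phi_{\mathbb{P}_s,x}(\theta)$. This function is non-increasing. Its derivative equals
\begin{equation*}
-\Pr_{\mathbb{P}_s}\bigl(|\mathbb{S}_x-\theta|\leq c \,\big|\, \mathbb{S}_x\leq q_s\bigr),
\end{equation*}
where differentiability comes from (A4), as in Lemma \ref{Lemma 3.1}. Consider $\theta$ in a neighbourhood of $\theta_{\mathbb{P}_s,m,c}(x)\in(0,q_s]$. The interval $[\theta-c,\theta+c]\cap[0,q_s]$ has length at least $\min(c,q_s)$. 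By (A4), $q_s\geq m/h_{\sup}$, so this probability is at least $\kappa:=h_{\inf}\min(c,m/h_{\sup})/m>0$, uniformly in $x$.

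The delicate point is that the slope bound holds only where $\Phi_{\mathbb{P}_s,x}$ is steep, so $\theta_{\mathbb{P}_\alpha,m,c}(x)$ must first be localized. I would argue by monotonicity as follows. At $\theta_s\pm C_1\alpha/((1-\alpha)\kappa)$, the function $\Phi_{\mathbb{P}_s,x}$ has absolute value at least $C_1\alpha/(1-\alpha)$ with the appropriate sign. If the slope bound degenerates far away, I would restrict to $\alpha$ small enough that this shift stays within $[0,D^2]$, where the bound holds. For $\alpha$ bounded away from $0$, the trivial estimate $B(\alpha)\leq 2D$ already gives the claim after enlarging $C^*$, since $\alpha/(1-\alpha)$ is bounded below there.

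\textbf{Step 3: conclude.} Monotonicity then gives $|\theta_{\mathbb{P}_\alpha,m,c}(x)-\theta_{\mathbb{P}_s,m,c}(x)|\leq C_1\alpha/((1-\alpha)\kappa)$. None of the constants depend on $x$ or $\mathbb{P}_c$. Taking suprema over $x$ and $\mathbb{P}_c$ and applying the square-root step yields $B(\alpha)\leq C^*\alpha/(1-\alpha)$ with
\begin{equation*}
C^*=\max\Bigl\{C_1/(\kappa\sqrt{\theta_{\inf}}),\; 2D/\alpha_0\Bigr\},
\end{equation*}
where $\alpha_0$ is the threshold below which the localization in Step 2 applies.
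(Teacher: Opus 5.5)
Your argument is correct, and its inversion step takes a genuinely different route from the paper's. The paper never compares $\Phi_{\mathbb{P}_\alpha,x}$ and $\Phi_{\mathbb{P}_s,x}$ uniformly in $\theta$. Instead it:
\begin{itemize}
\item evaluates the signal equation only at the contaminated root, and uses $\Phi_{\mathbb{P}_\alpha,x}(\theta_{\mathbb{P}_\alpha,m,c}(x))=0$ to get $\bigl|\int_{\{\mathbb{S}_x\le q_\alpha\}}\psi_c\,d\mathbb{P}_s\bigr|\le c\alpha/(1-\alpha)$;
\item bounds the strip between $q_\alpha$ and $q_s$ by $h_{\sup}|q_\alpha-q_s|$ together with Lemma \ref{Lemma 4.1}, so its constant is $C_1=c/m+c\,h_{\sup}/(m\,h_{\inf})$;
\item inverts by one mean value step, bounding $|\partial_\theta\Phi_{\mathbb{P}_s,x}(\xi_x)|$ from below by the $\beta$ of Corollary \ref{Corollary 3.1}.
\end{itemize}
This is shorter and gives the explicit $C^*=C_1/(\beta\sqrt{\theta_{\min}})$ for all $\alpha<1/2$ in one pass. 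However, Corollary \ref{Corollary 3.1} only controls the slope at the signal root itself, not at an intermediate point $\xi_x$. The paper never localizes $\theta_{\mathbb{P}_\alpha,m,c}(x)$ near that root, and $\Phi_{\mathbb{P}_s,x}$ is flat for $\theta>q_s+c$.

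Your route is what makes the inversion rigorous, at the cost of a case split and a cruder constant. It combines a uniform-in-$\theta$ perturbation bound, a slope bound on a whole neighbourhood of the signal root, monotonicity of $\Phi_{\mathbb{P}_\alpha,x}$ to trap every contaminated root, and the trivial bound on $B(\alpha)$ for $\alpha\ge\alpha_0$. The monotonicity step also handles possible non-uniqueness of the root under $\mathbb{P}_\alpha$.

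Two further points in your favour. Comparing the truncation sets through the mass $|F_{s,x}(q_\alpha)-m|\le\alpha/(1-\alpha)$ removes $h_{\sup}/h_{\inf}$ from $C_1$, so the quantile-gap bound you quote is not actually needed. No splitting of boundary atoms is required either: with the paper's normalization by $m_\alpha(x)=\mathbb{P}_\alpha(\mathbb{S}_x\le q_\alpha)\ge m$, the contamination weight is still at most $\alpha/m$, and the signal part has no atom.

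One repair is needed: the slope bound does not hold on $[0,D^2]$.
\begin{itemize}
\item For $\theta\in[-c/2,\,q_s+c/2]$, the overlap $[\theta-c,\theta+c]\cap[0,q_s]$ has length at least $\min(c/2,q_s)$. This gives slope at least $\kappa:=h_{\inf}\min(c/2,m/h_{\sup})/m$ there.
\item Choose $\alpha_0$ with $C_1\alpha_0/((1-\alpha_0)\kappa)<c/2$, so that the shifted points $\theta_{\mathbb{P}_s,m,c}(x)\pm\epsilon$ stay in this range.
\item Evaluate at points slightly beyond $\pm C_1\alpha/((1-\alpha)\kappa)$ to get strict signs before invoking monotonicity.
\end{itemize}
Like Lemma \ref{Lemma 3.2}, this reads (A4) as giving $h_{\mathbb{S}_x}\ge h_{\inf}$ on all of $[0,q_s]$, so it relies on no more than the paper already assumes.
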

\noindent \textbf{Proof:} See the \hyperref[Proof of Theorem 4.3]{\textcolor{blue}{Proof of Theorem 5.1}} in \hyperref[Appendix B]{\textcolor{blue} Appendix B}.

\hfill $\blacksquare$

\begin{remark}\label{remark 13}
The bound $B(\alpha)\leq C^*\cdot\frac{\alpha}{(1-\alpha)}$ has two implications. First, $B(\alpha)\rightarrow 0$ as $\alpha \rightarrow 0$, while the upper bound converges to $C^*$ as $\alpha \rightarrow \frac{1}{2}$, ensuring bounded maximum bias up to 50\% contamination. Second, since $C^{*}$ is independent of $\mathbb{P}_c$, the bound holds uniformly over all contaminating distributions, including point masses, heavy-tailed distributions, and adversarial outliers.
\end{remark}
To study the robustness of the associated persistence diagrams, we restrict attention to the case where the filtration generated by the contaminated population Huberized DTM is tame. This assumption ensures that the corresponding persistence diagram is well defined. This regularity condition is required only for the subsequent topological analysis. Since Huber's contamination model allows arbitrary contaminating distributions, the filtration may not be tame in general. For example, tameness may fail when the contaminating distribution has noncompact support with heavy-tail or consists of infinitely many isolated outliers.
Such cases lie beyond the scope of the present topological analysis.

\subsection{Topological Robustness under Sparse Contamination}\label{section 5.2}
The preceding subsection established a maximum bias bound for the population Huberized DTM. Now, we study the robustness result for the associated persistence diagrams.
\begin{definition}[Maximum bias of persistence diagrams]
    For each homological dimension $p\geq 0$, the maximum bias of the persistence diagram functional at contamination level $\alpha$ is defined as
    \begin{equation*}
        B_{p}^{PD}(\alpha):=\sup_{\mathbb{P}_{c}\in \mathscr{P}(\mathcal M)} d_B\left( \mathrm{Dgm}_p\big(\delta_{\mathbb{P}_{\alpha},m,c} \big), \mathrm{Dgm}_p\left(\delta_{\mathbb{P}_{s},m,c}\right)\right),
    \end{equation*}
    where $\mathrm{Dgm}_p\big(\delta_{\mathbb{P}_{\alpha},m,c}\big)$ and $\mathrm{Dgm}_p\left(\delta_{\mathbb{P}_s,m,c}\right)$ denote the persistence diagrams in homological dimension $p$, associated with the weighted Vietoris–Rips filtrations induced by $\delta_{\mathbb{P}_{\alpha},m,c}$ and $\delta_{\mathbb{P}_s,m,c}$, respectively, as constructed in \hyperref[section 3.5]{\textcolor{blue} Section \ref{section 3.5}}; and
    $\mathscr{P}(\mathcal M)$ denotes the set of all Borel probability measures on $\mathcal{M}$.
\end{definition}
\begin{theorem}[Maximum bias bound for persistence diagrams]\label{Theorem 4.4}
Under Assumptions (A1)–(A4), for every homological dimension $p \geq 0$ and $\alpha \in [0,\frac{1}{2})$,
\begin{equation}\label{4.8}
    B_{p}^{PD}(\alpha) \leq B(\alpha) + (1+L_{*})~\sup_{\mathbb{P}_{c}\in \mathscr{P}(\mathcal M)} d_H\left(\mathcal{S}_{\mathbb{P}_{\alpha}},\mathcal{S}_{\mathbb{P}_s}\right),
\end{equation}
where $B(\alpha)$ is the maximum bias (as defined in Definition \ref{def 4.1}) of the $\delta_{\mathbb{P},m,c}$,  $L_*$ is the Lipschitz constant of $\delta_{\mathbb{P}_s,m,c}$ established in \hyperref[Theorem 3.2]{\textcolor{blue} Theorem \ref{Theorem 3.2}}, and  $d_H$ denotes the Hausdorff distance. In particular, if $$\sup_{\mathbb{P}_{c}\in \mathscr{P}(\mathcal M)} d_H\left(\mathcal{S}_{\mathbb{P}_{\alpha}},\mathcal{S}_{\mathbb{P}_s}\right)=\mathcal{O}\left(\frac{\alpha}{1-\alpha}\right),$$
then, there exists a constant $C^{**}>0$ such that
\begin{equation}\label{4.9}
     B_{p}^{PD}(\alpha) \leq C^{**}~\frac{\alpha}{(1-\alpha)}.
\end{equation}
\end{theorem}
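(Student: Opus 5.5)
The argument mirrors the proof of Theorem \ref{Theorem 4.2}, with the empirical measure $\mathbb{P}_n$ replaced by the contaminated measure $\mathbb{P}_\alpha$. Fix $p\geq 0$, $\alpha\in[0,\frac12)$ and an arbitrary contaminating law $\mathbb{P}_c\in\mathscr{P}(\mathcal M)$ for which the contaminated filtration is tame, as stipulated after Remark \ref{remark 13}.

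Insert the intermediate filtration $\mathbb{VR}\big(\mathcal{S}_{\mathbb{P}_\alpha},\delta_{\mathbb{P}_s,m,c}\big)$. This has the contaminated support but the signal Huberized DTM as weight function. The triangle inequality for $d_I$ then gives
\begin{align*}
d_I\Big(\mathbb{VR}\big(\mathcal{S}_{\mathbb{P}_\alpha},\delta_{\mathbb{P}_\alpha,m,c}\big),\mathbb{VR}\big(\mathcal{S}_{\mathbb{P}_s},\delta_{\mathbb{P}_s,m,c}\big)\Big)
&\leq d_I\Big(\mathbb{VR}\big(\mathcal{S}_{\mathbb{P}_\alpha},\delta_{\mathbb{P}_\alpha,m,c}\big),\mathbb{VR}\big(\mathcal{S}_{\mathbb{P}_\alpha},\delta_{\mathbb{P}_s,m,c}\big)\Big)\\
&\quad + d_I\Big(\mathbb{VR}\big(\mathcal{S}_{\mathbb{P}_\alpha},\delta_{\mathbb{P}_s,m,c}\big),\mathbb{VR}\big(\mathcal{S}_{\mathbb{P}_s},\delta_{\mathbb{P}_s,m,c}\big)\Big).
\end{align*}
\emph{First term.} Both filtrations are built on the same compact set $\mathcal{S}_{\mathbb{P}_\alpha}\subseteq\mathcal{M}$. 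Proposition \ref{proposition 4.1} bounds this term by $\|\delta_{\mathbb{P}_\alpha,m,c}-\delta_{\mathbb{P}_s,m,c}\|_\infty$, and by Definition \ref{def 4.1} this is at most $B(\alpha)$.

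\emph{Second term.} The weight function is the same, namely $\delta_{\mathbb{P}_s,m,c}$. Since $\mathbb{P}_s$ satisfies (A1)–(A4), Theorem \ref{Theorem 3.2} makes it $L_*$-Lipschitz on all of $\mathcal{M}$. Proposition \ref{proposition 4.2} therefore bounds this term by $(1+L_*)\,d_H(\mathcal{S}_{\mathbb{P}_\alpha},\mathcal{S}_{\mathbb{P}_s})$. Both supports are compact because $\mathcal{M}$ is compact and supports are closed.

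The choice of intermediate filtration matters. Pairing the contaminated weight with the signal support would require a Lipschitz constant for $\delta_{\mathbb{P}_\alpha,m,c}$, which we do not control, since $\mathbb{P}_\alpha$ need not satisfy (A4). The chosen ordering only ever needs the Lipschitz property of the signal functional.

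Next apply the algebraic stability theorem, $d_B\leq d_I$, as in Theorem \ref{Theorem 4.2}. This gives
\[
d_B\big(\mathrm{Dgm}_p(\delta_{\mathbb{P}_\alpha,m,c}),\mathrm{Dgm}_p(\delta_{\mathbb{P}_s,m,c})\big)\leq B(\alpha)+(1+L_*)\,\sup_{\mathbb{P}_c\in\mathscr{P}(\mathcal M)} d_H\big(\mathcal{S}_{\mathbb{P}_\alpha},\mathcal{S}_{\mathbb{P}_s}\big).
\]
The right-hand side does not depend on $\mathbb{P}_c$, so taking the supremum over $\mathbb{P}_c$ yields \eqref{4.8}.

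For \eqref{4.9}, use Theorem \ref{Theorem 4.3}, $B(\alpha)\leq C^*\frac{\alpha}{1-\alpha}$. The hypothesis gives a constant $K$ with $\sup_{\mathbb{P}_c} d_H\leq K\frac{\alpha}{1-\alpha}$ for all $\alpha\in[0,\frac12)$. Setting $C^{**}=C^*+(1+L_*)K$ completes the bound.

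The main subtlety is the second-term step, which needs $\delta_{\mathbb{P}_s,m,c}$ to be Lipschitz on all of $\mathcal{M}$ and not just on $\mathcal{S}_{\mathbb{P}_s}$. This is needed because the contaminated support may leave $\mathcal{S}_{\mathbb{P}_s}$. Theorem \ref{Theorem 3.2} is stated on $(\mathcal{M},d)$, so it supplies exactly this. A second point is that the paper's $\mathcal{O}(\cdot)$ is an asymptotic notion, while \eqref{4.9} must hold for every $\alpha\in[0,\frac12)$. I would read the hypothesis as holding uniformly on $[0,\frac12)$. If it only holds near $0$, I would extend it using $d_H\leq D$ together with $\frac{\alpha}{1-\alpha}$ being bounded below away from $\alpha=0$.
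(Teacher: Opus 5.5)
Your proposal is correct and follows the paper's proof essentially step for step. It uses the same intermediate filtration $\mathbb{VR}(\mathcal{S}_{\mathbb{P}_\alpha},\delta_{\mathbb{P}_s,m,c})$, Propositions \ref{proposition 4.1} and \ref{proposition 4.2} for the two terms, the algebraic stability bound $d_B\le d_I$, and Theorem \ref{Theorem 4.3}, giving the same constant $C^{**}=C^*+(1+L_*)\,C_H$. Your added remarks are sound points the paper leaves implicit: tameness, the need for $\delta_{\mathbb{P}_s,m,c}$ to be Lipschitz on all of $\mathcal{M}$, and reading the $\mathcal{O}(\cdot)$ hypothesis uniformly on $[0,\frac12)$.
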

\noindent \textbf{Proof:} See the \hyperref[Proof of Theorem 4.4]{\textcolor{blue}{Proof of Theorem 5.2}} in \hyperref[Appendix B]{\textcolor{blue} Appendix B}.

\hfill $\blacksquare$

\begin{remark}
    \hyperref[Theorem 4.4]{\textcolor{blue} Theorem \ref{Theorem 4.4}} extends the robustness results of \hyperref[Theorem 4.3]{\textcolor{blue} Theorem \ref{Theorem 4.3}} from the Huberized DTM to the level of persistence diagrams. The bound in \eqref{4.8} reveals two interpretable components. The first is the functional bias $B(\alpha)$, whose properties are discussed in \hyperref[remark 13]{\textcolor{blue} Remark \ref{remark 13}}. The second is the geometric distortion $d_H\left(\mathcal{S}_{\mathbb{P}_{\alpha}},\mathcal{S}_{\mathbb{P}_s}\right)$, which captures the Hausdorff deviation of the contaminated support $\mathcal{S}_{\mathbb{P}_{\alpha}}:=\mathcal{S}_{\mathbb{P}_s} \cup \mathcal{S}_{\mathbb{P}_c}$ from $\mathcal{S}_{\mathbb{P}_s}$, arising from the huber contamination model $\mathbb{P}_{\alpha} = (1 - \alpha)~ \mathbb{P}_{s} + \alpha~ \mathbb{P}_{c},~ \alpha \in \left[0,\frac{1}{2}\right) $. If the geometric distortion is of the same order as the functional bias, then the maximum bias of persistence diagrams is of order $\mathcal{O} \left(\frac{\alpha}{1-\alpha}\right)$ in bottleneck distance. 
\end{remark}

\section{Numerical Experiments} \label{section 6}
In this section, we illustrate the empirical performance of our proposed procedure for topological inference under sparse contamination. Contaminating observations typically introduce both external outliers (points far from the signal support) and internal outliers (points inside void regions). 
As illustrated in \hyperref[Figure 6.0]{\textcolor{blue}{Figure \ref{Figure 6.0}}}, such observations often distort the underlying topological structure.
For instance, external outliers may induce spurious loops or connected components, whereas internal outliers may fill existing voids and thereby obscure genuine topological features. 

\begin{figure}[!htbp]
        \centering       
        \includegraphics[scale=0.6]{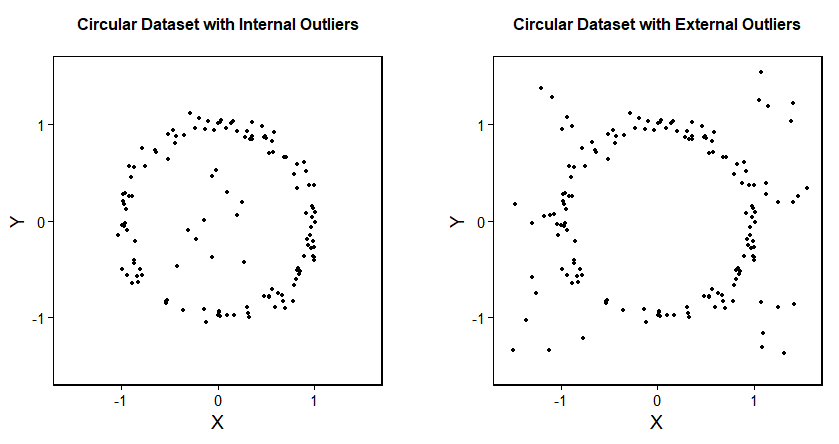}
        \caption{Illustration of internal and external outliers for a circular signal. (Left) internal outliers lie within the enclosed hole; (right) external outliers lie outside the signal support.}
        \label{Figure 6.0}
        \noindent
\end{figure}

Here, the observations are drawn independently from a mixture distribution
\begin{equation*}
  \mathbb{P}_{\alpha} = (1 - \alpha)~\mathbb{P}_{s} + \alpha~ \mathbb{P}_{c}, \quad \alpha \in \left[0,\frac{1}{2}\right),
\end{equation*}
where $\mathbb{P}_{s}$ is the signal distribution of interest, and $\mathbb{P}_{\text{c}}$ is an arbitrary contaminating distribution. Now, the goal is to recover the persistent topological features associated with the signal support  $\mathcal{S}_{\mathbb{P}_{s}}$ from 
the contaminated observations $ X_1, \cdots, X_n \stackrel{\mathrm{i.i.d.}}{\sim} \mathbb{P}_{\alpha} $.

\subsection*{Example: Seven-Circle Packing Configuration with Sparsity}
We consider a two-dimensional point pattern consisting of seven neighbouring circular structures in $\mathbb{R}^2$. This configuration provides a benchmark for examining whether the proposed method can recover persistent loop structures from sparse observations in which the underlying $H_1$ structure is known.

\textbf{Data generation from Signal component:} The signal distribution $\mathbb{P}_{s}$ is supported on the circumferences of seven circles of unit radius,
arranged in a hexagonal packing configuration. The central circle is centered at $C_1=(0,0)$, while the six surrounding circles have centers 
\begin{equation*}
      C_j = (c_{j1},c_{j2}) = \left( 2\cos\frac{2\pi(j-2)}{6},~ 2\sin\frac{2\pi(j-2)}{6} \right), \quad 
       j=2,\cdots,7.
\end{equation*}
This geometry ensures the six outer circles are equally spaced around the central circle, with each being mutually tangent to the central circle and to its two neighbouring outer circles. 
For each circle $j=1,\cdots,7$, we generate 80 noisy observations as
\begin{equation*}
    X_{ij} = \left(c_{j1}+ \cos\Theta_{ij} + \varepsilon_{ij1}, ~c_{j2}+\sin\Theta_{ij}+\varepsilon_{ij2} \right) \in\mathbb{R}^2,\quad i=1,\cdots,80,
\end{equation*}
where $C_j=(c_{j1},c_{j2})$ is the center of the $j$-th circle,
$\Theta_{ij}\stackrel{\mathrm{i.i.d.}}{\sim}\text{Unif}~(0,2\pi)$ is the angular coordinate of the $i$-th point on the $j$-th circle's circumference, and the additive noise terms satisfy $\varepsilon_{ij1},\varepsilon_{ij2} \stackrel{\mathrm{i.i.d.}}{\sim}\operatorname{Unif}(-0.1,0.1)$.
Consequently, the resulting 560
observations (i.e., 80 observations for each circle) form a noisy sample of the union of the seven circumferences. The underlying signal support has seven independent $H_1$ loops corresponding to the seven circles.
\begin{remark}
    The hexagonal packing arrangement represents canonical geometric patterns of spatial organization such as cellular organization in tissue biology (\textcite{gibson2006emergence}), the packing-number and lattice-packing structure in condensed matter and mathematical physics (\textcite{conway1999sphere}). It provides a simple as well as insightful benchmark for studying topological structure in spatial data.
\end{remark}

\textbf{Data generation from Contamination component:} To examine robustness in presence of sparse contamination, we generate $350$
contaminating observations independently from $\text{Unif}([-3,3]^2)$. As the contamination region contains the entire seven-circle configuration, contaminating points may lie both outside the signal support as well as in the regions between the circular components (\hyperref[Figure 6.1]{\textcolor{blue}{Figure \ref{Figure 6.1}}}, left). The resulting sample therefore contains both external and internal outliers, which provides a challenging testbed for recovering the underlying circular structures.

\begin{figure}[!htbp]
        \centering       
        \includegraphics[scale=0.45]{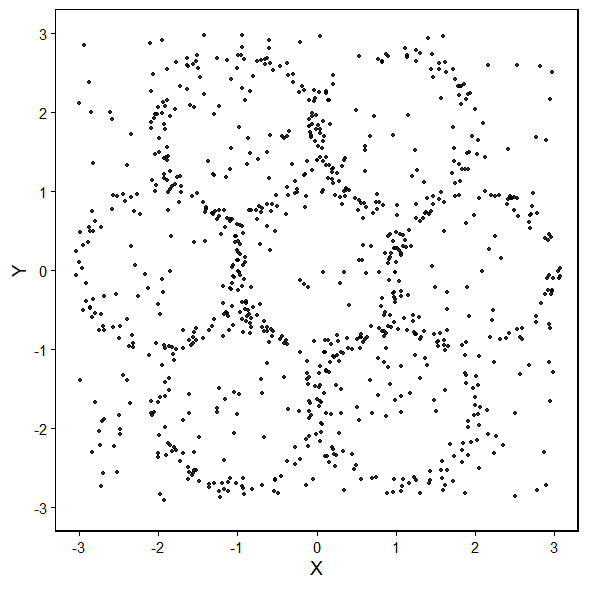}~
        \includegraphics[scale=0.45]{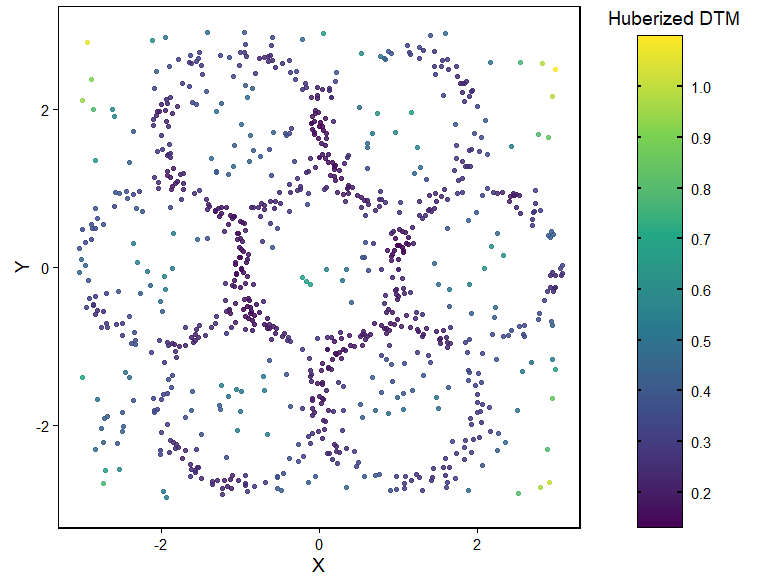}
        \caption{Contaminated observations from the seven-circle configuration (left) and the same sample with the empirical Huberized DTM values at the observed points represented by a color scale (right), for $m=0.02$ and $c=0.05$.}
        \label{Figure 6.1}
        \noindent
\end{figure}
\begin{remark}
    At $\alpha=350/910 \approx 0.385$ in the model designed in \eqref{eq:5.1}, internal outliers obscure genuine loops while external outliers create spurious ones. This dual distortion provides a challenging testbed for robust topological inference.
\end{remark}
To quantify the underlying topological structure in presence of noise, we first compute the empirical Huberized DTM $\widehat{\delta}_{\mathbb{P}_n,m,c} (\cdot)$ at the observed sample points with mass parameter $m =0.02$ and Huber tuning parameter $c=0.05$ (see, \hyperref[Figure 6.1]{\textcolor{blue}{Figure \ref{Figure 6.1}}}, right diagram). 
The resulting Huberized DTM values are then used as vertex weights in the Huberized DTM weighted Vietoris-Rips filtration, from which persistence diagrams are computed in homology dimensions $k=0,1$.
Precisely speaking, the $H_0$ persistence diagram captures cluster structure, while the $H_1$ persistence diagram records one-dimensional loop features.
To illustrate the resulting topological structure, \hyperref[Figure 6.2]{\textcolor{blue}{Figure \ref{Figure 6.2}}} shows a representative persistence diagrams obtained from the noisy signal sample and from the same sample after contamination.
The seven most persistent features correspond to the seven loop structures in the signal configuration and are well-separated from the remaining features. To characterize the sampling variability of these $H_1$ features across samples, we 
perform $R=100$ independent Monte Carlo replications. 
For each replication $r=1,\cdots, R$,
let $\widehat{\mathrm{Dgm}}_{1}^{(r)}$ denote the empirical $H_1$ persistence diagram obtained from the Huberized DTM-weighted Vietoris–Rips filtration constructed using the $r$-th replication. Each topological feature is represented by a birth time-death time pair  $\big(b_i^{(r)}, d_i^{(r)}\big)\in \widehat{\mathrm{Dgm}}_{1}^{(r)},~ i=1,2.\cdots,$ with persistence $\ell^{(r)}=d_i^{(r)}-b_i^{(r)}$. 
\begin{figure}[!htbp]
        \centering       
        \includegraphics[scale=0.375]{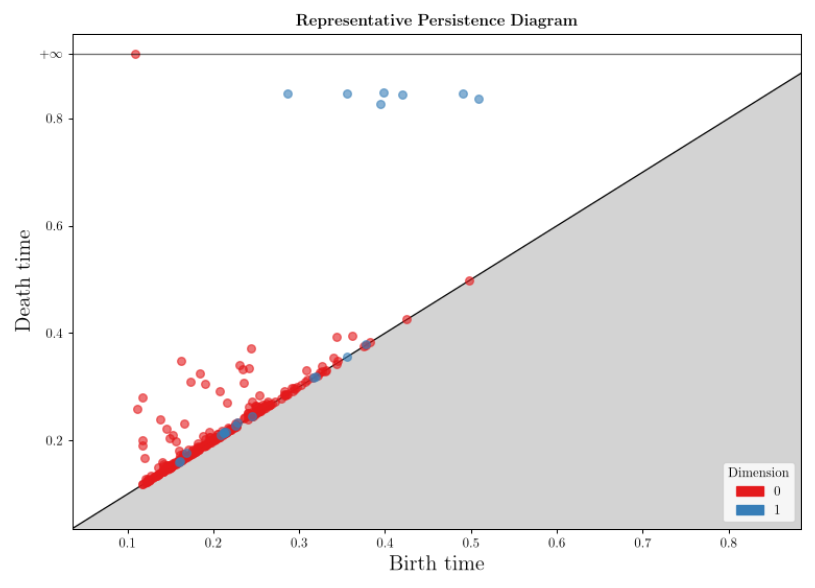}~
        \includegraphics[scale=0.385]{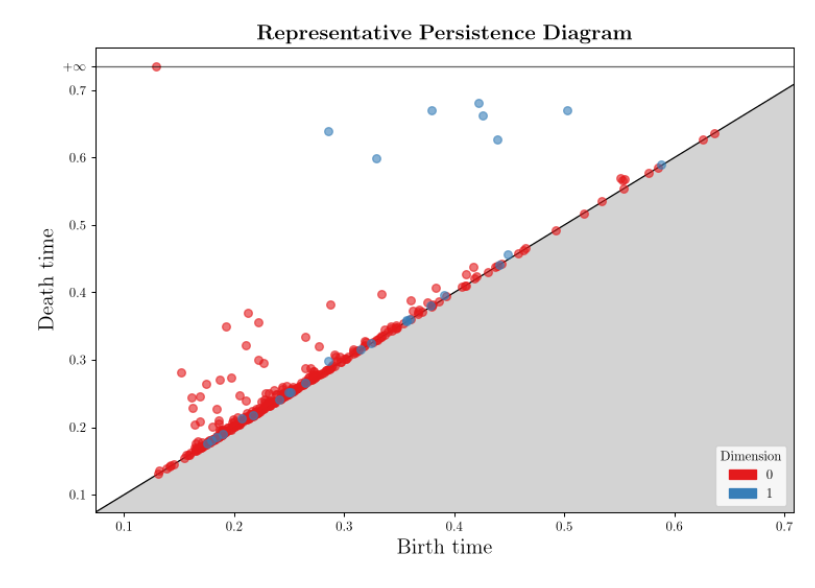}
        \caption{Representative Persistence diagram computed from the Huberized DTM weighted Vietoris-Rips filtration for 560 noisy signal points generated from the seven-circle configuration (left diagram), and for the same signal points after augmented with 350 uniformly distributed contaminating points (right diagram), with $m=0.02$ and $c=0.05$.}
        \label{Figure 6.2}
        \noindent
\end{figure}
To distinguish dominant topological signals from noise, these persistence values are ordered in descending order, $\ell_{(1)}^{(r)} \geq \ell_{(2)}^{(r)} \geq \cdots$.
As the signal configuration contains seven target loop features,
we restrict attention to the first eight order statistics,
thereby including the seven target features together with the next most persistent feature for comparison.
For each $j \in \{1, \dots, 8\}$, the corresponding persistence across the $R$ replications are summarized by the Monte Carlo mean $\bar \ell_{(j)}=R^{-1}\sum_{r=1}^R \ell_{(j)}^{(r)}$ together with the corresponding Monte Carlo standard error. The resulting Monte Carlo summaries are reported in \hyperref[Table 6.1]{\textcolor{blue}{Table \ref{Table 6.1}}}.
\begin{table}[htbp]
\centering
\setlength{\tabcolsep}{6pt}
\renewcommand{\thetable}{6.1}
\caption{Monte Carlo mean persistence and standard error (SE) for the first eight $H_1$ persistence order statistics at $m=0.02$ and $c=0.05$ based on $R=100$ replications.}
\label{Table 6.1}
\renewcommand{\arraystretch}{1.3} 
\begin{tabular}{l cccccccc}
\midrule
\makecell{Mean Persistence \\ order statistic} & $\bar\ell_{(1)}$ & $\bar\ell_{(2)}$ & $\bar\ell_{(3)}$ & $\bar\ell_{(4)}$ & $\bar\ell_{(5)}$ & $\bar\ell_{(6)}$ & $\bar\ell_{(7)}$ & $\bar\ell_{(8)}$ \\
\midrule
Mean persistence & 0.3337 & 0.2812 & 0.2530 & 0.2276 & 0.2010 & 0.1748 & 0.1316 & 0.0238 \\

Monte Carlo SE & 0.0042 & 0.0034 & 0.0033 & 0.0032 & 0.0032 & 0.0033 & 0.0038 & 0.0013 \\
\bottomrule
\end{tabular}
\end{table}

At $m=0.02$ and $c=0.05$, the mean persistence of $\bar\ell_{(1)}$ to $\bar\ell_{(7)}$ ranges from $0.3337$ to $0.1316$, while the $\bar\ell_{(8)}$ has mean persistence $0.0238$. The difference between the mean values of $\bar\ell_{(7)}$ and $\bar\ell_{(8)}$ is $0.1078$, which is significantly larger than other mean values. This separation is also reflected in the persistence diagram. 
This fact indicates an evidence of a separation after the seventh feature. Moreover, the corresponding Monte Carlo standard errors range from $0.0013$ to $0.0042$, indicating that the reported Monte Carlo means are estimated with high precision over the $R=100$ replications. The observed separation between the seventh and eighth features is consistent with the seven-loop structure specified by the simulation design. Overall, these findings in numerical study favours of using persistence lengths to characterize the target loop structure.

\subsection{Empirical Power Study}
We now study whether sparsity leads to a detectable change in the estimated $H_1$ persistence structure, as measured by the bottleneck distance from a fixed signal reference. In this study, we consider $n=910$,
$m=0.02$, and $c=0.05$, giving $k=\lceil mn \rceil=18$. Here note that $910$ is a multiple of $7$, as required. The contaminated distribution is
\begin{equation*}
  \mathbb{P}_{\alpha} = (1 - \alpha)~ \mathbb{P}_{s} + \alpha~ \mathbb{P}_{c},
\end{equation*}
where $\mathbb{P}_{s}$ and $\mathbb{P}_{c}$ denote the signal and contamination distributions, respectively. Here observe that $\alpha =350/910$. We now would like to test
\begin{equation*}
    \mathcal{H}_{null}: \mathbb{P}_{\mathcal{Y}}=\mathbb{P}_{s}\quad \text{against}\quad \mathcal{H}_{alt}: \mathbb{P}_{\mathcal{Y}}=\mathbb{P}_{\alpha},
\end{equation*}
where $\mathbb{P}_{\mathcal{Y}}$ denotes the data-generating distribution of the 
test sample $\mathcal{Y}$. Now, it should be explained how this test is associated with the sparsity. Note that under null hypothesis, the data are obtained from regular distribution like $\mathbb{P}_{s}$ satisfying the regularity condition like $(a, b)$ standard measure whereas under alternative hypothesis, the data obtained from contaminated distribution like $\mathbb{P}_{\alpha}$ for any $\alpha > 0$, which does not satisfy $(a, b)$ measure criteria. In other words, under $\mathbb{P}_{s}$, the data are located outside the main data cloud, which creates the sparsity in the whole data set. Next, the implementation of the test is as follows. 

Let, $\mathcal{X}_{\text{ref}}=\{x_1,\cdots, x_{910}  \}$ be a reference sample of $910$ observations, generated independently from $\mathbb{P}_{s}$. Using the Huberized-DTM-weighted Vietoris-Rips filtration, we compute its $H_1$ persistence diagram $\widehat{\mathrm{Dgm}}_{1}\left(\mathcal{X}_{\text{ref}}\right)$, which is kept fixed in this study.
For any test sample $\mathcal{Y}$, the test statistic is defined as,
\begin{equation}
    T (\mathcal{Y})=d_B\left(\widehat{\mathrm{Dgm}}_{1}\left(\mathcal{Y}\right), ~\widehat{\mathrm{Dgm}}_{1}\left(\mathcal{X}_{\text{ref}}\right)\right),
\end{equation}
where $d_B$ denotes the bottleneck distance.

To calibrate the test at the $5\%$ significance level, we independently generate $R_0=500$ signal samples $\mathcal{Y}_1^{(0)},\cdots,\mathcal{Y}_{R_0}^{(0)}$, each consisting of $910$ observations from $\mathbb{P}_{s}$ (i.e., under $\mathcal{H}_{null}$), and compute 
\begin{equation}
    T_r^{(0)}=T \left(\mathcal{Y}_r^{(0)}\right),\quad r=1,\cdots,R_0.
\end{equation}
The resulting values statistic ${\small \left\{T_r^{(0)}\right\}_{r=1}^{R_0}}$ are used to approximate the null distribution of the bottleneck-distance statistic conditional on $\widehat{\mathrm{Dgm}}_{1}\left(\mathcal{X}_{\text{ref}}\right)$. Let $\widehat{F}_0$ denote the corresponding empirical distribution function obtained by
$ \left\{T_r^{(0)}\right\}_{r=1}^{R_0}$, and the critical value is estimated by $0.95$-th quantile of the empirical version of null distribution, which is denoted by $\widehat{c}_{0.95}$.
The test reject $\mathcal{H}_{null}$ if $T (\mathcal{Y})>\widehat{c}_{0.95}$.

\textbf{Empirical Power Evaluation:} 
Here we compute the empirical rejection rates (i.e., empirical power) when $\alpha \in \{0,0.10,0.20,0.30,0.385\}$. For each $\alpha$, we generate $R=100$ independent test samples ${\small \left\{\mathcal{Y}_r^{\alpha}\right\}_{r=1}^{R}}$, with $\lfloor \alpha n \rfloor$ observations from $\mathbb{P}_c$ and $n-\lfloor \alpha n \rfloor$ observations from $\mathbb{P}_s$ (i.e., under $\mathcal{H}_{alt}$). The $\widehat{\mathrm{Dgm}}_{1}\left(\mathcal{X}_{\text{ref}}\right)$ and $\widehat{c}_{0.95}$ are kept fixed across all contamination levels. Finally, the empirical power function at $\alpha$ is defined as,
\begin{equation*}
\widehat{\pi}(\alpha) = \frac{1}{R} \sum_{r=1}^{R} \mathbbm{1}_{\left\{ T\left(\mathcal{Y}_r^{(\alpha)}\right) > \widehat{c}_{0.95} \right\}}.
\end{equation*}
In this context, observe that $\widehat{\pi}(0)$ yields the empirical size of the test, while $\widehat{\pi}(\alpha)$ for $\alpha>0$ evaluates statistical power of the test. The estimated size and power across $\alpha \in \{0,0.10,0.20,0.30,0.385\}$ are reported in \hyperref[Table 6.2]{\textcolor{blue}{Table \ref{Table 6.2}}}.
\begin{table}[htbp]
\centering
\setlength{\tabcolsep}{10pt}
\renewcommand{\arraystretch}{1.3}
\renewcommand{\thetable}{6.2}
\caption{Empirical size ($\alpha=0$) and power ($\alpha>0$) of the bottleneck-distance based test across contamination levels, based on $100$ replications. The critical value $\widehat{c}_{0.95}$ is calibrated from $500$ null samples and is kept fixed across contamination levels.}
\label{Table 6.2}
\begin{tabular}{cccc}
\midrule
\makecell{Contamination \\ proportion $(\alpha)$} \qquad &
\makecell{Contaminating \\ observations} \qquad &
\makecell{Signal \\ observations} \qquad &
\makecell{Empirical  \\ size/power $(\widehat{\pi}(\alpha))$} \\
\midrule
$0$    \qquad  & $0$   & $910$ & $0.06$ \\
$0.10$ \qquad  & $91$  & $819$ & $0.77$ \\
$0.20$ \qquad  & $182$ & $728$ & $1.00$ \\
$0.30$ \qquad  & $273$ & $637$ & $1.00$ \\
$0.385$ \qquad & $350$ & $560$ & $1.00$ \\
\bottomrule
\end{tabular}
\end{table}

As shown in \hyperref[Table 6.2]{\textcolor{blue}{Table \ref{Table 6.2}}}, the empirical size is $0.06$ at $\alpha=0$, which is close to the $5\%$ significance level. The rejection rate increases from $0.77$ at $\alpha=0.10$ to $1.00$ at $\alpha=0.20$, remaining at $1.00$ thereafter. Therefore, for this data, it is evident that the proposed test detects changes in the estimated $H_1$ persistence structure. Strictly speaking, these results indicate that $\alpha=0.20$ introduces sufficient topological changes in the $H_1$ features for the test to reliably reject the null hypothesis in this example.


\section{Real-data Analysis} \label{section 7}
In this section, we illustrate the practical utility of our proposed methodology through an analysis of 
the experimentally determined three-dimensional structure of the protein M-ferritin.
Ferritin stores iron in a cavity enclosed by its protein shell. This makes it a natural testbed for testing whether the proposed filtration identifies the cavity as a persistent $H_2$ feature. Since the atomic coordinates are obtained from X-ray crystallography, they are subject to measurement uncertainty. Our objective is to demonstrate how the proposed methodology provides robust insights in such real-world scenarios.

\subsubsection*{Topological Characterization of the Iron-Storage Cavity in M-Ferritin}
In cellular biology, a protein's biological function is governed not only by its linear sequence of amino acids but also by its three-dimensional conformation. Such conformations often contain enclosed cavities, which may serve as sites for the binding, storage, or transport of small molecules and ions. Characterizing these cavities is essential for understanding protein function. As a case study, we consider the three-dimensional structure of M-ferritin, a multimeric iron-storage protein that stores iron within an enclosed internal cavity (see, e.g., \textcite{theil1987ferritin}, \textcite{chasteen1999mineralization}). From geometric point of view, the iron-storage cavity is a hollow region within the protein structure. Topologically, such a cavity corresponds to a $H_2$ feature, which can be detected and summarize using persistent homology. The corresponding atomic coordinates are available in the Protein Data Bank (\href{https://www.rcsb.org/structure/1MFR}{PDB ID: 1MFR}), and the structure was determined by X-ray crystallography (\textcite{ha1999crystal}).

To analyze the structure topologically, we represent the M-ferritin Protein by the Cartesian coordinates $(x,y,z)$ of
its $C_{\alpha}$ atoms, excluding water molecules, ions and other heteroatoms. The resulting dataset consists of 4104 sample points.
To reduce the computational cost of computing $H_2$ persistence of $4,104$ points, farthest-point sampling (maxmin) technique (see, \textcite{gonzalez1985clustering}) is applied to obtain a representative subsample of $400$ points. 
To account for variability induced by the initialization, we perform farthest-point sampling over $12$ replications, each initiated with an independently selected starting point. For each subsample,
we then construct the Huberized-DTM-weighted Vietoris–Rips filtration with $m=0.02$ and $c=0.05$ on the subsampled dataset, and compute $H_2$ persistence to characterize the enclosed voids in the protein structure. 
\hyperref[Figure 7.1]{\textcolor{blue}{Figure \ref{Figure 7.1}}} shows a representative Huberized DTM based persistence diagram. To characterize the dominant topological features and quantify their variability across replications, Monte Carlo means and standard errors of the birth times, death times, and persistence of the three largest persistent $H_2$ features in each replication are reported in \hyperref[Table 7.1]{\textcolor{blue}{Table \ref{Table 7.1}}}, with features ranked by persistence.

\begin{figure}[!htbp]
        \centering       
        \includegraphics[scale=0.45]{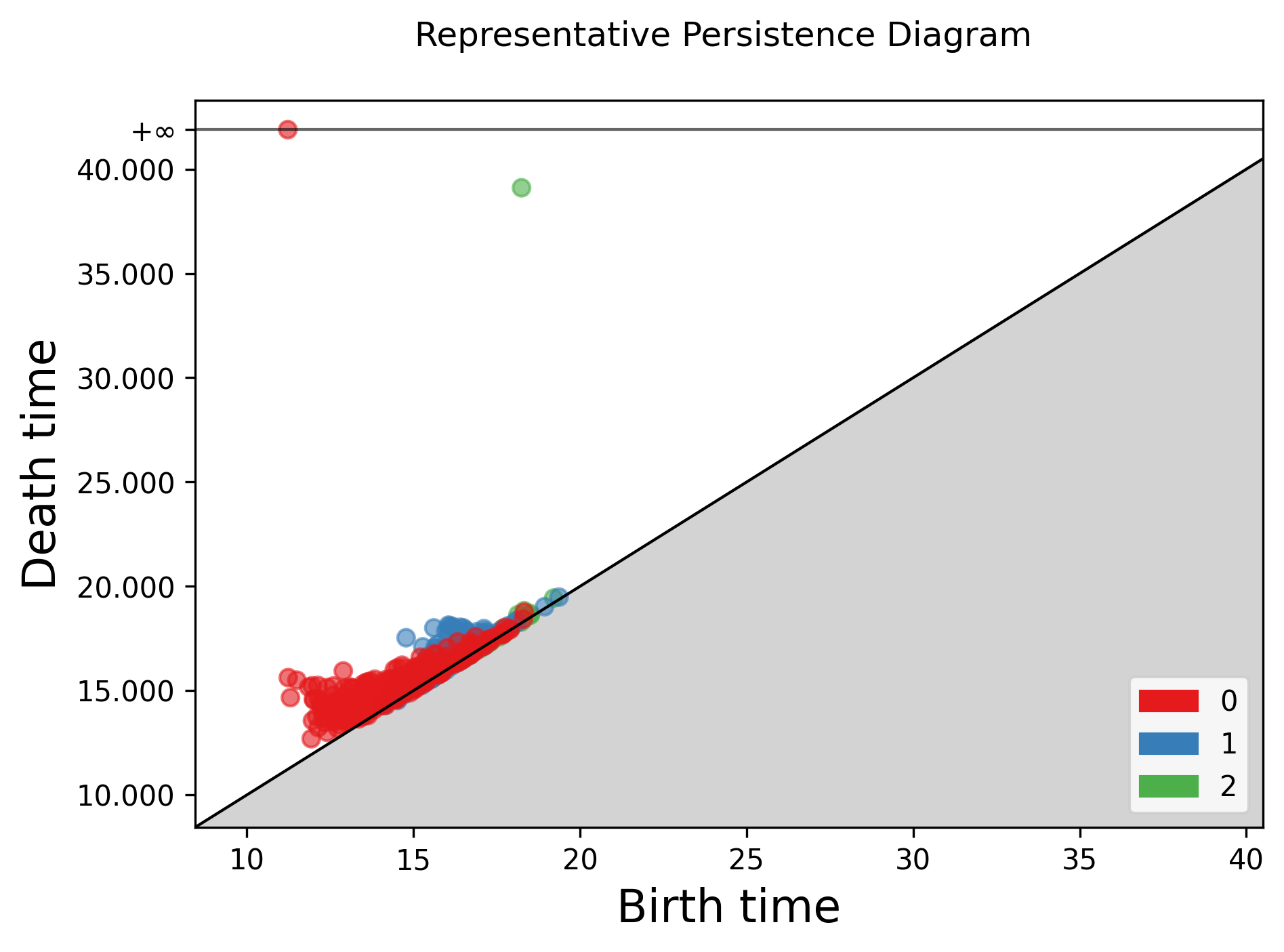}
        \caption{Representative Huberized DTM based Persistence diagram for the M-ferritin structure ($n=400$, $m=0.02$, $c=0.05$).}
        \label{Figure 7.1}
        \noindent
\end{figure}
\begin{table}[htbp]
    \centering
    \setlength{\tabcolsep}{10pt}
    \renewcommand{\arraystretch}{1.3}
    \renewcommand{\thetable}{7.1}
    \caption{Monte Carlo means of the birth time, death time, and persistence of the three most persistent $H_2$ classes for the subsampled M-ferritin structure
    ($n=400$, $m=0.02$, $c=0.05$). Standard errors (SE) are in parentheses.}
    \label{Table 7.1}
    \begin{tabular}{cccc}
        \toprule
        Rank (by persistence) & Birth time & Death time & Persistence \\
        \midrule
        1 & $18.7446$ & $38.7149$ & $19.9703 ~ (0.5203)$ \\
        2 & $16.3466$ & $17.0583$ & $0.7117 ~(0.1190)$ \\
        3 & $15.8679$ & $16.4360$ & $0.5681~ (0.0756)$ \\
        \bottomrule
    \end{tabular}
\end{table}

In the representative Huberized DTM based persistence diagram (see, for instance, \hyperref[Figure 7.1]{\textcolor{blue}{Figure \ref{Figure 7.1}}}), one $H_2$ feature is well separated from the diagonal, whereas the remaining features lie close to it. \hyperref[Table 7.1]{\textcolor{blue}{Table \ref{Table 7.1}}} shows that this separation is stable across replications. The dominant $H_2$ feature has mean birth time $18.7446$, mean death time $38.7149$, and mean persistence $19.9703$ (SE $0.5203$). In contrast, the second and third most persistent $H_2$ features have substantially smaller mean persistence of $0.7117$ (SE $0.1190$) and $0.5681$ (SE $0.0756$), respectively. This pronounced separation in persistence indicates that the dominant $H_2$ feature is more persistent than the other observed $H_2$ features. The remaining $H_2$ features are short persistence and lie close to the diagonal, suggesting that they may arise due noise. Taken together, its substantially longer persistence and consistency with the enclosed cavity lead us to identify the dominant feature as the iron-storage cavity of M-ferritin.



\section{Concluding Remarks} \label{section 8}
This article studies robust topological learning under sparse contamination through Huberized distance functionals. Even a small fraction of contamination can distort persistent homology by perturbing both the filtration-generating functional and the support on which the filtration is constructed. To address these effects, we construct a filtration based on the Huberized Distance-to-Measure, which reduces the influence of atypical observations while retaining the geometric information relevant to topological inference. 
We establish uniform consistency of the empirical Huberized DTM, and together with stability theorems for persistence diagrams, show that the associated persistence diagrams are consistent. We also derive a maximum-bias bound in the bottleneck distance under sparse contamination. As an illustration, we consider a seven-circle packing configuration in $\mathbb{R}^2$, whose signal support has seven independent $H_1$ loops. Even at contamination level $\alpha=0.385$,  the proposed procedure successfully recovers all seven loops. The empirical power study further shows that the bottleneck distance based test detects contamination-induced changes in the estimated $H_1$ persistence structure with high power.

Several extensions would strengthen the framework developed in this work. A natural extension would be to develop adaptive choices of $c$ and $m$, together with corresponding theoretical results. Our theoretical results are derived under the Huber contamination model with independent observations. Extending these results to dependent data and structured contamination (e.g., clustered or spatially concentrated outliers) would provide a broader robustness framework. Finally, weighted Vietoris-Rips filtrations can become computationally expensive for large datasets and higher homological dimensions. Developing scalable approximations through sparsification, subsampling, or landmark points, together with explicit bounds on the resulting bottleneck distance error, could extend the method to large datasets and higher homological dimensions.

\vspace{0.2 in}

\noindent {\bf Acknowledgment:} The authors thank Professor Steve Oudot and Professor Henry Adams for stimulating discussions on Topological Data Analysis (TDA), which helped shape many of the ideas developed in this work. The first author would also like to thank the participants of the workshop “OneMath World School 2026” at the Chennai Mathematical Institute (CMI), whose insightful questions provided him with a deeper understanding of TDA. The authors sincerely thank Nivesh Mishra, a PhD student in the Department of Mechanical Engineering, IIT Kanpur, for his assistance with the high-performance computing (HPC) computations.

\vspace{0.1in}

\noindent {\bf Code Availability:} The codes of numerical studies are available to the first author upon request. 

\vspace{0.1in}

\noindent {\bf AI Disclosure:} The methodological development and mathematical results were carried out by the authors without AI assistance. AI was used solely to assist with the literature survey in the Introduction, language polishing, and some aspects of the R coding. To ensure the accuracy and quality of the article, the authors themselves thoroughly reviewed the manuscript, refined the language, and independently verified the code.

\section*{Appendix} \label{Appendix}
\subsection*{Notation and Definitions}
\subsubsection*{Geometric Regularity and Positive Reach}
The stability of topological inference depends on geometric regularity of the support $\mathcal{S}_{\mathbb{P}}$. This regularity is captured by the concept of reach (see \textcite{federer1959curvature}). To define the reach of a compact set $S \subset \mathcal{M}$, we first introduce its metric projection and medial axis.
\begin{definition*}[Metric Projection]
Let $(\mathcal{M},d)$ be the metric space, and suppose that $S \subset \mathcal{M}$ is a nonempty compact set. For $y\in \mathcal{M}$,  $d(y,S):=\displaystyle\inf_{x\in S} d(y,x)$.
Then, the metric projection of $y$ onto $S$ is defined as
\begin{equation*}
    \Pi_{S}(y):=\arg \min_{x\in S} d(y,x)=\{x\in S:d(y,x)=d(y,S) \}.
\end{equation*}
Therefore, $\Pi_{S}(y)$ consists of all points of $S$, which minimize the distance to $y$. Since $S$ is compact, $\Pi_{S}(y) \neq \varnothing$. 
\end{definition*}

\begin{definition*}[Medial Axis]
    Let $(\mathcal{M},d)$ be the metric space, and suppose that $S \subset \mathcal{M}$ is a nonempty compact set. The medial axis of $S$ is defined as 
    \begin{equation*}
        \text{Med}(S):=\{y \in \mathcal{M}:|\Pi_{S}(y)|\geq 2\},
    \end{equation*}
    where $|\Pi_{S}(y)|$ denotes the cardinality of $\Pi_{S}(y)$. Thus, $\text{Med}(S)$ consists of all points in $\mathcal{M}$ having at least two distinct nearest points in $S$.
\end{definition*}

\begin{definition*}[Reach]
     Let $(\mathcal{M},d)$ be the metric space, and suppose that $S \subset \mathcal{M}$ is a nonempty compact set. The reach  of $S$ is defined as
     \begin{equation*}
         \text{reach}(S)=\inf_{z\in \text{Med}(S)}d(z,S).
     \end{equation*}
\end{definition*}
\noindent Accordingly, $S$ is said to have positive reach if $\text{reach}(S)>0$. For the support $\mathcal{S}_{\mathbb{P}}$, we assume $\text{reach}(\mathcal{S}_{\mathbb{P}})>0$.

\subsection*{Appendix A} \label{Appendix A}
\subsubsection*{A.1.\quad \hyperref[proposition 2.2]{\textcolor{blue}{Proof of Proposition \ref{proposition 2.2}}}} \label{Proof of proposition 2.2}
\noindent For $x,y \in \mathcal{M}$, observe that
\small
    \begin{align*}
        \left|\mathbb{S}_x(X)-\mathbb{S}_y(X)\right|  =\left|[d(x,X)]^2-[d(y,X)]^2\right|
        =|d(x,X)-d(y,X)|\cdot|d(x,X)+d(y,X)|
        \leq2D~d(x,y),
    \end{align*}
    \normalsize
where the inequality follows from the triangle inequality, and in view of the fact that $\mathcal{S}_{\mathbb{P}}$ is bounded set with diameter $D$. Therefore, the process, $\{\mathbb{S}_x(X):x\in \mathcal{M}\}$ has $2D$-Lipschitz sample paths with respect to $d$.
It completes the proof. 
 
\hfill $\blacksquare$

\subsubsection*{A.2.\quad \hyperref[proposition 2.3]{\textcolor{blue}{Proof of Proposition \ref{proposition 2.3}}}} \label{Proof of proposition 2.3}
\noindent By \hyperref[proposition 2.2]{\textcolor{blue}{Proposition \ref{proposition 2.2}}}, $\mathbb{S}_x(X)\leq \mathbb{S}_y(X)+2D~d(x,y),$ almost surely. 
Hence,
\begin{equation}\label{eq:8.1}
    \{\mathbb{S}_y(X)\leq F_y^{-1}(m)\}\subseteq\{\mathbb{S}_x(X)\leq F_y^{-1}(m)+2D~d(x,y)\},
\end{equation}
and therefore, $F_x(F_y^{-1}(m)+2D~d(x,y))\geq F_y(F_y^{-1}(m))$. As $F_y(F_y^{-1}(m))\geq m$, it follows that $F_x(F_y^{-1}(m)+2D~d(x,y))\geq m$. Now, by definition of the quantile, $F_x^{-1}(m)=\inf\{t:F_x(t) \geq m\}$, and hence, $F_x^{-1}(m)\leq F_y^{-1}(m)+2D~d(x,y).$
Finally, by symmetry, 
\begin{equation}\label{eq:8.2}
    F_y^{-1}(m)\leq F_x^{-1}(m)+2D~d(x,y).
\end{equation}
Combining \eqref{eq:8.1} and \eqref{eq:8.2}, we have
\begin{equation*}
    |F_x^{-1}(m)-F_y^{-1}(m)|\leq 2D~d(x,y).
\end{equation*}
Hence, the quantile map $x\mapsto F_x^{-1}(m)$ is $2D$-Lipschitz continuous on $\mathcal{M}$. This completes the proof. 
 
\hfill $\blacksquare$

\subsection*{A.3.}
\begin{lemma3}\label{Lemma 3}
Let $(\mathcal{M},d)$ be a metric space and fix $x\in \mathcal{M}$. Define $\mathbb{S}_x(X)=[d(x,X)]^2$,
and suppose that $\mathbb{S}_{x}(X_{(1)})\leq \mathbb{S}_{x}(X_{(2)})\leq \cdots \leq \mathbb{S}_{x}(X_{(n)})$ denote the order statistics of $\left\{\mathbb{S}_x(X_j): j=1,\ldots,n\right\}$. For $c>0$, let $L_{n}(x;\theta)$ be the empirical Huber objective as defined in \eqref{eq:3.3}. Then
\begin{equation*}
    \frac{\partial}{\partial \theta}L_{n}(x;\theta)<0 ~\text{for all}~\theta<\mathbb{S}_{x}(X_{(1)}),~\text{and}~\frac{\partial}{\partial \theta}L_{n}(x;\theta)>0~\text{for all}~\theta>\mathbb{S}_{x}(X_{(k)}).
\end{equation*}
 Consequently, ${\displaystyle\arg \min_{\theta\in \mathbb{R}} L_{n}(x;\theta) \subseteq [\mathbb{S}_{x}(X_{(1)}),\mathbb{S}_{x}(X_{(k)})] \subset [0,\infty).}$
\end{lemma3}
\noindent \textbf{Proof:} Fix $x\in \mathcal{M}$. As $\rho_c$ is continuously differentiable with $\psi_c=\rho_c'$,
\begin{equation*}
    \frac{\partial}{\partial \theta}L_{n}(x;\theta)=-\frac{1}{k}\sum_{j=1}^{k}\psi_c\left(\mathbb{S}_{x}(X_{(j)})- \theta\right).
\end{equation*}
If $\theta < \mathbb{S}_{x}(X_{(1)})$, then
$\left(\mathbb{S}_{x}(X_{(j)})-\theta\right)>0$ for $j=1,\cdots,k$. Hence, $\psi_c\left(\mathbb{S}_{x}(X_{(j)})-\theta\right)>0$ for every $j$, and therefore, 
\begin{equation*}
    \frac{\partial}{\partial \theta}L_{n}(x;\theta) <0.
\end{equation*}
Thus,
$L_{n}(x,\cdot)$ is strictly decreasing on $\left(-\infty,~\mathbb{S}_{x}(X_{(1)})\right)$. Similarly, if $\theta > \mathbb{S}_{x}(X_{(k)})$, then 
$\left(\mathbb{S}_{x}(X_{(j)})-\theta\right)<0$ for $j=1,\cdots,k$, so that $\psi_c\left(\mathbb{S}_{x}(X_{(j)})-\theta \right)<0$ for every $j$. Consequently,\begin{equation*}
    \frac{\partial}{\partial \theta}L_{n}(x;\theta)>0.
\end{equation*} 
Hence, $L_{n}(x,\cdot)$ is strictly increasing on $\left(\mathbb{S}_{x}(X_{(k)}),\infty\right)$. It follows that every minimizer of $L_{n}(x,\cdot)$ lies in the interval $[\mathbb{S}_{x}(X_{(1)}),\mathbb{S}_{x}(X_{(k)})]$. Since $\mathbb{S}_x(X_j)=[d(x,X_j)]^2\geq 0$,
\begin{equation*}
    \arg \min_{\theta\in \mathbb{R}} L_{n}(x;\theta) \subseteq [\mathbb{S}_{x}(X_{(1)}),\mathbb{S}_{x}(X_{(k)})] \subset [0,\infty).
\end{equation*}
This completes the proof. 
 
\hfill $\blacksquare$

\subsubsection*{A.4.\quad \hyperref[Lemma 3.1]{\textcolor{blue}{Proof of Lemma \ref{Lemma 3.1}}}}\label{Proof of Lemma 3.1}
By definition of conditional expectation,
\begin{equation*} 
\mathcal{L}(\theta):=\mathbb{E}\left[\rho_c\left(Z-\theta\right)\mid Z\leq q\right]=\frac{\mathbb{E}\left[\rho_c\left(Z-\theta\right)\mathbbm{1}_{\{Z\leq q\}}\right]}{\Pr(Z\leq q)}.
\end{equation*}
Now, define $g(z,\theta)= \rho_c\left(z-\theta \right) \cdot\mathbbm{1}_{\{z \leq q\}}$. Note that for every fixed $z$, the map $\theta \mapsto g(z,\theta)$ is continuously differentiable with $\frac{\partial}{\partial \theta} \left(g(z,\theta)\right)=- \psi_c(z-\theta)\cdot\mathbbm{1}_{\{z \leq q\}}$. As $\left|\psi_c(u)\right|\leq c,$ for all $u\in \mathbb{R}$, we have $\left|\frac{\partial}{\partial \theta}~g(z,\theta)\right|=\left|\psi_c(z-\theta)\mathbbm{1}_{\{Z\leq q\}}\right|\leq c < \infty$.
Therefore, in view of the Dominated Convergence Theorem (DCT) (\textcite{folland1999modern}, Theorem 2.24, p. 54), we have,
\begin{equation*}
    \frac{d}{d\theta} ~\mathcal{L}(\theta)= -\frac{1}{\Pr(Z\leq q)}\mathbb{E}\left[\psi_c\left(Z-\theta\right)\mathbbm{1}_{\{Z\leq q\}}\right]=-\mathbb{E}\left[\psi_c\left(Z-\theta\right)\mid Z\leq q\right].
\end{equation*}

Next, we study on the continuity of $\mathcal{L}'$.
Let, $\theta_n \rightarrow \theta$ as $n\rightarrow\infty$, and since $\psi_c$ is continuous, $$\psi_c\left(Z-\theta_n\right) \rightarrow \psi_c\left(Z-\theta\right)$$ pointwise, as $n\rightarrow\infty$. As $\left|\psi_c\left(Z-\theta_n\right)\right| \leq c$ for all $n$, 
using Dominated Convergence Theorem 
\begin{equation*}
    \mathcal{L}'(\theta_n)=-\mathbb{E}\left[\psi_c\left(Z-\theta_n\right)\mid Z\leq q\right] \rightarrow -\mathbb{E}\left[\psi_c\left(Z-\theta\right)\mid Z\leq q\right]=\mathcal{L}'(\theta).
\end{equation*}
Hence, $\mathcal{L}':\mathbb{R}\rightarrow \mathbb{R}$ is continuous, establishing $\mathcal{L} \in C^1(\mathbb{R})$. It completes the proof. 
 
\hfill $\blacksquare$

\subsubsection*{A.5.\quad \hyperref[proposition 3.1]{\textcolor{blue}{Proof of Proposition \ref{proposition 3.1}}}} \label{Proof of proposition 3.1}
\noindent For fixed $x \in \mathcal{M}$, the map $\theta \mapsto \psi_c\left(\mathbb{S}_x(X)-\theta\right)$ is continuous and uniformly bounded by $c$. As $\psi_c$ is non-decreasing, for $\theta_1<\theta_2$,~ $\psi_c\left(\mathbb{S}_x(X)-\theta_1\right)\geq \psi_c\left(\mathbb{S}_x(X)-\theta_2\right)$, with strict inequality with positive probability whenever $\mathbb{S}_x(X)$ is non-degenerate, under Assumption (A4). Hence, $\Phi_x$ is strictly decreasing in $\theta$, and continuity of $\Phi_x$ follows from \hyperref[Lemma 3.1]{\textcolor{blue} Lemma \ref{Lemma 3.1}}.


    Further, $$\psi_c\left(\mathbb{S}_x(X)-\theta\right)\rightarrow c$$ as $\theta \rightarrow -\infty$, and $$\psi_c\left(\mathbb{S}_x(X)-\theta\right)\rightarrow -c$$ as $\theta \rightarrow \infty$. As $|\psi_c(u)|\leq c$ for all $u$, the Dominated Convergence Theorem yields, 
    \begin{equation*}
        \lim_{\theta \rightarrow -\infty}\Phi_x(\theta)=c>0,  \quad\lim_{\theta \rightarrow \infty}\Phi_x(\theta)=-c<0.
    \end{equation*} 
Consequently, in view of Intermediate Value Theorem (\textcite{rudin1976}, Theorem 4.23, p. 93) , there exists $\theta_{\mathbb{P},m,c}(x)\in \mathbb{R}$ such that $\Phi_x(\theta)=0$.
Finally, the uniqueness of the solution follows from the strict monotonicity of $\Phi_x$. This completes the proof. 
 
\hfill $\blacksquare$

\subsubsection*{A.6.\quad \hyperref[proposition 3.2]{\textcolor{blue}{Proof of Proposition \ref{proposition 3.2}}}} \label{Proof of proposition 3.2}
\noindent As $\mathbb{S}_x(X):=[d(x,X)]\geq 0$, for every $\theta<0$, we have $\bigl(\mathbb{S}_x(X)-\theta\bigr)>0$, and since $\psi_c(u)>0$ for all $ u>0$, it follows that $\Phi_x(\theta)>0$ for all $\theta<0$.
Moreover, since $\theta_{\mathbb{P},m,c}(x)$ is the unique solution of $\Phi_x(\theta)=0$, therefore, in view of strict monotonicity of $\Phi_x$ (\hyperref[proposition 3.1]{\textcolor{blue} Proposition \ref{proposition 3.1}}), $\theta_{\mathbb{P},m,c}(x)\geq 0$.
Now, under Assumption (A3), absolutely continuity of $\mathbb{P}$ implies that $\mathbb{P}(X=x)=0$ for every $x \in \mathcal{M}$. Hence, $\mathbb{S}_x(X) =[d(x,X)]^2>0 $.
Consequently, $\psi_c\left(\mathbb{S}_x(X)\right)>0$, and in particular,
\begin{equation*}
    \Phi(0)= \mathbb{E}_{\mathbb{P}}\left[\psi_c\left(\mathbb{S}_x(X)\right)\mid \mathbb{S}_x(X)\leq F_x^{-1}(m)\right]>0.
\end{equation*}
Furthermore, since $\Phi_x$ is decreasing using \hyperref[proposition 3.1]{\textcolor{blue} Proposition \ref{proposition 3.1}} and has a unique root, we have, $\theta_{\mathbb{P},m,c}(x)> 0$.

For $\theta >F_x^{-1}(m)$, $\bigl(\mathbb{S}_x(X)-\theta\bigr)<0$ when $\mathbb{S}_x(X) \leq F_x^{-1}(m)$. This fact implies that, $\Phi_x(\theta)<0$.
Again using the monotonicity of $\Phi_x$, we obtain $\theta_{\mathbb{P},m,c}(x)\leq F_x^{-1}(m)$, and for any $m\in(0,1)$,
\begin{align*}
    F_x^{-1}(m)=\inf \{u:F_x(u)\geq m  \}
    &=\inf \{u:\mathbb{P}\left([d(x,X)]^2\leq u \right)\geq m  \}\\ &= \inf \{\varepsilon^2: \mathbb{P}\left(d(x,X)\leq \varepsilon \right)\geq m  \}\\
    &=\left(\inf \{\varepsilon^2: \mathbb{P}\left(d(x,X)\leq \varepsilon \right)\geq m  \} \right)^2=q^2_m(x).
\end{align*}
Next, under Assumption (A2), $F_x^{-1}(m)=q^2_m(x)\leq (\frac{m}{a})^{\frac{2}{b}}$ uniformly over $x\in \mathcal{M}$. Therefore,
\begin{equation*}
    \theta_{\mathbb{P},m,c}(x)\in(0, F_x^{-1}(m)] \subseteq \left(0,\left(\frac{m}{a}\right)^{\frac{2}{b}} \right], 
\end{equation*}
and $\delta_{\mathbb{P},m,c}(x)=\sqrt{\theta_{\mathbb{P},m,c}(x)} \leq \left(\frac{m}{a}\right)^{\frac{1}{b}}$ uniformly over $x\in \mathcal{M}$.
It completes the proof. 
 
\hfill $\blacksquare$


\subsubsection*{A.7.\quad \hyperref[proposition 3.3]{\textcolor{blue}{Proof of Proposition \ref{proposition 3.3}}}} \label{Proof of proposition 3.3}
\noindent Under Assumptions (A3) and (A4), $x\rightarrow \theta_{\mathbb{P},m,c}(x)$ is continuous on $(\mathcal{M},d)$ (see \hyperref[Theorem 3.1]{\textcolor{blue} Theorem \ref{Theorem 3.1}}). As $\mathcal{M}$ is a compact metric space by Assumption (A1),
the Extreme Value Theorem (\textcite{rudin1976}, Theorem 4.16) yields the existence of $x_*\in \mathcal{M}$ satisfying
    \begin{equation*}
        \theta_{\inf}=\inf_{x\in \mathcal{M}}\theta_{\mathbb{P},m,c}(x)=\min _{x\in \mathcal{M}}\theta_{\mathbb{P},m,c}(x)=\theta_{\mathbb{P},m,c}(x_*).
    \end{equation*}
Now, by \hyperref[proposition 3.2]{\textcolor{blue} Proposition \ref{proposition 3.2}}, $\theta_{\mathbb{P},m,c}(x)>0$ for all $x \in \mathcal{M}$, which implies that $\theta_{\inf}>0$.
Finally, $$\inf_{x\in \mathcal{M}}\delta_{\mathbb{P},m,c}(x)= \sqrt{\theta_{\inf}}>0$$ follows from \eqref{eq:3.8}. 
This completes the proof. 
 
\hfill $\blacksquare$

\subsection*{A.8.}
\begin{lemma} \label{Lemma 3.2}
Let, $G(x,\theta):=\mathbb{E}_{\mathbb{P}} \left[\psi_c\left(\mathbb{S}_x(X)-\theta\right) \mathbbm{1}_{\{ \mathbb{S}_x(X)\leq F_x^{-1}(m)\}}\right].$
Under Assumptions (A3) and (A4), for every $x\in \mathcal{M}$ and every $\theta^*\in \mathbb{R}$ satisfying $G(x,\theta^*)=0$, the following holds
     \begin{equation*}
         -\frac{\partial}{\partial\theta} G(x,\theta^*) \geq h_{\inf} \left[ \min \left(F_{x}^{-1}(m), ~\theta^*+c \right)- \max\left(0, ~\theta^*-c\right)\right]>0,
     \end{equation*}
     and consequently,
    \begin{equation*}
        \frac{\partial}{\partial\theta} G(x,\theta^*)<0,
         \quad \text{for all}\quad x\in \mathcal{M}.
    \end{equation*}
\end{lemma}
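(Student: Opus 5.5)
Fix $x\in\mathcal{M}$ and write $q:=F_x^{-1}(m)$, $S:=\mathbb{S}_x(X)$. By (A4), $S$ has a density $h:=h_{\mathbb{S}_x}$ with $h\geq h_{\inf}$ a.e.\ on $\mathcal{S}_{\mu_x}$, and the support of $S$ contains $[0,q]$. The plan is to differentiate $G(x,\cdot)$ under the expectation, identify the derivative as the mass of a window, and then bound that mass below using $h_{\inf}$ and the root condition $G(x,\theta^*)=0$. Concretely:
\[
G(x,\theta)=\int_0^{q}\psi_c(s-\theta)\,h(s)\,ds .
\]
The map $\theta\mapsto\psi_c(s-\theta)$ is $1$-Lipschitz, with derivative $-\mathbbm{1}_{\{|s-\theta|<c\}}$ for all $s\neq\theta\pm c$, a Lebesgue-null set. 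Because the difference quotients are bounded by $1$, the Dominated Convergence Theorem applies, exactly as in Lemma \ref{Lemma 3.1}, and gives
\[
-\frac{\partial}{\partial\theta}G(x,\theta)=\int_0^{q}\mathbbm{1}_{\{\theta-c<s<\theta+c\}}\,h(s)\,ds=\Pr\bigl(\theta-c<S<\theta+c,\ S\leq q\bigr).
\]

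Next I would bound this window mass from below. The window is $(\max(0,\theta^*-c),\,\min(q,\theta^*+c))$, and on it $h\geq h_{\inf}$. Hence the derivative term is at least $h_{\inf}\bigl[\min(q,\theta^*+c)-\max(0,\theta^*-c)\bigr]$, which is the first inequality of the statement. The only subtlety here is justifying that $[0,q]\subseteq\mathcal{S}_{\mu_x}$. This holds because $F_x$ is continuous and $F_x(q)=m>0$, and I would read (A4) as asserting the lower density bound on the interval carrying the mass. If that reading is too strong, one could still restrict to the part of the window lying in the support; positivity would then need the argument below.

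The main obstacle is strict positivity of the bracket, i.e.\ showing the window is nondegenerate. For this I will use the root condition together with Proposition \ref{proposition 3.2}, which gives $0<\theta^*\leq q$ for the unique root (the root of $G$ coincides with that of $\Phi_x$, since the two differ only by the positive factor $\Pr(S\leq q)=m$). From $\theta^*>0$ and $c>0$ we get $\max(0,\theta^*-c)<\theta^*$. For the other endpoint, $\theta^*+c>\theta^*$, and I need $\theta^*<q$ strictly so that $\min(q,\theta^*+c)>\theta^*$. If instead $\theta^*=q$, then $\psi_c(S-q)<0$ on $\{S<q\}$, which has probability $m>0$, so $G(x,q)<0$, a contradiction. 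Therefore $\max(0,\theta^*-c)<\theta^*<\min(q,\theta^*+c)$. The bracket is strictly positive, and since $h_{\inf}>0$ the bound is strictly positive, so $\partial_\theta G(x,\theta^*)<0$ for every $x$.
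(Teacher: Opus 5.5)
Your proposal follows the paper's proof essentially step for step: you differentiate under the expectation by dominated convergence to identify $-\partial_\theta G(x,\theta)$ with the $\mu_x$-mass of the window $\bigl(\max(0,\theta-c),\min(F_x^{-1}(m),\theta+c)\bigr)$, bound that mass below by $h_{\inf}$ times the window's length, and use $0<\theta^*\le F_x^{-1}(m)$ from Proposition~\ref{proposition 3.2} for nondegeneracy (your extra argument that $\theta^*<F_x^{-1}(m)$ is valid but unnecessary, since $\max(0,\theta^*-c)<\theta^*\le F_x^{-1}(m)$ already suffices). One caveat: continuity of $F_x$ and $F_x(F_x^{-1}(m))=m$ do not by themselves give $[0,F_x^{-1}(m)]\subseteq\mathcal{S}_{\mu_x}$ (this fails for $x\notin\mathcal{S}_{\mathbb{P}}$, where $\mathbb{S}_x(X)\ge d(x,\mathcal{S}_{\mathbb{P}})^2>0$), so the $h_{\inf}$ bound over the whole window rests, exactly as in the paper, on reading (A4) as a lower density bound on that entire interval.
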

\noindent \textbf{Proof:} 
As $\psi_c$ is Lipschitz and piecewise continuously differentiable, in view of DCT, we have
\begin{equation}\label{eq:L.8.1}
    \frac{\partial}{\partial\theta} G(x,\theta)=-\mathbb{E}_{\mathbb{P}}\left[\psi'_c\left(\mathbb{S}_x(X)-\theta\right)\cdot\mathbbm{1}_{\{\mathbb{S}_x(X)\leq F_{x}^{-1}(m)\}}\right].
\end{equation}
Moreover, due to the fact that $\psi^{'}_c(u)=\mathbbm{1}_{\{|u|\leq c\}}$ almost everywhere,
\begin{align}
    \frac{\partial}{\partial\theta} G(x,\theta)  
    &= -\mathbb{E}_{\mathbb{P}}\left[\mathbbm{1}_{\{|\left(\mathbb{S}_x(X)-\theta\right)|\leq c\}}\cdot \mathbbm{1}_{\{\mathbb{S}_x(X)\leq F_{x}^{-1}(m)\}}\right] \nonumber \\[1ex]  
    &= -\mathbb{E}_{\mathbb{P}}\left[\mathbbm{1}_{\{|\left(\mathbb{S}_x(X)-\theta\right)|\leq c,~\mathbb{S}_x(X)\leq F_{x}^{-1}(m)\}}\right] \nonumber \\[1ex]  
    &=-\mathbb{P}\left(|\left(\mathbb{S}_x(X)-\theta\right)|\leq c,~\mathbb{S}_x(X)\leq F_{x}^{-1}(m)\right) \nonumber  \\[1ex] 
    &= -\mathbb{P}\left[\max(0,\theta-c)\leq \mathbb{S}_x(X) \leq \min(F_{x}^{-1}(m),\theta+c)\right] \nonumber \\[1ex]  
     &= -\int\limits_{\max\left(0, ~\theta-c\right)}^{\min\left(F_{x}^{-1}(m), ~\theta+c\right)} h_{\mathbb{S}_x}(u)~du.
\end{align}

Now, let $\theta^*$ satisfies $G(x,\theta^*)=0$.
By \hyperref[proposition 3.2]{\textcolor{blue} Proposition \ref{proposition 3.2}}, $0<\theta^*\leq F_{x}^{-1}(m)$, and hence, $$\max(0,\theta^* - c) < \min(F_{x}^{-1}(m),\theta^* + c).$$ Therefore, by Assumption (A4),
\small
\begin{equation}
    -\frac{\partial}{\partial\theta} G(x,\theta^*) \geq h_{\inf}\left[ \min\left(F_{x}^{-1}(m), ~\theta^* + c\right)- \max\left(0, ~\theta^*-c\right)\right]>0,
\end{equation}
\normalsize
which implies
\begin{equation}
    \frac{\partial}{\partial\theta} G(x,\theta^*)<0,\quad \text{for all}\quad x\in \mathcal{M}.
\end{equation}
It completes the proof. 
 
\hfill $\blacksquare$


\subsection*{Appendix B} \label{Appendix B}
\subsubsection*{B.1.\quad \hyperref[Theorem 3.1]{\textcolor{blue}{Proof of Theorem \ref{Theorem 3.1}}}} \label{Proof of Theorem 3.1}
\noindent The proof proceeds in two steps. We first establish continuity of the map $x\mapsto\theta_{\mathbb{P},m,c}(x)$, where $\theta_{\mathbb{P},m,c}(x)$ is the unique solution of $\Phi_x(\theta)=0$, in view of Implicit Function Theorem (\textcite{rudin1976}, Theorem 9.28). Continuity of $\delta_{\mathbb{P},m,c}(x)$ then follows from \hyperref[proposition 3.2]{\textcolor{blue} Proposition \ref{proposition 3.2}} and \eqref{eq:3.8}.

 To study the continuity of the map $x\mapsto\theta_{\mathbb{P},m,c}(x)$, consider the function $G:\mathcal{M}\times \mathbb{R} \rightarrow \mathbb{R}$ by 
    \begin{equation*}
    G(x,\theta):=\mathbb{E}_{\mathbb{P}}\left[\psi_c\left(\mathbb{S}_x(X)-\theta\right) \mathbbm{1}_{\{ \mathbb{S}_x(X)\leq F_x^{-1}(m)\}}\right].
    \end{equation*}
    %
    The definition of conditional expectation implies that
    $G(x,\theta)=m~\Phi_x(\theta)$. Hence, $G(x,\cdot)$ and $\Phi_x(\cdot)$ have the same roots. Hence it is good enough to work on the function $G$.

    We first show that $G$ is jointly continuous on $\mathcal{M}\times \mathbb{R}$. Let $(x_n,\theta_n)\rightarrow (x,\theta)$ as $n \rightarrow \infty$ in $\mathcal{M}\times \mathbb{R}$, then by triangle inequality,
    \begin{equation*}
    \big|G(x_n,\theta_n)-G(x,\theta)\big|\leq \underbrace{\big|G(x_n,\theta_n)-G(x_n,\theta)\big|}_{\textbf{(E1)}}+\underbrace{\big|G(x_n,\theta)-G(x,\theta)\big|}_{\textbf{(E2)}}.
\end{equation*}
Let us now work on \textbf{(E1)}.
\begin{align*}
    \textbf{(E1)}:&\quad \big|G(x_n,\theta_n)-G(x_n,\theta) \big|\\
    &=\left|\mathbb{E}_{\mathbb{P}}\left[\psi_c\left(\mathbb{S}_{x_n}(X)-\theta_n\right) \mathbbm{1}_{\{\mathbb{S}_{x_n}(X)\leq F_{x_n}^{-1}(m)\}}\right]-\mathbb{E}_{\mathbb{P}}\left[\psi_c\left(\mathbb{S}_{x_n}(X)-\theta\right) \mathbbm{1}_{\{\mathbb{S}_{x_n}(X)\leq F_{x_n}^{-1}(m)\}}\right] \right|,\\[1ex]  
    &=\left|\mathbb{E}_{\mathbb{P}}\left[\psi_c\left(\mathbb{S}_{x_n}(X)-\theta_n\right)-\psi_c\left(\mathbb{S}_{x_n}(X)-\theta\right)\right]\mathbbm{1}_{\{\mathbb{S}_{x_n}(X)\leq F_{x_n}^{-1}(m)\}} \right|,\\[1ex] 
    &\leq \left|\mathbb{E}_{\mathbb{P}}\left[\left(\mathbb{S}_{x_n}(X)-\theta_n\right)-\left(\mathbb{S}_{x_n}(X)-\theta\right)\right] \right|,\qquad \qquad \qquad[\text{As $\psi_c$ is 1-Lipschitz}],\\[1ex]  
    &\leq \mathbb{E}_{\mathbb{P}}\left|\left[\left(\mathbb{S}_{x_n}(X)-\theta_n\right)-\left(\mathbb{S}_{x_n}(X)-\theta\right)\right] \right|,\qquad \qquad \qquad [\text{By Jensen's Inequality}],\\[1ex]  
    &=\left|\theta_n-\theta \right|\rightarrow 0, ~ \text{as}~ n\rightarrow \infty.
\end{align*}
Next, we work on \textbf{(E2)}. In \textbf{(E2)}, by Assumption (A4), $F_x$ is continuous and satisfies $F_x(F_x^{-1}(m))=m$. Moreover, as $\mathbb{S}_{x_n}(X) \rightarrow \mathbb{S}_x(X)$ almost surely, we have $F_{x_n}(t)\rightarrow F_{x}(t)$ at every continuity point $t$ of $F_x$. Consequently, the quantile convergence 
$F_{x_n}^{-1}(m) \rightarrow F_{x}^{-1}(m)$, which  follows from the result in
\textcite{vanderVaart1998}, Lemma 21.2. Therefore,
\begin{equation*}
 \psi_c\left(\mathbb{S}_{x_n}(X)-\theta\right) \mathbbm{1}_{\{\mathbb{S}_{x_n}(X)\leq F_{x_n}^{-1}(m)\}}\rightarrow \psi_c\left(\mathbb{S}_x(X)-\theta\right) \mathbbm{1}_{\{\mathbb{S}_x(X)\leq F_x^{-1}(m)\}},
\end{equation*}
almost surely as $n \rightarrow \infty$.

Now, since $|\psi_c(u)|\leq c$ for all $u$, the DCT leads to
$G(x_n,\theta)\rightarrow G(x,\theta)$ as $n\rightarrow\infty$.
Hence, $G$ is jointly continuous on $\mathcal{M}\times\mathbb{R}$. The continuity of $x\mapsto\theta_{\mathbb{P},m,c}(x)$ will follow from the Implicit Function Theorem. Having established the joint continuity of $G$, it remains to verify that the estimating equation is non-degenerate at its root. 

 The joint continuity of $G$, together with $\partial_{\theta}G<0$ at every root (\hyperref[Lemma 3.2]{\textcolor{blue} Lemma \ref{Lemma 3.2}}) shows that the hypotheses of the Implicit Function Theorem are satisfied at every $(x,\theta_{\mathbb{P},m,c}(x)) \in \mathcal{M}\times \mathbb{R}$. 
Hence, for every $x_0\in \mathcal{M}$, there exists a neighbourhood $U_{x_0}\subset \mathcal{M}$ of $x_0$ and a unique continuous map $\theta^*:U_{x_0} \rightarrow \mathbb{R}$ such that $G(x,\theta^*(x))=0$ for all $x\in U_{x_0}$ (see, \textcite{rudin1976}, Theorem 9.28). 

Finally, as the root of $G(x,\theta)=0$
is unique by \hyperref[proposition 3.1]{\textcolor{blue} Proposition \ref{proposition 3.1}}, it follows that $\theta^*(x)= \theta_{\mathbb{P},m,c}(x)$ for all $x\in U_{x_0}$. Hence, $x\mapsto \theta_{\mathbb{P},m,c}(x)$ is continuous on $U_{x_0}$. Moreover, since $x_0\in \mathcal{M}$ is chosen arbitrarily, it follows that the map $x\mapsto \theta_{\mathbb{P},m,c}(x)$ is continuous on $\mathcal{M}$. Finally, Continuity of $x\mapsto \delta_{\mathbb{P},m,c}(x)=\sqrt{\theta_{\mathbb{P},m,c}(x)}$ follows in view of continuity of $x\mapsto \theta_{\mathbb{P},m,c}(x)$.
It completes the proof. 
 
\hfill $\blacksquare$

\begin{corollary} \label{Corollary 3.1}
    Under Assumptions (A3) and (A4), there exists a constant $\beta>0$, independent of $x\in \mathcal{M}$, such that
    \begin{equation*}
        \frac{\partial}{\partial\theta} G(x,\theta_{\mathbb{P},m,c}(x))<-\beta,\quad \text{for all}\quad x\in \mathcal{M}.
    \end{equation*}
\end{corollary}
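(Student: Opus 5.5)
Corollary \ref{Corollary 3.1} asks for a lower bound on $-\partial_\theta G$ at the root that is uniform in $x$. I would start from the pointwise bound of Lemma \ref{Lemma 3.2}. At $\theta^*=\theta_{\mathbb{P},m,c}(x)$ it gives
\begin{equation*}
-\frac{\partial}{\partial\theta}G(x,\theta_{\mathbb{P},m,c}(x))\ \ge\ h_{\inf}\,W(x),
\end{equation*}
where
\begin{equation*}
W(x):=\min\bigl(F_x^{-1}(m),\,\theta_{\mathbb{P},m,c}(x)+c\bigr)-\max\bigl(0,\,\theta_{\mathbb{P},m,c}(x)-c\bigr).
\end{equation*}
It therefore suffices to show $\inf_{x\in\mathcal{M}}W(x)>0$. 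We can then take $\beta:=\tfrac12 h_{\inf}\inf_x W(x)$, so that the inequality is strict.

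To bound $W(x)$ from below, I would split into the four cases determined by which term attains the min and which attains the max. By Proposition \ref{proposition 3.2}, $0<\theta_{\mathbb{P},m,c}(x)\le F_x^{-1}(m)$, and this gives each case directly.
\begin{itemize}
\item If the min is $\theta+c$ and the max is $\theta-c$, then $W=2c$.
\item If the min is $\theta+c$ and the max is $0$, then $W=\theta+c\ge c$.
\item If the min is $F_x^{-1}(m)$ and the max is $\theta-c$, then $W=F_x^{-1}(m)-\theta+c\ge c$.
\item If the min is $F_x^{-1}(m)$ and the max is $0$, then $W=F_x^{-1}(m)\ge\theta_{\mathbb{P},m,c}(x)\ge\theta_{\inf}$.
\end{itemize}
In every case,
\begin{equation*}
W(x)\ \ge\ \min\bigl(c,\,\theta_{\inf}\bigr),
\end{equation*}
and $\theta_{\inf}>0$ by Proposition \ref{proposition 3.3}. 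So $\beta=\tfrac12 h_{\inf}\min(c,\theta_{\inf})$ works, and it is independent of $x$.

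The main obstacle is the case $W=F_x^{-1}(m)$. Here a uniform positive lower bound on the quantile, or equivalently on $\theta$, is genuinely needed, since $c>0$ alone does not help. Proposition \ref{proposition 3.3} supplies this bound, but it relies on (A1) compactness together with the continuity from Theorem \ref{Theorem 3.1}.

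If one insists on using only (A3)--(A4), as the corollary states, I would try a second route. From (A4), $m=F_x(F_x^{-1}(m))\le h_{\sup}F_x^{-1}(m)$, so $F_x^{-1}(m)\ge m/h_{\sup}$ uniformly in $x$. This gives $W(x)\ge\min(c,\,m/h_{\sup})$, and hence $\beta=\tfrac12 h_{\inf}\min(c,m/h_{\sup})$, without compactness. I would present this second route as the primary argument, because it is explicit and avoids the continuity step entirely.
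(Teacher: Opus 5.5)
Your argument is correct, and it takes a genuinely different route from the paper. The paper's proof is qualitative. By Theorem~\ref{Theorem 3.1} and joint continuity of $G$ it asserts that $x\mapsto-\frac{\partial}{\partial\theta}G(x,\theta_{\mathbb{P},m,c}(x))$ is continuous, Lemma~\ref{Lemma 3.2} makes this map pointwise positive, and compactness of $\mathcal{M}$ then gives a positive minimum, which is taken as $\beta$. You instead use the quantitative form of Lemma~\ref{Lemma 3.2} (density floor times window length), a case analysis that needs only $0\le\theta\le F_x^{-1}(m)$, and the uniform quantile bound $F_x^{-1}(m)\ge m/h_{\sup}$, which follows from $m=F_x(F_x^{-1}(m))\le h_{\sup}F_x^{-1}(m)$.

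Your route buys several things:
\begin{itemize}
\item an explicit constant $\beta=\tfrac12 h_{\inf}\min(c,m/h_{\sup})$ depending only on $m,c,h_{\inf},h_{\sup}$, which would make $C^*$ in Theorem~\ref{Theorem 4.3} explicit;
\item no reliance on compactness, Theorem~\ref{Theorem 3.1}, or Proposition~\ref{proposition 3.3};
\item no need for joint continuity of $\frac{\partial}{\partial\theta}G$, which the paper infers from joint continuity of $G$ together with continuity of $\frac{\partial}{\partial\theta}G(x,\cdot)$ for each fixed $x$, an implication that fails in general;
\item the strict inequality, whereas setting $\beta$ equal to the minimum, as the paper does, only gives $\frac{\partial}{\partial\theta}G\le-\beta$.
\end{itemize}
Also, the root property enters your argument only through $0\le\theta\le F_x^{-1}(m)$. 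So your bound holds for every such $\theta$, not just on the graph of $\theta_{\mathbb{P},m,c}$. That is closer to what the mean-value steps in the proofs of Theorems~\ref{Theorem 3.2} and~\ref{Theorem 4.3} actually use at the intermediate point $\xi$.

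The paper's route, in exchange, is shorter and would still work if Lemma~\ref{Lemma 3.2} supplied only positivity rather than the width bound.

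One simplification of your fourth case: there the window is all of $[0,F_x^{-1}(m)]$, so $-\frac{\partial}{\partial\theta}G(x,\theta)=F_x(F_x^{-1}(m))=m$ exactly. Hence $\beta=\tfrac12\min(h_{\inf}c,\,m)$ works with no quantile bound at all.
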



\noindent \textbf{Proof:} \hyperref[Theorem 3.1]{\textcolor{blue} Theorem \ref{Theorem 3.1}} implies that the map $x \mapsto \delta_{\mathbb{P},m,c}(x)$ is continuous on $\mathcal{M}$. As $G$ is jointly continuous on $\mathcal{M}\times \mathbb{R}$ and continuously differentiable with respect to $\theta$ (see, \eqref{eq:L.8.1}), the map
\begin{equation*}
    x \mapsto -\frac{\partial}{\partial\theta} G(x,\theta_{\mathbb{P},m,c}(x))
\end{equation*}
is continuous on $\mathcal{M}$. Moreover, \hyperref[Lemma 3.2]{\textcolor{blue} Lemma \ref{Lemma 3.2}} asserts that this function is strictly positive for every $x\in \mathcal{M}$. Further, as $\mathcal{M}$ is compact, by Extreme Value Theorem (\textcite{rudin1976}, Theorem 4.16), it attains its minimum and hence, there exists a constant $\beta>0$, independent of $x$, such that
\begin{equation*}
    \beta:=\min_{x\in \mathcal{M}}\left(-\frac{\partial}{\partial\theta} G\left(x,\theta_{\mathbb{P},m,c}(x)\right)\right) >0.
\end{equation*}
This completes the proof. 
 
\hfill $\blacksquare$

\subsubsection*{B.2.\quad \hyperref[Theorem 3.2]{\textcolor{blue}{Proof of Theorem \ref{Theorem 3.2}}}} \label{Proof of Theorem 3.2}
\noindent
We first establish a Lipschitz bound for the estimating equation
\begin{equation*}
    \Phi_x(\theta)= \mathbb{E}_{\mathbb{P}}\left[\psi_c\left(\mathbb{S}_x(X)-\theta\right)\mid \mathbb{S}_x(X)\leq F_x^{-1}(m)\right].
\end{equation*}
In order to establish it, for $x,y\in \mathcal{M}$, define the measurable subsets of $\mathcal{M}$ as follows,
\begin{align*}
    A_x:=&\{z\in \mathcal{M}:\mathbb{S}_x(z)\leq F_x^{-1}(m)\},\\
    A_y:=&\{z\in \mathcal{M}:\mathbb{S}_y(z)\leq F_y^{-1}(m)\}.
\end{align*}
For $X:\Omega \rightarrow \mathcal{M}$ with distribution $\mathbb{P}$, the corresponding events are on $\Omega$ are $\{\omega\in \Omega: X(\omega) \in A_x\}=\{\omega \in \Omega:\mathbb{S}_x(X(\omega))\leq F_x^{-1}(m)\}$,
and
$\{\omega\in \Omega: X(\omega) \in A_y\}=\{\omega \in \Omega:\mathbb{S}_y(X(\omega))\leq F_x^{-1}(m)\}$.
For brevity, we subsequently use 
$A_x:=\{\mathbb{S}_x(X)\leq F_x^{-1}(m)\}$ and $A_y:=\{\mathbb{S}_y(X)\leq F_y^{-1}(m)\}$. Under Assumption (A4), $\mathbb{P}(A_x)=m$, and hence,
$$\Phi_x(\theta)=\frac{1}{m} ~\mathbb{E}_{\mathbb{P}}\left[\psi_c\left(\mathbb{S}_x(X)-\theta\right)\mathbbm{1}_{A_x}\right].$$ Therefore,
\small
\begin{align}
    |\Phi_x(\theta)-\Phi_y(\theta)|&=  \frac{1}{m} \bigg|\mathbb{E}_{\mathbb{P}}\left[\psi_c\left(\mathbb{S}_x(X)-\theta\right) \mathbbm{1}_{A_x}\right]-\mathbb{E}_{\mathbb{P}}\left[\psi_c\left(\mathbb{S}_y(X)-\theta\right) \mathbbm{1}_{A_y}\right] \bigg|,\nonumber  \\[1ex]
    & = \frac{1}{m} \bigg| \mathbb{E}_{\mathbb{P}}  \Big[\psi_c\left(\mathbb{S}_x(X)-\theta \right) \mathbbm1_{A_x}-\psi_c\left(\mathbb{S}_y(X)-\theta\right) \mathbbm{1}_{A_y}\Big] \bigg|, \nonumber  \\[1ex] 
    &  \leq \frac{1}{m} \mathbb{E}_{\mathbb{P}} \Big| \psi_c \left(\mathbb{S}_x(X)-\theta \right) \mathbbm{1}_{A_x}-\psi_c\left(\mathbb{S}_y(X)-\theta\right) \mathbbm{1}_{A_y}\Big|, \quad \text{(By Jensen's inequality)}  \nonumber  \\[1ex] 
    &  =  \frac{1}{m} \left[\mathbb{E}_{\mathbb{P}} \left(\left| \psi_c\left(\mathbb{S}_x(X)-\theta \right) - \psi_c\left(\mathbb{S}_y(X)- \theta\right)  \right|\mathbbm{1}_{A_x} \right)+\mathbb{E}_{\mathbb{P}} \left(\big| \psi_c\left(\mathbb{S}_y(X)-\theta \right) \big| ~\big|\mathbbm{1}_{A_x}-\mathbbm{1}_{A_y}\big|\right)\right]. \nonumber  
\end{align}
\normalsize
As, $\psi_c$ is 1-Lipschitz and $|\psi_c|\leq c$, along with \hyperref[proposition 2.2]{\textcolor{blue} Proposition \ref{proposition 2.2}} (Assumption (A1)), we have
\begin{align}\label{eq:3.9}
    |\Phi_x(\theta)-\Phi_y(\theta)| &\leq \frac{1}{m} \mathbb{E}_{\mathbb{P}} \big| \mathbb{S}_x(X)-\mathbb{S}_y(X) \big| + \frac{c}{m} \mathbb{E}_{\mathbb{P}} \left| \mathbbm{1}_{A_x}- \mathbbm{1}_{A_y}\right|,\nonumber \\
    & \leq \frac{2D}{m} d(x,y)+\frac{c}{m}\underbrace{\mathbb{E}_{\mathbb{P}} \left| \mathbbm{1}_{A_x}- \mathbbm{1}_{A_y}\right|}_{\textbf{(E3)}}.
    \end{align}
    
    Observe that, the term $\big(\mathbbm{1}_{A_x}- \mathbbm{1}_{A_y}\big)$ is non-zero only on the symmetric difference, as follows
\begin{align*}
    A_x\triangle A_y=&\{\mathbb{S}_x(X) \leq  F_x^{-1}(m)\}\triangle \{\mathbb{S}_y(X) \leq  F_y^{-1}(m)\} \nonumber  \\[1ex]  
    = & \underbrace{\{\mathbb{S}_x(X) \leq  F_x^{-1}(m),~\mathbb{S}_y(X)> F_y^{-1}(m)\}}_{A_x \cap A^c_y} \cup \underbrace{\{\mathbb{S}_x(X)> F_x^{-1}(m),~\mathbb{S}_y(X) \leq  F_y^{-1}(m)\}}_{A_y\cap A^c_x}.
\end{align*}
\noindent From (\textbf{E3}), we have,
\begin{equation}\label{eq:3.10}
    \mathbb{E}_{\mathbb{P}} \left| \mathbbm{1}_{A_x}- \mathbbm{1}_{A_y}\right|  \leq \mathbb{E}_{\mathbb{P}}\left| \mathbbm{1}_{\left\{A_x \triangle A_y\right\}}\right| \leq \mathbb{P}\left( A_x \triangle A_y \right) = \mathbb{P}\left(A_x \cap A^c_y\right)+\mathbb{P}\left(A_y \cap A^c_x\right). 
\end{equation}
\noindent Consider the event $\left(A_x\cap A_y\right)$. On this event, $\mathbb{S}_x(X) \leq  F_x^{-1}(m)$ and $\mathbb{S}_y(X) >  F_y^{-1}(m)$. Therefore,
\begin{align*}
     & ( F_x^{-1}(m)-\mathbb{S}_x(X)) < \left(\mathbb{S}_y(X)- \mathbb{S}_x(X)\right) + ( F_x^{-1}(m) - F_y^{-1}(m)),\nonumber  \\[1ex]
     \text{i.e.}~ & \left| F_x^{-1}(m)-\mathbb{S}_x(X)\right| < \left|\mathbb{S}_y(X)- \mathbb{S}_x(X)\right| + \left| F_x^{-1}(m) - F_y^{-1}(m)\right|.
\end{align*}
Using \hyperref[proposition 2.2]{\textcolor{blue}{Proposition \ref{proposition 2.2}}} together with \hyperref[proposition 2.3]{\textcolor{blue}{Proposition \ref{proposition 2.3}}}, we have
\begin{equation*}
    A_x\cap A^c_y \subseteq \left\{\left| F_x^{-1}(m)-\mathbb{S}_x(X)\right| <  4D~d(x,y) \right\}.
\end{equation*}
Consequently,
\begin{align}\label{eq:3.11}
     \mathbb{P}\left(A_x \cap A^c_y\right) & \leq \mathbb{P}\left(\left| \mathbb{S}_x(X)-F_x^{-1}(m)\right| < 4D~d(x,y)\right), \nonumber\\[1ex]
     & \leq \mathbb{P}\left( F_x^{-1}(m)- 4D~d(x,y) < \mathbb{S}_x(X) < F_x^{-1}(m)+ 4D~d(x,y)\right).
\end{align} 
Under Assumption (A4), $\mathbb{S}_x(X):=[d(x,X)]^2$ admits a density $h_{\mathbb{S}_x(X)}$ satisfying $h_{\mathbb{S}_x} \leq h_{\sup}$ $\lambda$-almost everywhere on $\mathcal{S}_{\mu_x}$.
Hence, from \eqref{eq:3.11}, we have  
\begin{equation} \label{eq:3.12}
     \mathbb{P}\left(A_x \cap A^c_y\right)= \int\limits_{F_x^{-1}(m)-4D~d(x,y)}^{F_x^{-1}(m
     )+ 4D~d(x,y)}  h_{\mathbb{S}_x(X)}(t)~dt
     \leq 8D~ h_{\sup}  ~d(x,y).
\end{equation}
Arguing in a similar way, 
\begin{equation}\label{eq:3.12.1}
    \mathbb{P} \left(A_y \cap A^c_x\right) \leq  8D~ h_{\sup}  ~d(x,y).
\end{equation}
Substituting \eqref{eq:3.12} and \eqref{eq:3.12.1} into \eqref{eq:3.9} using (\ref{eq:3.10}), we obtain
\begin{equation} \label{eq:3.13}
    |\Phi_x(\theta)-\Phi_y(\theta)| \leq \frac{2D~(1+8c~h_{\sup})}{m} ~d(x,y)
    := L_1 ~d(x,y),
\end{equation}
for some constant $L_1>0$. As $\Phi_x(\theta_{\mathbb{P},m,c}(x))=0,$ and $  \Phi_y(\theta_{\mathbb{P},m,c}(y))=0$ it follows that
\begin{equation}\label{eq:3.14}
    |\Phi_x(\theta_{\mathbb{P},m,c}(y))-\Phi_x(\theta_{\mathbb{P},m,c}(x))|= |\Phi_x(\theta_{\mathbb{P},m,c}(y))-\Phi_y(\theta_{\mathbb{P},m,c}(y))|\leq L_1 ~d(x,y).
\end{equation}

Next, by \hyperref[Lemma 3.2]{\textcolor{blue} Lemma \ref{Lemma 3.2}}, $\Phi_x$ is continuously differentiable and strictly decreasing in $\theta$. Moreover, Assumption (A2) implies that, $\theta_{\mathbb{P},m,c}(x)\in \Theta=\left(0,\left(\frac{m}{a}\right)^{\frac{2}{b}} \right]$. Hence, there exist some $\xi\in \big(\min\{\theta_{\mathbb{P},m,c}(x),\theta_{\mathbb{P},m,c}(y)\},~\max\{\theta_{\mathbb{P},m,c}(x),\theta_{\mathbb{P},m,c}(y)\} \big)$ such that
 \begin{align}
      & \left|\frac{\partial}{\partial \theta} \Phi_x(\xi)\big(\theta_{\mathbb{P},m,c}(y)-\theta_{\mathbb{P},m,c}(x)\big)\right|
    \leq L_1 ~d(x,y).\nonumber
 \end{align}
 \noindent Then, \hyperref[Corollary 3.1]{\textcolor{blue} Corollary \ref{Corollary 3.1}} asserts that $\left|\frac{\partial}{\partial \theta}\Phi_x(\theta)\right|\geq \beta>0$, uniformly over $x\in \mathcal{M}$ and for all $\theta\in \Theta$, we have
\begin{equation}\label{eq:3.15}
    \left|\theta_{\mathbb{P},m,c}(x)-\theta_{\mathbb{P},m,c}(y)\right|
    \leq \frac{L_1}{\beta} ~d(x,y).
\end{equation}
Therefore, \eqref{eq:3.15} leads to the fact that $\theta_{\mathbb{P},m,c}$ is lipschitz.
\noindent Finally, \hyperref[proposition 3.2]{\textcolor{blue} Proposition \ref{proposition 3.2}} implies that ${\displaystyle\theta_{\inf}:=\inf_{x\in\mathcal{M}}\theta_{\mathbb{P},m,c}(x)>0}$. Hence,
\begin{align}
 \left|\delta_{\mathbb{P},m,c}(x)-\delta_{\mathbb{P},m,c}(y)\right| = \left|\sqrt{\theta_{\mathbb{P},m,c}(x)}-\sqrt{\theta_{\mathbb{P},m,c}(y)}\right| & = \frac{ \left|\theta_{\mathbb{P},m,c}(x)-\theta_{\mathbb{P},m,c}(y)\right|}{\sqrt{\theta_{\mathbb{P},m,c}(x)}+\sqrt{\theta_{\mathbb{P},m,c}(y)}} \nonumber\\
 & \leq \frac{1}{2\sqrt{\theta_{\inf}}} \left|\theta_{\mathbb{P},m,c}(x)-\theta_{\mathbb{P},m,c}(y)\right|.\nonumber
\end{align}
Therefore in view of \eqref{eq:3.15}, we have $\left|\delta_{\mathbb{P},m,c}(x)-\delta_{\mathbb{P},m,c}(y)\right|\leq L_*~d(x,y)$,
with $L_*:=\frac{L_1}{2\beta \sqrt{\theta_{\inf}}}>0$. This completes the proof. 

\hfill $\blacksquare$

\subsubsection*{B.3.\quad \hyperref[Theorem 4.1]{\textcolor{blue}{Proof of Theorem \ref{Theorem 4.1}}}} \label{Proof of Theorem 4.1}
\noindent 
In view of Propositions \hyperref[proposition 3.1]{\textcolor{blue}{\ref{proposition 3.1}}} and \hyperref[proposition 3.2]{\textcolor{blue}{\ref{proposition 3.2}}}, $\theta_{\mathbb{P},m,c}(x)$ and $\widehat{\theta}_{\mathbb{P}_n,m,c}(x)$ are well-defined and lie in the compact parameter space $\Theta:=\big[0,\left(\frac{m}{a}\right)^{\frac{2}{b}}\big]$. Now, by \hyperref[Theorem 3.1]{\textcolor{blue} Theorem \ref{Theorem 3.1}}, $x \mapsto \theta_{\mathbb{P},m,c}(x)$ is continuous on $\mathcal{M}$, and $(x,\theta)\mapsto L(x;\theta)$ is jointly continuous on $\mathcal{M}\times\Theta$. Hence, to prove the assertion of this theorem, it is enough to 
establish the uniform convergence of $L_n$ to $L$ on $\mathcal{M}\times\Theta$, after which uniform consistency of the minimizers follows by a standard argmin argument.

Observe that,
\small
 \begin{align}
    L_n(x;\theta)- L(x;\theta) &= \frac{1}{k}\sum_{j=1}^{k}\rho_c\left(\mathbb{S}_{x}(X_{(j)})- \theta\right)- \mathbb{E}\left[\rho_c\left(\mathbb{S}_x(X)-\theta\right)\mid \mathbb{S}_{x}(X)\leq F_x^{-1}(m)\right]\nonumber\\
    &= \frac{1}{k}\sum_{j=1}^{n}\rho_c\left(\mathbb{S}_{x}(X_{j})- \theta\right) \mathbbm{1}_{\{\mathbb{S}_{x}(X_{j}) \leq \mathbb{S}_{x}(X_{(k)}) \}} - \mathbb{E}\left[\rho_c\left(\mathbb{S}_x(X)-\theta\right)\mid \mathbb{S}_{x}(X)\leq F_x^{-1}(m)\right]\nonumber\\ 
    &= \frac{1}{k}\sum_{j=1}^{n}\rho_c\left(\mathbb{S}_{x}(X_{j})- \theta\right) \mathbbm{1}_{\{\mathbb{S}_{x}(X_{j}) \leq \mathbb{S}_{x}(X_{(k)}) \}} - \frac{1}{k}\sum_{j=1}^{n}\rho_c\left(\mathbb{S}_{x}(X_{j})- \theta\right) \mathbbm{1}_{\{\mathbb{S}_{x}(X_{j}) \leq F_x^{-1}(m) \}}\nonumber\\ 
    &\quad +\frac{1}{k}\sum_{j=1}^{n}\rho_c\left(\mathbb{S}_{x}(X_{j})- \theta\right) \mathbbm{1}_{\{\mathbb{S}_{x}(X_{j}) \leq F_x^{-1}(m) \}}-
    \mathbb{E}\left[\rho_c\left(\mathbb{S}_x(X)-\theta\right)\mid \mathbb{S}_{x}(X)\leq F_x^{-1}(m)\right]\nonumber\\
    &= \overbrace{\frac{1}{k}\sum_{j=1}^{n}\rho_c\left(\mathbb{S}_{x}(X_{j})- \theta\right) \left[\mathbbm{1}_{\{\mathbb{S}_{x}(X_{j}) \leq \mathbb{S}_{x}(X_{(k)}) \}}-
    \mathbbm{1}_{\{\mathbb{S}_{x}(X_{j}) \leq F_x^{-1}(m) \}}\right]}^{\textbf{(I)}}\nonumber\\
    &\quad +\underbrace{\frac{1}{k}\sum_{j=1}^{n}\rho_c\left(\mathbb{S}_{x}(X_{j})- \theta\right) \mathbbm{1}_{\{\mathbb{S}_{x}(X_{j}) \leq F_x^{-1}(m) \}}-\frac{1}{m}
    \mathbb{E}\left[\rho_c\left(\mathbb{S}_x(X)-\theta\right)\mathbbm{1}_{\{\mathbb{S}_{x}(X)\leq F_x^{-1}(m)\}}\right]}_{\textbf{(II)}}.  \label{eq:4.1}
\end{align}
\normalsize

Let us first work on \textbf{(I)}. Observe that under Assumption (A1),
$\mathbb{S}_{x}(X_{j})\in [0,D^2]$ and $ \theta \in \Theta=\left[0,\left(\frac{m}{a}\right)^{\frac{2}{b}}\right]$. Hence, $|\rho_c\left(\mathbb{S}_{x}(X_{j})- \theta\right)|\leq B^*:=\max\Big\{\frac{1}{2}\left(D^4+\left(\frac{m}{a}\right)^{\frac{4}{b}}\right),~  c\big(D^2+\left(\frac{m}{a}\right)^{\frac{2}{b}}+\frac{c^2}{2}\Big\}<\infty$ uniformly over $\mathcal{M}\times \Theta$. Therefore,
\small
\begin{align}
\sup_{\theta \in \Theta}|\textbf{(I)}| &\leq \left| \frac{nB^*}{k} \sum_{j=1}^{n} \Big[\mathbbm{1}\{\mathbb{S}_{x}(X_j)\leq \mathbb{S}_{x}(X_{(k)})\} -\mathbbm{1}\{\mathbb{S}_{x}(X_j)\leq F_x^{-1}(m)\} \Big] \right|\nonumber\\
& \leq\frac{nB^*}{k} \left| \widehat{F}_{n,x} \big(\mathbb{S}_{x}(X_{(k)})\big) - \widehat{F}_{n,x}\big(F_x^{-1}(m)\big)\right|\nonumber\\
&=\frac{nB^*}{k} \left| \frac{k}{n} - \widehat{F}_{n,x}\big(F_x^{-1}(m)\big)\right|\nonumber\\
&\leq \frac{nB^*}{k} \Bigg[ \left| \frac{k}{n}-F_x\big(F_x^{-1}(m)\big)\right| + \left| F_x\big(F_x^{-1}(m)\big) - \widehat{F}_{n,x}\big(F_x^{-1}(m)\big) \right| \Bigg]\quad [\text{By triangle inequality}] \nonumber\\
    &\leq  \frac{nB^*}{k} \Bigg[ \left| \frac{k}{n}-m\right| + \sup_{t\in \mathbb{R}}  \left|\widehat{F}_{n,x}(t)- F_x(t) \right| \Bigg],\nonumber\\
    \sup_{x\in \mathcal{M}}\sup_{\theta \in \Theta}|\textbf{(I)}|&\leq  \frac{nB^*}{k} \Bigg[ \left| \frac{k}{n}-m\right| + \sup_{x\in \mathcal{M}} \sup_{t\in \mathbb{R}}  \left|\widehat{F}_{n,x}(t)- F_x(t) \right| \Bigg]. 
    \label{eq:4.2}
\end{align}
\normalsize
Next, under Assumption (A1), $\mathcal{M} \subseteq \mathbb{R}^{\ell}$ is compact, and hence, the class $$\mathcal{F}=\{\mathbbm{1}_{\left\{[d(x,\cdot)]^2 \leq t\right\}}:~x\in \mathcal{M},t \geq 0 \}$$ has finite VC dimension in view of \textcite{pestov2012index}, Section 3.1. Afterwards, using the Glivenko-Cantelli theorem for VC classes (\textcite{van1996weak}, Chapter 2.4), we have
\begin{equation}
    \sup_{x\in \mathcal{M}} \sup_{t \in \mathbb{R}}\left|\widehat{F}_{n,x}(t)- F_x(t) \right|\xrightarrow{\text{a.s.}}0, ~\text{as}~n\rightarrow \infty.
    \label{eq:4.3}
\end{equation}
 Finally, as $k=\lceil mn \rceil$, we have $\left|\frac{k}{n}-m\right|\leq \frac{1}{n}\rightarrow 0$ as $n \rightarrow \infty$. Combining this with (\ref{eq:4.2}) and (\ref{eq:4.3}) yields
\begin{equation}\label{eq:4.4}
    \sup_{x\in \mathcal{M}} \sup_{\theta \in \Theta}|\textbf{(I)}|\xrightarrow{\text{a.s.}}0,~ \text{as}~n\rightarrow \infty.
\end{equation}

Next, let us work on \textbf{(II)}. First, we decompose term \textbf{(II)} as follows:
 \begin{align}
\textbf{(II)} &= \frac{1}{k} \sum_{j=1}^{n}\rho_c\big(\mathbb{S}_{x}(X_j)-\theta\big) \mathbbm{1}_{\{\mathbb{S}_{x}(X_j)\leq F_x^{-1}(m)\}} - \frac{1}{m} \mathbb{E} \left[ \rho_c\big(\mathbb{S}_{x}(X)-\theta\big) \mathbbm{1}_{\{\mathbb{S}_{x}(X)\leq F_x^{-1}(m)\}} \right] \nonumber\\
&= \frac{n}{k} \left[ \frac{1}{n} \sum_{j=1}^{n} \rho_c\big(\mathbb{S}_{x}(X_j)-\theta\big) \mathbbm{1}_{\{\mathbb{S}_{x}(X_j)\leq F_x^{-1}(m)\}} - \mathbb{E} \left[ \rho_c\big(\mathbb{S}_{x}(X)-\theta\big) \mathbbm{1}_{\{\mathbb{S}_{x}(X)\leq F_x^{-1}(m)\}} \right] \right]\nonumber \\
&\hspace{5.5cm} + \left( \frac{n}{k} - \frac{1}{m}\right) \mathbb{E} \left[ \rho_c\big(\mathbb{S}_{x}(X)-\theta\big) \mathbbm{1}_{\{\mathbb{S}_{x}(X)\leq F_x^{-1}(m)\}} \right]. \label{eq:4.5}
\end{align}
Hence,
\small
\begin{align}
 \sup_{x\in \mathcal{M}} \sup_{\theta\in\Theta}|\textbf{(II)}| & \leq \frac{n}{k} ~\sup_{x\in \mathcal{M}} \sup_{\theta\in\Theta} \left|\frac{1}{n} \sum_{j=1}^{n} \rho_c\big(\mathbb{S}_{x}(X_j)-\theta\big) \mathbbm{1}_{\{\mathbb{S}_{x}(X_j)\leq F_x^{-1}(m)\}} - \mathbb{E} \left[ \rho_c\big(\mathbb{S}_{x}(X)-\theta\big) \mathbbm{1}_{\{\mathbb{S}_{x}(X)\leq F_x^{-1}(m)\}} \right] \right| \nonumber\\
&\hspace{2.5cm} + \left| \frac{n}{k} - \frac{1}{m}\right|~ \sup_{x\in \mathcal{M}} \sup_{\theta\in\Theta} \left|\mathbb{E} \left[ \rho_c\big(\mathbb{S}_{x}(X)-\theta\big) \mathbbm{1}_{\{\mathbb{S}_{x}(X)\leq F_x^{-1}(m)\}} \right] \right|. \label{eq:4.6}
\end{align}
\normalsize

Now, define $h_{x,\theta}(z)=\rho_c\big(z-\theta\big) \mathbbm{1}_{\{z\leq F_x^{-1}(m)\}},$ for all $\theta\in \Theta$. Then for each fixed $\theta\in \Theta$, $\big\{h_{x,\theta}\big(\mathbb{S}_{x}(X_j)\big)\big\}_{j=1}^n$ is an i.i.d. random variables for each fixed $(x,\theta),$
 and $\{h_{x,\theta}:\theta\in \Theta\}$ is uniformly bounded by $B^*$ as $\frac{n}{k}\rightarrow \frac{1}{m}$, and $\|h_{x,\theta}\|_\infty\leq B^*$.
Therefore, it suffices to prove that
\begin{equation}
    \sup_{x\in \mathcal{M}} \sup_{\theta\in\Theta} \left|\frac{1}{n} \sum_{j=1}^{n} \rho_c\big(\mathbb{S}_{x}(X_j)-\theta\big) \mathbbm{1}_{\{\mathbb{S}_{x}(X_j)\leq F_x^{-1}(m)\}} - \mathbb{E} \left[ \rho_c\big(\mathbb{S}_{x}(X)-\theta\big) \mathbbm{1}_{\{\mathbb{S}_{x}(X)\leq F_x^{-1}(m)\}} \right] \right|\xrightarrow{\text{a.s.}}0,~\text{as}~n\rightarrow \infty. \label{eq:4.7}
\end{equation}
As $\mathcal{M}\times \Theta$ is compact and $(x,\theta)\mapsto h_{x,\theta}$ is continuous with respect to  $L^1(\mathbb{P})$-norm $\|f\|_{L^1(\mathbb{P})}=\int|f|~d\mathbb{P}$, the class $\mathcal{H}:=\{h_{x,\theta}: (x,\theta)\in \mathcal{M}\times \Theta\}$ is a continuous image of a compact set, and hence, it is compact in $L^1(\mathbb{P}):=\left\{f:\int |f|~d\mathbb{P} <\infty  \right\}$. 
Therefore, $\mathcal{H}$ is totally bounded in $L^1(P)$ in view of the fact that every compact subset of a metric space is totally bounded. Consequently, for every $\varepsilon>0$, there exists finitely many functions $h_{x_1,\theta_1},\cdots,h_{x_N,\theta_N}\in \mathcal{H}$ such that
\begin{equation}
   \mathcal{H}\subset \bigcup_{i=1}^N \left\{h_{x,\theta}:\left\|h_{x,\theta}-h_{{x_i,\theta_i}}\right\|_{L^1(P)}<\varepsilon\right\}. \label{eq:4.8}
\end{equation}
Equivalently, for every $h_{x,\theta}\in \mathcal{H}$ there exists $i \in \{1,\cdots,N\}$ such that
    $$\mathbb{E}[|h_{x,\theta}(\mathbb{S}_{x}(X))-h_{{x_i,\theta_i}}(\mathbb{S}_{x_i}(X))|]<\varepsilon.$$
(see \textcite{kosorok2008introduction}, Chapters 2 and 9, and \textcite{pollard1984convergence}, Chapter 4). Now, for each fixed $i$, the random variables $\big\{h_{x_i,\theta_{i}}\big(\mathbb{S}_{x}(X_j)\big)\big\}_{j=1}^n$ are i.i.d. and pointwise bounded. Hence, using strong law of large numbers,
\begin{equation*}
    \frac{1}{n}\sum_{j=1}^{n} h_{x_i,\theta_{i}}\big(\mathbb{S}_{x_i}(X_j)\big)- \mathbb{E}\left[h_{x_i,\theta_{i}}\big(\mathbb{S}_{x_i}(X)\big) \right] \xrightarrow{\text{a.s.}} 0,\quad i=1,\cdots,N.
\end{equation*}
Next, fix an arbitrary $(x,\theta)\in \mathcal{M}\times \Theta$ and let 
$h_{x_i,\theta_i}$ be its associated center from the covering described in \eqref{eq:4.8} satisfying
$\mathbb{E}|h_{x,\theta}-h_{{x_i,\theta_i}}|<\varepsilon$. Then using triangle inequality, we have
\begin{align*}
   \left|\frac{1}{n} \sum_{j=1}^{n} h_{x,\theta}\big(\mathbb{S}_{x}(X_j)\big) - \mathbb{E} \left[ h_{x,\theta}\big(\mathbb{S}_{x}(X)\big) \right] \right|
   \leq \left|\frac{1}{n} \sum_{j=1}^{n} h_{x,\theta}\big(\mathbb{S}_{x}(X_j)\big) - \frac{1}{n} \sum_{j=1}^{n} h_{x_i,\theta_{i}}\big(\mathbb{S}_{x_i}(X_j)\big) \right|\\
    + \left|\frac{1}{n} \sum_{j=1}^{n} h_{x_i,\theta_{i}}\big(\mathbb{S}_{x_i}(X_j)\big) - \mathbb{E} \left[ h_{x_i,\theta_{i}}\big(\mathbb{S}_{x_i}(X)\big) \right] \right| + \mathbb{E} \left[\left| h_{x_i,\theta_{i}}\big(\mathbb{S}_{x_i}(X)\big) - h_{x,\theta}\big(\mathbb{S}_{x}(X)\big) \right|\right].
\end{align*}
As it holds for every $(x,\theta)\in \mathcal{M}\times \Theta$, we have
\small
\begin{align}
  \sup_{x \in \mathcal{M}} \sup_{\theta \in \Theta}  \left|\frac{1}{n} \sum_{j=1}^{n} h_{x,\theta}\big(\mathbb{S}_{x}(X_j)\big) - \mathbb{E} \left[ h_{x,\theta}\big(\mathbb{S}_{x}(X)\big) \right] \right|
   \leq & \underbrace{\sup_{x \in \mathcal{M}} \sup_{\theta \in \Theta}  \frac{1}{n} \sum_{j=1}^{n} \left|h_{x,\theta}\big(\mathbb{S}_{x}(X_j)\big) - h_{x_i,\theta_{i}}\big(\mathbb{S}_{x_i}(X_j)\big) \right|}_{\textbf{(T1)}} \nonumber\\
     & + \underbrace{\max_{1\leq i\leq N} \left|\frac{1}{n} \sum_{j=1}^{n} h_{x_i,\theta_{i}}\big(\mathbb{S}_{x_i}(X_j)\big) - \mathbb{E} \left[ h_{x_i,\theta_{i}}\big(\mathbb{S}_{x_i}(X)\big) \right] \right|}_{\textbf{(T2)}} +\varepsilon. \label{eq:4.9}
\end{align}
\normalsize

Let us now work on \textbf{(T1)}. Note that the covering property described in \eqref{eq:4.8} ensures that $$\mathbb{E}[|h_{x,\theta}(\mathbb{S}_{x}(X))-h_{{x_i,\theta_i}}(\mathbb{S}_{x_i}(X))|]<\varepsilon,$$ 
for every $h_{x,\theta}\in \mathcal{H}$, and hence, by the strong law of large numbers $\textbf{(T1)} \xrightarrow{\text{a.s.}} 0$ as $n \rightarrow \infty$.

Next, for \textbf{(T2)}: $\big\{h_{x_i,\theta_{i}} \big(\mathbb{S}_{x}(X_j)\big) \big\}_{j=1}^n$ is i.i.d. random variables and these are uniformly bounded by $B^*<\infty$ for each $i=1,\cdots,n$. A further application to the strong law of large numbers therefore implies
\begin{equation*}
    \frac{1}{n} \sum_{j=1}^{n} h_{x_i,\theta_i}\left(\mathbb{S}_{x_i}(X_j)\right) \xrightarrow{\text{a.s.}} \mathbb{E} \left[ h_{x_i,\theta_i}\left(\mathbb{S}_{x_i}(X)\right) \right],\quad n\rightarrow \infty.
\end{equation*}
Afterwards, a union bound over the finite index set $\{ 1,\cdots,N\}$ yields $\textbf{(T2)}\xrightarrow{\text{a.s.}} 0$ as $n \rightarrow \infty$. Substituting the limiting values of $\textbf{(T1)}$ and $\textbf{(T2)}$ into \eqref{eq:4.9} yields,
\begin{equation} \label{eq:4.9.1}
    \limsup_{n\rightarrow \infty} \sup_{x \in \mathcal{M}} \sup_{\theta \in \Theta}  \left|\frac{1}{n} \sum_{j=1}^{n} h_{x,\theta}\big(\mathbb{S}_{x}(X_j)\big) - \mathbb{E} \left[ h_{x,\theta}\big(\mathbb{S}_{x}(X)\big) \right] \right|\leq 2\varepsilon,\quad \text{almost surely}.
\end{equation}
As $\varepsilon>0$ is chosen arbitrary, \eqref{eq:4.9.1} establishes \eqref{eq:4.7}. Finally, Substituting \eqref{eq:4.7} into \eqref{eq:4.6} and using $\frac{n}{k}\rightarrow \frac{1}{m}$, we have
\begin{equation}\label{eq:4.12}
    \sup_{x \in \mathcal{M}}\sup_{\theta \in \Theta}|\textbf{(II)}|\xrightarrow{\text{a.s.}}0,~ \text{as}~ n \rightarrow \infty. 
\end{equation}
At the end of this point, the decomposition \eqref{eq:4.1} together with \eqref{eq:4.4}) and \eqref{eq:4.12} establishes
\begin{equation}
    \sup_{x \in \mathcal{M}}\sup_{\theta \in \Theta}|L_n(x;\theta)- L(x;\theta)|\xrightarrow{\text{a.s.}}0. \label{eq:4.13}
\end{equation}

Now, let us work on the next part of the proof. For a fixed $\varepsilon>0$, define $$\mathcal{A}_{\varepsilon}=\{(x,\theta)\in \mathcal{M}\times \Theta:|\theta-\theta_{\mathbb{P},m,c}(x)|\geq \varepsilon\}.$$ Observe that continuity of $x\mapsto \theta_{\mathbb{P},m,c}(x)$ on the compact space $\mathcal{M} \times \Theta$ established in \hyperref[Theorem 3.1]{\textcolor{blue} Theorem \ref{Theorem 3.1}}, implies that $\mathcal{A}_{\varepsilon}$ is closed, and hence, compact subset of $\mathcal{M} \times \Theta$.
Now, by \hyperref[proposition 3.1]{\textcolor{blue} Proposition \ref{proposition 3.1}}, $\theta_{\mathbb{P},m,c}(x)$ is the unique minimizer of $L(x;\cdot)$ for every $x\in \mathcal{M}$, and
hence, $$L(x;\theta)- L(x;\theta_{\mathbb{P},m,c}(x))>0,~ \text{for all}~(x,\theta)\in \mathcal{A}_{\varepsilon}.$$
Moreover, since $(x,\theta)\mapsto [L(x;\theta)- L(x;\theta_{\mathbb{P},m,c}(x))]$ is continuous and strictly positive on $\mathcal{A}_{\varepsilon}$, it attains a minimum on $\mathcal{A}_{\varepsilon}$. Therefore,
\begin{equation}\label{eq:4.14}
    \gamma_{\varepsilon}:= \inf_{(x,\theta)\in \mathcal{A}_{\varepsilon}}[L(x;\theta)- L(x;\theta_{\mathbb{P},m,c}(x))]>0.
\end{equation}


 Now, consider $\mathcal{E}_{n,\varepsilon}:=\left\{\sup_{x\in \mathcal{M}}|\widehat{\theta}_{\mathbb{P}_n,m,c}(x)-\theta_{\mathbb{P},m,c}(x)|\geq \varepsilon  \right\}$. On $\mathcal{E}_{n,\varepsilon}$,
 there exists $x\in \mathcal{M}$ such that $|\widehat{\theta}_{\mathbb{P}_n,m,c}(x)-\theta_{\mathbb{P},m,c}(x)|\geq \varepsilon$, and hence, $\left(x,\theta_{\mathbb{P}_n,m,c}(x)\right)\in \mathcal{A}_{\varepsilon}$. Consequently, by (\ref{eq:4.14}), $$L(x,\widehat{\theta}_{\mathbb{P}_n,m,c}(x))\geq L(x,\theta_{\mathbb{P},m,c}(x))+\gamma_{\varepsilon}.$$
 Then, as $\widehat{\theta}_{\mathbb{P}_n,m,c}(x)$ minimizes $L_n(x;\cdot)$, we have $L_n(x;\widehat{\theta}_{\mathbb{P}_n,m,c}(x))\leq L_n(x;\theta_{\mathbb{P},m,c}(x))$. Therefore,
\begin{align*}
    &\gamma_{\varepsilon}\leq L(x,\widehat{\theta}_{\mathbb{P}_n,m,c}(x))- L(x,\theta_{\mathbb{P},m,c}(x)) = [L(x;\widehat{\theta}_{\mathbb{P}_n,m,c}(x))-L_n(x;\widehat{\theta}_{\mathbb{P}_n,m,c}(x))] \\
    & +[L_n(x;\widehat{\theta}_{\mathbb{P}_n,m,c}(x))-L_n(x;\theta_{\mathbb{P},m,c}(x))] +[L_n(x;\theta_{\mathbb{P},m,c}(x))-L(x;\theta_{\mathbb{P},m,c}(x))], \\& \leq 2 \sup_{x \in \mathcal{M}} \sup_{\theta \in \Theta} |L_n(x;\theta)-L(x;\theta)|.
\end{align*}
Consequently,
\begin{equation}\label{eq:4.14.1}
    \mathcal{E}_{n,\varepsilon} \subseteq \left\{ \sup_{x \in \mathcal{M}} \sup_{\theta \in \Theta} |L_n(x;\theta)-L(x;\theta))|>\frac{\gamma_{\varepsilon}}{2} \right\}.
\end{equation}
Observe that \eqref{eq:4.14.1} together with \eqref{eq:4.13} yields
\begin{equation}\label{eq:4.15}
    \text{Pr}\left(\mathcal{E}_{n,\varepsilon}\right)\leq \text{Pr}\left(\sup_{x \in \mathcal{M}} \sup_{\theta \in \Theta} |L_n(x;\theta)-L(x;\theta))|>\frac{\gamma_{\varepsilon}}{2}\right) \rightarrow 0,\quad n\rightarrow \infty, 
\end{equation}
which establishes,
\begin{equation}\label{eq:4.16}
    \sup_{x\in \mathcal{M}} |\widehat{\theta}_{\mathbb{P}_n,m,c}(x)-\theta_{\mathbb{P},m,c}(x)| \xrightarrow{P} 0.
\end{equation}
Finally, by \hyperref[proposition 3.2]{\textcolor{blue} Proposition \ref{proposition 3.2}}, we have ${\displaystyle\theta_{\min}:=\inf_{x\in\mathcal{M}} \theta_{\mathbb{P},m,c}(x)>0}$, 
and the mean value theorem yields
\begin{equation*}
    \|\widehat{\delta}_{\mathbb{P}_n,m,c}-\delta_{\mathbb{P},m,c} \|_{\infty}=\sup_{x\in \mathcal{M}} |\widehat{\delta}_{\mathbb{P}_n,m,c}(x)-\delta_{\mathbb{P},m,c}(x)| \leq\frac{1}{\sqrt{\theta_{\min}}} \sup_{x\in \mathcal{M}} |\widehat{\theta}_{\mathbb{P}_n,m,c}(x)-\theta_{\mathbb{P},m,c}(x)| \xrightarrow{P} 0,
\end{equation*}
in view of \eqref{eq:4.16}. It completes the proof. 
 
\hfill $\blacksquare$

\begin{lemma}\label{Lemma 4.1}
Let $(\mathcal{M},d)$ be a compact metric space, and let  $\mathbb{P}_\alpha = (1-\alpha)~\mathbb{P}_{s} + \alpha~ \mathbb{P}_{c}, ~ 0 \leq \alpha < \frac{1}{2},$
where $\mathbb{P}_{s}$ satisfies Assumption (A4). Let $F_{x,s}$ and $F_{x,\alpha}$ denote the distribution functions of $\mathbb{S}_x(X):=[d(x,X)]^2,~ \text{under}~
\mathbb{P}_{s}$ and $
\mathbb{P}_{\alpha}$,
respectively. Then, for every $m \in (0,1)$,
\begin{equation*}
    \sup_{x \in \mathcal{M}} \left| F^{-1}_{x,\alpha}(m) - F^{-1}_{x,s}(m) \right| \leq \frac{1}{h_{\inf}} \frac{\alpha}{(1-\alpha)}.
\end{equation*}
\end{lemma}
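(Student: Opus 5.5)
Fix $x\in\mathcal{M}$ and write $q_s:=F^{-1}_{x,s}(m)$ and $q_\alpha:=F^{-1}_{x,\alpha}(m)$. The plan is to compare the two distribution functions near the quantile. We have $F_{x,\alpha}=(1-\alpha)F_{x,s}+\alpha G_x$, where $G_x$ is the distribution function of $\mathbb{S}_x(X)$ under $\mathbb{P}_c$. Since $0\le G_x\le 1$,
\begin{equation*}
(1-\alpha)F_{x,s}(t)\;\le\; F_{x,\alpha}(t)\;\le\;(1-\alpha)F_{x,s}(t)+\alpha\qquad\text{for all } t .
\end{equation*}
The idea is to use the density lower bound $h_{\inf}$ from Assumption (A4) to convert this vertical gap between $F_{x,\alpha}$ and $F_{x,s}$ into a horizontal gap between the quantiles. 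Because none of the constants depend on $x$, taking the supremum at the end is immediate.

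\emph{Upper deviation.} Set $t=q_s+\eta$ with $\eta=\alpha/((1-\alpha)h_{\inf})$. On the part of the support lying above $q_s$, Assumption (A4) gives
\begin{equation*}
F_{x,s}(q_s+\eta)\ge m+h_{\inf}\eta=m+\frac{\alpha}{1-\alpha}.
\end{equation*}
Hence $F_{x,\alpha}(t)\ge (1-\alpha)m+\alpha\ge m$, using $m\le 1$. The definition of the quantile then yields $q_\alpha\le q_s+\eta$. Actually this step gives the slightly better bound $\alpha(1-m)/((1-\alpha)h_{\inf})$.

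\emph{Lower deviation.} Set $t=q_s-\eta$. By the density bound, $F_{x,s}(q_s-\eta)\le m-h_{\inf}\eta$. Then
\begin{equation*}
F_{x,\alpha}(t)\le (1-\alpha)(m-h_{\inf}\eta)+\alpha .
\end{equation*}
To get $F_{x,\alpha}(t)<m$ we need $(1-\alpha)h_{\inf}\eta>\alpha(1-m)$. The choice $\eta=\alpha/((1-\alpha)h_{\inf})$, enlarged by an arbitrarily small amount, suffices. This gives $q_\alpha\ge q_s-\eta$.

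\emph{Main obstacle.} The difficulty is justifying the mass increments $h_{\inf}\eta$ when $q_s\pm\eta$ leaves the support $\mathcal{S}_{\mu_x}$ of the squared-distance law, since (A4) bounds the density only on that support. I will handle this by cases.
\begin{itemize}
\item[(i)] If $q_s+\eta$ exceeds the right endpoint of the support, then $F_{x,s}(q_s+\eta)=1$. The bound $F_{x,\alpha}\ge(1-\alpha)+0\cdot\alpha$ may fall below $m$ only if $m>1-\alpha$. In that regime I would use $F_{x,s}(t)-m\ge\min\{h_{\inf}(t-q_s),\,1-m\}$, read (A4) as asserting that the support is an interval (implicit in the paper's use of $h_{\inf}$ on $[0,D^2]$), and argue directly. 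If needed I would restrict to $m\le 1-\alpha$.
\item[(ii)] If $q_s-\eta<0$, then trivially $q_\alpha\ge 0\ge q_s-\eta$.
\end{itemize}
I expect case (i) to be the only delicate point; everything else is elementary monotonicity of quantiles.
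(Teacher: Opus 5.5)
Your sandwich argument is correct wherever $h_x\ge h_{\inf}$ holds on the whole segment between $q_s$ and $q_\alpha$. It is the paper's argument read from the other side. The paper evaluates at $q_\alpha$, obtains $\frac{m-\alpha}{1-\alpha}\le F_{x,s}(q_\alpha)\le\frac{m}{1-\alpha}$, and then uses $\bigl|\int_{q_s}^{q_\alpha}h_x\bigr|\ge h_{\inf}|q_\alpha-q_s|$. That step silently assumes exactly what you flag.

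The gap is that your ``main obstacle'' cannot be closed by arguing directly: without further hypotheses the inequality is false, and it fails in more cases than your (i). Take $\mathcal{M}=[0,10]$ with $d(x,y)=|x-y|^{1/2}$ and $\mathbb{P}_s$ uniform on $[0,1]$. Then $\mathbb{S}_x(X)=|x-X|$ has, for every $x$, an interval support on which its density takes only the values $1$ and $2$. So (A4) holds with $h_{\inf}=1$. Let $\alpha=0.2$ and $\mathbb{P}_c=\delta_{10}$, so the claimed bound is $0.25$.
\begin{itemize}
\item At $x=0$, $m=0.9$ (your case (i), $m>1-\alpha$): $F_{x,\alpha}=0.8$ on $[1,10)$, so $F^{-1}_{x,\alpha}(m)=10$ while $F^{-1}_{x,s}(m)=0.9$.
\item At $x=10$, $m=0.1$: $F_{x,\alpha}(0)=0.2\ge m$, so $F^{-1}_{x,\alpha}(m)=0$ while $F^{-1}_{x,s}(m)=9.1$.
\end{itemize}
The second failure is one your (ii) does not cover. $\mathcal{S}_{\mu_x}$ starts at $a_x:=d(x,\mathcal{S}_{\mathbb{P}_s})^2$, which is positive off the signal support. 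When $q_s-\eta<a_x$, the only available bound is $F_{x,\alpha}(q_s-\eta)\le\alpha$, which is useless once $m\le\alpha$. That is the regime of the paper's own experiments ($m=0.02$ with $\alpha$ up to $0.385$).

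What your argument does prove is the lemma under two extra hypotheses: each $\mathcal{S}_{\mu_x}=[a_x,b_x]$ is an interval (otherwise the quantile can jump across a gap), and $\alpha<m\le 1-\alpha$. Then:
\begin{itemize}
\item if $q_s+\eta>b_x$, then $F_{x,\alpha}(q_s+\eta)\ge 1-\alpha\ge m$;
\item if $q_s-\eta<a_x$, then $F_{x,\alpha}(q_s-\eta)\le\alpha<m$;
\item otherwise your density estimates apply verbatim.
\end{itemize}
For $x\in\mathcal{S}_{\mathbb{P}_s}$ one has $a_x=0$, and the condition $m>\alpha$ can be dropped; this is what your (ii) covers.

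Two small corrections. First, the sharp upward displacement is $\alpha m/((1-\alpha)h_{\inf})$ (solve $(1-\alpha)(m+h_{\inf}\eta)=m$), whereas $\alpha(1-m)/((1-\alpha)h_{\inf})$ is the downward one. Second, no enlargement of $\eta$ is needed in the lower step, since $(1-\alpha)(m-h_{\inf}\eta)+\alpha=(1-\alpha)m<m$ for $\alpha>0$.
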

\noindent \textbf{Proof:} See the \hyperref[Proof of Lemma 4.1]{\textcolor{blue}{Proof of Lemma 8.2}} in \hyperref[Appendix B]{\textcolor{blue} Appendix B}.

\hfill $\blacksquare$

\subsubsection*{B.4.\quad \hyperref[Theorem 4.3]{\textcolor{blue}{Proof of Theorem \ref{Theorem 4.3}}}} \label{Proof of Theorem 4.3}

\noindent 
For $x\in \mathcal{M}$, suppose that $\theta_{\mathbb{P}_{s},m,c}(x)$ and $\theta_{\mathbb{P}_{\alpha},m,c}(x)$ denote the respective roots of the population estimating equations
$$\Phi_{\mathbb{P}_{s},x}\left(\theta_{\mathbb{P}_{s},m,c}(x)\right)=0 ~~\text{and}~~\Phi_{\mathbb{P}_{\alpha},x}\left(\theta_{\mathbb{P}_{\alpha},m,c}(x)\right)=0.$$


\noindent Note that, since the contaminating distribution $\mathbb{P}_c$ is unrestricted, the distribution function $F_{x,\alpha}$ of $\mathbb{S}_x(X):=[d(x,X)]^2$ under $\mathbb{P}_{\alpha}$ may not be continuous. Hence $\mathbb{P}_{\alpha}\left(\mathbb{S}_x(X)\leq F_{x,\alpha}^{-1}(m) \right) = m$ may not always be true. To overcome this issue,
define $m_{\alpha}(x):=\mathbb{P}_{\alpha}\left(\mathbb{S}_x(X)\leq F_{x,\alpha}^{-1}(m) \right)$, where $m_{\alpha}(x)\geq m$. The corresponding estimating equation under $\mathbb{P}_{\alpha}$ is therefore given by
\begin{align}\label{eq:4.24}
\Phi_{\mathbb{P}_{\alpha},x}(\theta) & =\mathbb{E}_{\mathbb{P}_{\alpha}}\left[\psi_c\left(\mathbb{S}_x(X)-\theta \right) \mid \mathbb{S}_x(X)\leq F_{x,\alpha}^{-1}(m) \right]\nonumber\\
&= \frac{1}{m_{\alpha}(x)} \int\limits_{\{\mathbb{S}_x(z)\leq F_{x,\alpha}^{-1}(m) \}} \psi_c\left(\mathbb{S}_x(z)-\theta \right)~d\mathbb{P}_{\alpha}(z),\nonumber\\
& = \frac{(1-\alpha)}{m_{\alpha}(x)}\int\limits_{\{\mathbb{S}_x(z)\leq F_{x,\alpha}^{-1}(m) \}} \psi_c\left(\mathbb{S}_x(z)-\theta \right)~d\mathbb{P}_{s}(z) \nonumber\\
     &\hspace{0.1cm} + \frac{\alpha}{m_{\alpha}(x)}\int\limits_{\{\mathbb{S}_x(z)\leq F_{x,\alpha}^{-1}(m) \}} \psi_c\left(\mathbb{S}_x(z)-\theta \right)~d \mathbb{P}_{c}(z).
\end{align}
Using \eqref{eq:4.24} in
$\Phi_{\mathbb{P}_{\alpha},x}(\theta_{\mathbb{P}_{\alpha},m,c}(x))=0$ yields
\small
\begin{align}\label{eq:4.25}
& \frac{(1-\alpha)}{m_{\alpha}(x)}\int\limits_{\{\mathbb{S}_x(z)\leq F_{x,\alpha}^{-1}(m) \}} \psi_c\left(\mathbb{S}_x(z)-\theta_{\mathbb{P}_{\alpha},m,c}(x) \right)~d\mathbb{P}_{s}(z) \nonumber\\
     =&- \frac{\alpha}{m_{\alpha}(x)}\int\limits_{\{\mathbb{S}_x(z)\leq F_{x,\alpha}^{-1}(m) \}} \psi_c\left(\mathbb{S}_x(z)-\theta_{\mathbb{P}_{\alpha},m,c}(x) \right)~d\mathbb{P}_{c}(z), \nonumber\\
     \text{i.e.}~ & 
     \left|~\int\limits_{\{\mathbb{S}_x(z)\leq F_{x,\alpha}^{-1}(m) \}} \psi_c\left(\mathbb{S}_x(z)-\theta_{\mathbb{P}_{\alpha},m,c}(x) \right)~d\mathbb{P}_{s}(z)\right| \leq c~\frac{\alpha}{(1-\alpha)}. 
\end{align}
\normalsize
The last inequality follows from 
the uniform bound $|\psi(u)| \leq c$ for all $u \in \mathbb{R}$.
As $\Phi_{\mathbb{P}_{\alpha},x}(\theta_{\mathbb{P}_{\alpha}, m,c}(x))=0$, it follows that
\small
\begin{align}\label{eq:4.26}
     &\left|\Phi_{\mathbb{P}_{\alpha},x}( \theta_{\mathbb{P}_{\alpha},m,c}(x))-\Phi_{\mathbb{P}_{s},x}( \theta_{\mathbb{P}_{\alpha},m,c}(x)) \right| = \left|\Phi_{\mathbb{P}_{s},x}( \theta_{\mathbb{P}_{\alpha},m,c}(x)) \right| \nonumber\\
     =& \frac{1}{m}  \left|~\int\limits_{\{\mathbb{S}_x(z)\leq F_{x}^{-1}(m) \}} \psi_c\left(\mathbb{S}_x(z)-\theta_{\mathbb{P}_{\alpha},m,c}(x) \right)~d \mathbb{P}_{s}(z)\right|.
\end{align}
\normalsize
\noindent Next, without loss of generality, assume $F_{x,\alpha}^{-1}(m)\leq F_{x}^{-1}(m)$. Then, we have
\small
\begin{align}\label{eq:4.27}
    \left|\Phi_{\mathbb{P}_{s}}( \theta_{\mathbb{P}_{\alpha},m,c}(x)) \right| &= \underbrace{\frac{1}{m}  \left|\int\limits_{\{\mathbb{S}_x(z)\leq F_{x,\alpha}^{-1}(m) \}} \psi_c\left(\mathbb{S}_x(z)-\theta_{\mathbb{P}_{\alpha},m,c}(x) \right)~d \mathbb{P}_{s}(z) \right|}_{(\textbf{B1})} 
     \nonumber\\ 
     & \qquad \qquad \quad +~\underbrace{\frac{1}{m} \left|\int\limits_{\{F_{x,\alpha}^{-1}(m)\leq\mathbb{S}_x(z)\leq F_{x}^{-1}(m) \}} \psi_c\left(\mathbb{S}_x(z)-\theta_{\mathbb{P}_{\alpha},m,c}(x) \right)~d\mathbb{P}_{s}(z)\right|}_{(\textbf{B2})}.
\end{align}
\normalsize
Note that using (\ref{eq:4.25}) we have, $(\textbf{B1})\leq \frac{c}{m}\cdot\frac{\alpha}{(1-\alpha)}$.

It therefore remains to bound  $(\textbf{B2})$. As $|\psi(u)| \leq c$ for all $u \in \mathbb{R}$, observe that
\begin{align}\label{eq:4.28}
    (\textbf{B2}) \leq \frac{c}{m}~\mathbb{P}_{s}\left(F_{x,\alpha}^{-1}(m)\leq\mathbb{S}_x(X)\leq F_{x}^{-1}(m) \right) & \leq \frac{c}{m}~\int\limits_{F_{x,\alpha}^{-1}(m)}^{F_{x}^{-1}(m)} h_x(t) ~dt \nonumber\\
    & \leq \frac{c}{m}~h_{\sup}\left[F_{x}^{-1}(m)-F_{x,\alpha}^{-1}(m)\right],
\end{align}
as under Assumption (A4),
$h_x(t) \leq h_{\sup}$ for all $t \in \mathcal{S}_{\mu_x}$. Moreover, when $F_{x,\alpha}^{-1}(m)> F_{x}^{-1}(m)$ the same bound can be obtained using the same argument. Consequently,
\begin{equation} \label{eq:4.29}
    \left|\Phi_{\mathbb{P}_{\alpha},x}( \theta_{\mathbb{P}_{\alpha},m,c}(x))-\Phi_{\mathbb{P}_{s},x}( \theta_{\mathbb{P}_{\alpha},m,c}(x)) \right| \leq 
    \frac{c}{m}\cdot\frac{\alpha}{(1-\alpha)}+\frac{c~h_{\sup}}{m}~\Big|F_{x,\alpha}^{-1}(m)-F_{x}^{-1}(m)\Big|.
\end{equation}


 Next, using \hyperref[Lemma 4.1]{\textcolor{blue} Lemma \ref{Lemma 4.1}} into \eqref{eq:4.29}, we have
\begin{equation}\label{eq:4.30}
    \left| \Phi_{\mathbb{P}_\alpha,x} \left(\theta_{\mathbb{P}_\alpha,m,c}(x)\right) - \Phi_{\mathbb{P}_{s},x} \left(\theta_{\mathbb{P}_\alpha,m,c}(x)\right) \right| \leq \left( \frac{c}{m} + \frac{c~h_{\sup}}{m~h_{\inf}} \right)~\frac{\alpha}{(1-\alpha)}\leq C_1~\frac{\alpha}{(1-\alpha)},
\end{equation}
for some constant $C_1 > 0$ depending only on $m, c, h_{\inf}$, and $h_{\sup}$. As 
\begin{equation}\label{eq:4.30.1}
    \Phi_{\mathbb{P}_{s},x} \left( \theta_{\mathbb{P}_{s},m,c}(x) \right)=0=\Phi_{\mathbb{P}_\alpha,x} \left(\theta_{\mathbb{P}_\alpha,m,c}(x)\right),
\end{equation}
combining \eqref{eq:4.30.1} with \eqref{eq:4.30} yields
\begin{equation}\label{eq:4.31}
    \left| \Phi_{\mathbb{P}_{s},x} \left( \theta_{\mathbb{P}_\alpha,m,c}(x) \right) - \Phi_{\mathbb{P}_{s},x} \left( \theta_{\mathbb{P}_{s},m,c}(x) \right) \right| \leq C_1~\frac{\alpha}{(1-\alpha)}.
\end{equation}
Now, as $\Phi_{\mathbb{P}_{s},x}$ is differentiable with respect to $\theta$, using the mean value theorem (\textcite{rudin1976}, Theorem 5.10, p.108), there exists $\xi_x \in \bigl(\min\{ \theta_{\mathbb{P}_{s},m,c}(x), \theta_{\mathbb{P}_\alpha,m,c}(x)\}, ~\max\{\theta_{\mathbb{P}_{s},m,c}(x),\theta_{\mathbb{P}_\alpha,m,c}(x)\} \bigr)$ such that
\begin{equation}\label{eq:4.32}
    \left| \frac{\partial}{\partial\theta} \Phi_{\mathbb{P}_{s},x}(\xi_x) \right| \left| \theta_{\mathbb{P}_\alpha,m,c}(x) - \theta_{\mathbb{P}_{s},m,c}(x) \right| \leq C_1~\frac{\alpha}{(1-\alpha)}.
\end{equation}
Next, \hyperref[Corollary 3.1]{\textcolor{blue} Corollary  \ref{Corollary 3.1}} therefore yields
\begin{equation}\label{eq:4.33}
    \left| \theta_{\mathbb{P}_\alpha,m,c}(x) - \theta_{\mathbb{P}_{s},m,c}(x) \right| \leq \frac{C_1}{\beta} \frac{\alpha}{(1-\alpha)}.
\end{equation}
Finally, \hyperref[proposition 3.2]{\textcolor{blue} Proposition \ref{proposition 3.2}} implies along with \eqref{eq:4.33}
\begin{equation} \label{eq:4.34}
    \left| \delta_{\mathbb{P}_\alpha,m,c}(x) - \delta_{\mathbb{P}_{s},m,c}(x) \right| \leq \frac{1}{\sqrt{\theta_{\min}}} \left| \theta_{\mathbb{P}_\alpha,m,c}(x) - \theta_{\mathbb{P}_{s},m,c}(x) \right| \leq \frac{C_1}{\beta\sqrt{\theta_{\min}}} \frac{\alpha}{(1-\alpha)}.
\end{equation}
Hence,
\begin{equation} \label{eq:4.35}
    B(\alpha) = \sup_{x\in \mathcal{M}} \left| \delta_{\mathbb{P}_\alpha,m,c}(x) - \delta_{\mathbb{P}_{s},m,c}(x) \right| \leq C^{*}~\frac{\alpha}{(1-\alpha)},\quad
\text{where}\quad C^{*} = \frac{C_1}{\beta~\sqrt{\theta_{\min}}}>0.
\end{equation}
This completes the proof. 
 
\hfill $\blacksquare$

\subsubsection*{B.5.\quad \hyperref[Lemma 4.1]{\textcolor{blue}{Proof of Lemma \ref{Lemma 4.1}}}}\label{Proof of Lemma 4.1}
\noindent Fix $x \in \mathcal{M}$. As $\mathbb{P}_\alpha=(1-\alpha)~\mathbb{P}_{s}+\alpha~ \mathbb{P}_{c}$, it follows that 
\begin{equation} \label{eq:4.36}
    F_{x,\alpha}(u)= (1-\alpha)~F_{x,s}(u) + \alpha ~F_{x,c}(u),
\end{equation}
where $F_{x, c}$ denotes the distribution function of $\mathbb{S}_x(X)$ under
$\mathbb{P}_{c}$. 
Consequently, $(1-\alpha)~F_{x,s}(u) \leq F_{x,\alpha}(u) \leq(1-\alpha)~F_{x,s}(u)+\alpha, ~\text{for all}~ u\geq 0.$
Using the property of quantile function $F_{x,\alpha}\left(F^{-1}_{x,\alpha}(m)\right) \geq m$, we obtain $m \leq F_{x,\alpha}\left(F^{-1}_{x,\alpha}(m)\right)  \leq (1-\alpha) F_{x,s} \left(F^{-1}_{x,\alpha}(m)\right) + \alpha,$
which implies
\begin{equation} \label{eq:4.39}
    F_{x,s} \left(F^{-1}_{x,\alpha}(m)\right) \geq \frac{m-\alpha}{(1-\alpha)}.
\end{equation}
Further, we also have
$F_{x,\alpha}\left(F_{x,\alpha}^{-1}(m)^-\right)\leq m$. Therefore,
\begin{align} \label{eq:4.40}
F_{x,\alpha}\left(F_{x,\alpha}^{-1}(m)^-\right) = \lim_{u\uparrow F_{x,\alpha}^{-1}(m)} F_{x,\alpha}(u)
= (1-\alpha) \lim_{u\uparrow F_{x,\alpha}^{-1}(m)} F_{x,s}(u) + \alpha \lim_{u\uparrow F_{x,\alpha}^{-1}(m)} F_{x,c}(u)\leq m.
\end{align}
\noindent Moreover, as $F_{x,s}$ is continuous by Assumption (A4), and $F_{x,c} \left(F_{x,\alpha}^{-1}(m)^-\right) \geq 0,$ it follows from \eqref{eq:4.40},
\begin{equation} \label{eq:4.41}
F_{x,s} \left( F_{x,\alpha}^{-1}(m) \right) \leq \frac{m}{1-\alpha}.
\end{equation}
Then, combining \eqref{eq:4.39} and \eqref{eq:4.41}, we obtain
\begin{equation*}
    \frac{m-\alpha}{1-\alpha}\leq F_{x,s} \left( F_{x,\alpha}^{-1}(m) \right) \leq \frac{m}{1-\alpha},
\end{equation*}
which further implies that
\begin{equation} \label{eq:4.42}
    \Big| F_{x,s}\left(F^{-1}_{x,\alpha}(m)\right) - m \Big|  \leq \frac{\alpha}{(1-\alpha)}.
\end{equation}
\noindent 
Furthermore, since $F_{x,s} \left(F^{-1}_{x,s}(m)\right) = m$, along with elementary calculus, Assumption (A4) implies that
\begin{align} \label{eq:4.43}
    \left| F_{x,s}\left(F^{-1}_{x,\alpha}(m)\right) - m \right|= \left| F_{x,s}\left(F^{-1}_{x,\alpha}(m)\right) - F_{x,s}\left(F^{-1}_{x,s}(m)\right) \right| & = \left| ~\int\limits_{F^{-1}_{x,s}(m)}^{F^{-1}_{x,\alpha}(m)} h_x(u)~du\right|,\nonumber\\
    & \geq h_{\inf} \left|F^{-1}_{x,\alpha}(m)- F^{-1}_{x,s}(m)\right|,
\end{align}
where the last inequality follows from Assumption (A4). Combining \eqref{eq:4.42} with \eqref{eq:4.43}, we obtain
\begin{equation} \label{eq:4.44}
    \left|F^{-1}_{x,\alpha}(m)- F^{-1}_{x,s}(m)\right| \leq \frac{1}{h_{\inf}} \left| F_{x,s}\left(F^{-1}_{x,\alpha}(m)\right) - m \right| \leq \frac{1}{h_{\inf}} \frac{\alpha}{(1-\alpha)}.
\end{equation}
Finally, as the bound in \eqref{eq:4.44} is uniform over $x \in \mathcal{M}$, we have
\begin{equation} \label{eq:4.45}
    \sup_{x\in \mathcal{M}}\left|F^{-1}_{x,\alpha}(m)- F^{-1}_{x,s}(m)\right| \leq \frac{1}{h_{\inf}} \frac{\alpha}{(1-\alpha)}.
\end{equation}
This completes the proof. 
 
\hfill $\blacksquare$

\subsubsection*{B.6.\quad \hyperref[Theorem 4.4]{\textcolor{blue}{Proof of Theorem \ref{Theorem 4.4}}}} \label{Proof of Theorem 4.4}
\noindent Recall 
$$\mathbb{VR}\Big(\mathcal{S}_{\mathbb{P}_\alpha},\delta_{\mathbb{P}_\alpha,m,c} \Big):=\bigg\{\mathbb{VR}_{t}\Big( \mathcal{S}_{\mathbb{P}_\alpha},\delta_{\mathbb{P}_\alpha,m,c} \Big)\bigg\}_{t\geq 0},$$ and $$\mathbb{VR}\Big(\mathcal{S}_{\mathbb{P}_s},\delta_{\mathbb{P}_s,m,c} \Big):=\bigg\{\mathbb{VR}_{t}\Big( \mathcal{S}_{\mathbb{P}_s},\delta_{\mathbb{P}_s,m,c} \Big)\bigg\}_{t\geq 0}$$ denote the weighted Vietoris–Rips filtrations induced by $\delta_{\mathbb{P}_{\alpha},m,c}$ on $\mathcal{S}_{\mathbb{P}_\alpha}$ and $\delta_{\mathbb{P}_{s},m,c}$ on $\mathcal{S}_{\mathbb{P}_s}$, respectively. As $\delta_{\mathbb{P}_s,m,c}:\mathcal{M} \rightarrow [0,\infty)$ is $L_*$-Lipschitz by \hyperref[Theorem 3.2]{\textcolor{blue} Theorem \ref{Theorem 3.2}}, and $\mathbb{P}_{c} \in \mathscr{P}(\mathcal{M})$ is arbitrary, application of the triangle inequality for the interleaving distance leads to
\small
\begin{align}
 d_I\big(\mathbb{VR}\left( \mathcal{S}_{\mathbb{P}_\alpha} ,\delta_{\mathbb{P}_\alpha,m,c} \right),  \mathbb{VR}\left(\mathcal{S}_{\mathbb{P}_s},\delta_{\mathbb{P}_s,m,c} \right) \big)
& \leq d_I\big(\mathbb{VR}\left( \mathcal{S}_{\mathbb{P}_\alpha} ,\delta_{\mathbb{P}_\alpha,m,c} \right),  \mathbb{VR}\left(\mathcal{S}_{\mathbb{P}_\alpha},\delta_{\mathbb{P}_s,m,c} \right)\big) \nonumber\\
&\quad + d_I\left(\mathbb{VR}\left(\mathcal{S}_{\mathbb{P}_\alpha},\delta_{\mathbb{P}_s,m,c} \right), \mathbb{VR}\Big(\mathcal{S}_{\mathbb{P}_s},\delta_{\mathbb{P}_s,m,c} \Big) \right).
\end{align}
\normalsize
Now, Propositions \hyperref[proposition 4.1]{\textcolor{blue}{\ref{proposition 4.1}}} and \hyperref[proposition 4.2]{\textcolor{blue}{ \ref{proposition 4.2}}} yield
\small
\begin{align}
    d_I\big(\mathbb{VR}\left( \mathcal{S}_{\mathbb{P}_\alpha} ,\delta_{\mathbb{P}_\alpha,m,c} \right),  \mathbb{VR}\left(\mathcal{S}_{\mathbb{P}_s},\delta_{\mathbb{P}_s,m,c} \right) \big) 
    \leq \|\delta_{\mathbb{P}_\alpha,m,c} - \delta_{\mathbb{P}_s,m,c} \|_{\infty} + (1+L_{*})~ d_H\left(\mathcal{S}_{\mathbb{P}_\alpha},\mathcal{S}_{\mathbb{P}_s}\right).
\end{align}
\normalsize Moreover, it follows from Theorem 4.4 in
\textcite{chazal2009proximity} that $ d_B \leq d_I$, and hence,
\begin{equation}\label{eq:7.47}
    d_B\left( \mathrm{Dgm}_p\big(\delta_{\mathbb{P}_{\alpha},m,c} \big), \mathrm{Dgm}_p\left(\delta_{\mathbb{P}_{s},m,c}\right)\right) \leq \|\delta_{\mathbb{P}_{\alpha},m,c}- \delta_{\mathbb{P}_{s},m,c} \|_{\infty} +(1+L_{*})~
    d_H\left(\mathcal{S}_{\mathbb{P}_{\alpha}},\mathcal{S}_{\mathbb{P}_s}\right)
\end{equation}

\noindent As \eqref{eq:7.47} holds for every $\mathbb{P}_{c} \in \mathscr{P}(\mathcal M)$, therefore from \hyperref[Theorem 4.3]{\textcolor{blue} Theorem \ref{Theorem 4.3}}, we obtain,
\begin{equation}\label{eq:7.48}
    d_B\left( \mathrm{Dgm}_p\big(\delta_{\mathbb{P}_{\alpha},m,c} \big), \mathrm{Dgm}_p\left(\delta_{\mathbb{P}_{s},m,c}\right)\right) \leq B(\alpha)
    +(1+L_{*})~ \sup_{\mathbb{P}_{c}\in \mathscr{P}(\mathcal M)} d_H\left(\mathcal{S}_{\mathbb{P}_{\alpha}},\mathcal{S}_{\mathbb{P}_s}\right).
\end{equation}

Observe that the assertion $B_{p}^{PD}(\alpha) \leq C^{**}~\frac{\alpha}{(1-\alpha)}$ follows from \eqref{eq:7.47} in view as \hyperref[Theorem 4.3]{\textcolor{blue} Theorem \ref{Theorem 4.3}} and due to the fact that ${\displaystyle \sup_{\mathbb{P}_{c}\in \mathscr{P}(\mathcal M)} d_H\left(\mathcal{S}_{\mathbb{P}_{\alpha}},\mathcal{S}_{\mathbb{P}_s}\right)=\mathcal{O}\left(\frac{\alpha}{1-\alpha}\right)}$. In particular, if ${\displaystyle \sup_{\mathbb{P}_{c}\in \mathscr{P}(\mathcal M)} d_H\left(\mathcal{S}_{\mathbb{P}_{\alpha}},\mathcal{S}_{\mathbb{P}_s}\right)}$ $\leq C_H\cdot\frac{\alpha}{1-\alpha}$ for some constant $C_H > 0$, we obtain $C^{**}:=C^* + (1+L_*)~C_H >0$. It completes the proof.


\hfill $\blacksquare$

\printbibliography
\addcontentsline{toc}{section}{References}
\end{document}